\def\BibTeX{{\rm B\kern-.05em{\sc i\kern-.025em b}\kern-.08em
    T\kern-.1667em\lower.7ex\hbox{E}\kern-.125emX}}
\newtheorem{theorem}{\textbf{Theorem}}
\newtheorem{assumption}{\textbf{Assumption}}
\newtheorem{corollary}{\textbf{Corollary}}
\newtheorem{remark}{\textbf{Remark}}
\newtheorem{lemma}{\textbf{Lemma}}
\renewcommand\normalsize{%
\@setfontsize\normalsize\@xpt\@xiipt
\abovedisplayskip 3\p@ \@plus3\p@ \@minus5\p@
\abovedisplayshortskip \z@ \@plus3\p@
\belowdisplayshortskip 6\p@ \@plus3\p@ \@minus3\p@
\belowdisplayskip \abovedisplayskip
\let\@listi\@listI}
\begin{document}

\title{Energy-Efficient Channel Decoding for Wireless Federated Learning: Convergence Analysis and Adaptive Design}
\author{Linping Qu,~\IEEEmembership{Graduate Student Member,~IEEE,} Yuyi Mao,~\IEEEmembership{Senior Member,~IEEE,} \\Shenghui Song,~\IEEEmembership{Senior Member,~IEEE,} and Chi-Ying Tsui,~\IEEEmembership{Senior Member,~IEEE}

\thanks{Manuscript received 1 March 2024; revised 24 July 2024; accepted 3 August 2024. Part of this work was presented at the 2024 IEEE Wireless Communications and Networking Conference, Dubai, UAE \cite{qu2024robust}. This work was partially supported by AI Chip Center for Emerging Smart System (ACCESS) sponsored by InnoHK funding, Hong Kong, SAR, and a grant from the Research Grants Council of the Hong Kong Special Administrative Region, China (Project Reference Number: AoE/E-601/22-R), and a grant from the NSFC/RGC Joint Research Scheme sponsored by the Research Grants Council of the Hong Kong Special Administrative Region, China and National Natural Science Foundation of China (Project No. N\_HKUST656/22).

Linping Qu, Shenghui Song, and Chi-Ying Tsui are with the Department of Electronic and Computer Engineering, Hong Kong University of Science and Technology, Hong Kong (e-mail: lqu@connect.ust.hk; eeshsong@ust.hk; eetsui@ust.hk).
Yuyi Mao is with the Department of Electrical and Electronic Engineering, Hong Kong Polytechnic University, Hong Kong (e-mail: yuyi-eie.mao@polyu.edu.hk).}
}

\maketitle

\begin{abstract}
One of the most critical challenges for deploying distributed learning solutions, such as federated learning (FL), in wireless networks is the limited battery capacity of mobile clients. While it is a common belief that the major energy consumption of mobile clients comes from the uplink data transmission, this paper presents a novel finding, namely channel decoding also contributes significantly to the overall energy consumption of mobile clients in FL. Motivated by this new observation, we propose an energy-efficient adaptive channel decoding scheme that leverages the intrinsic robustness of FL to model errors. In particular, the robustness is exploited to reduce the energy consumption of channel decoders at mobile clients by adaptively adjusting the number of decoding iterations. We theoretically prove that wireless FL with communication errors can converge at the same rate as the case with error-free communication provided the bit error rate (BER) is properly constrained. An adaptive channel decoding scheme is then proposed to improve the energy efficiency of wireless FL systems. Experimental results demonstrate that the proposed method maintains the same learning accuracy while reducing the channel decoding energy consumption by $\sim$ 20\% when compared to an existing approach.
\end{abstract}

\begin{IEEEkeywords}
Federated learning (FL), energy efficiency, channel decoding, low-density parity checking (LDPC), bit error rate (BER).
\end{IEEEkeywords}

\section{Introduction}

\subsection{Motivation}
Deep learning has achieved remarkable successes in numerous realms, such as computer vision and natural language processing \cite{lecun2015deep}. However, the conventional centralized learning paradigm encounters the data isolation issue due to the rising concern about privacy \cite{yang2019federated}. As a result, uploading the distributed data, which may contain privacy-sensitive information, for centralized learning becomes prohibited. To address this issue, federated learning (FL) was proposed \cite{mcmahan2017communication} to enable the distributed training without sharing the data of users (e.g., IoT devices).
However, to fully unlock the potential of FL, we need to tackle several critical challenges, such as data and device heterogeneity \cite{chai2019towards,xie2023fedkl}, as well as the communication bottleneck\cite{chen2021communication}. Another issue of wireless FL is the energy consumption of mobile clients due to heavy computation and communication workloads for training deep neural networks (DNNs) and transferring model parameters\cite{mao2023green}.

Many innovative approaches have been proposed to reduce the energy consumption of mobile clients in wireless FL from different perspectives. For instance, reinforcement learning was utilized in \cite{kim2021autofl} to determine the near-optimal client selection and resource allocation scheme to reduce both the convergence time and energy consumption of mobile clients. In \cite{tran2019federated}, the trade-off between learning accuracy and total energy consumption of clients was optimized through computational and communication resource allocation. In \cite{yang2020energy}, the transmission and computation energy was minimized under a latency constraint by optimizing the time, bandwidth allocation, power control, computation frequencies, and learning accuracy. Also, authors of \cite{chen2020joint} minimized the transmission energy consumption of mobile clients while maintaining the learning performance by further designing the client selection strategies.

Besides system-level resource management, techniques such as model sparsification \cite{han2020adaptive} and quantization \cite{zheng2020design, qu2022feddq} have been proposed to minimize the communication overhead, thus reducing the energy consumption of clients' transmitters. Both quantization and periodic aggregation were adopted in \cite{reisizadeh2020fedpaq} to reduce the communication overhead and energy consumption, where a theoretical framework about the impact of parameter quantization on the convergence rate was also provided. Model sparsification and quantization techniques are also beneficial for minimizing the DNN processing energy consumption at mobile clients of wireless FL \cite{qiu2022zerofl, kim2023green}. 
However, most works on energy-efficient FL focused on the energy consumption of mobile processors and transmitters, while few of them paid attention to the energy consumption of channel decoders in mobile receivers. As will be shown in Section II-B, channel decoders may consume a substantial proportion of energy because of the iterative processes to decode massive DNN parameters in each communication round of wireless FL.

To address the research gap, in this work, we first quantitatively analyze the energy consumption of different components in a mobile unit and show that the channel decoder contributes significantly to the overall energy consumption. Building on this observation, an energy-efficient adaptive channel decoding scheme is proposed for wireless FL systems, by leveraging the fact that FL is robust to model errors due to its statistical learning nature\cite{tong2022nine}. 
Furthermore, by extending the convergence analysis in\cite{reisizadeh2020fedpaq} to scenarios with downlink communication erros, we formalize the relationship between the bit error rate (BER) and the convergence performance of wireless FL. A decreasing rule of BER with respect to communications rounds is determined, which can maintain the same convergence rate as the case with error-free communication. Finally, we propose a practical approach to adjust the number of channel decoding iterations in each communication round to achieve the desired BER, while reducing the energy consumption of mobile clients in wireless FL.

\subsection{Contributions}
This paper investigates the impact of channel decoding on the energy efficiency of wireless FL. Our major contributions are summarized as follows:
\begin{itemize}
    \item We theoretically analyze the relationship between BER and the convergence performance of wireless FL. Specially, an upper bound of the average expected gradient norm of the global training loss function with respect to BERs over communication rounds is derived by extending the analysis in \cite{reisizadeh2020fedpaq}. Based on the analysis, a decreasing rule of BERs is proposed to achieve the same convergence rate as FL with error-free communication.
    \item With the decreasing rule of BERs, an adaptive low-density parity checking (LDPC) decoding mechanism is proposed, which controls the maximum number of decoding iterations to achieve the desired BER requirement in every communication round. This provides a practical approach to maintain the target convergence rate, while reducing the channel decoding energy consumption of mobile clients in wireless FL.
    \item Experimental results on both IID and non-IID datasets show that with a given number of communication rounds, the proposed adaptive LDPC decoding scheme effectively reduces the energy consumption of channel decoders by 20\% when compared with a baseline scheme adopting a fixed maximum number of channel decoding iterations, while enjoying the same learning performance.
\end{itemize}

\subsection{Organization}
The remainder of this paper is organized as follows. Section \ref{System Model and Preliminaries} introduces the system model and presents a comprehensive energy consumption analysis for mobile clients. Section \ref{Energy Efficient LDPC Decoding Design} develops an energy-efficient adaptive LDPC decoding scheme for wireless FL systems by exploiting the relationship between the convergence rate and the achieved BER of LDPC decoding.  Experimental results are presented in Section \ref{Expirements}. Finally, conclusions are drawn in Section \ref{Conclusion}. 

\section{System Model and Energy Consumption Evaluation}
\label{System Model and Preliminaries}
In this section, we first introduce the wireless FL system. An overview regarding the energy consumption of different components at a mobile unit (i.e., client) is then presented, which highlights the significant contribution of channel decoding to the total energy consumption of wireless FL. For ease of reference, we list the key notations in Table~\ref{tab:notations}.

\begin{table}[t]
\small
\caption{Key Notations}
\centering
\begin{tabular}{|l|p{6.9cm}|}
\hline
{\textbf{Notation}}&{\textbf{Definition}}\\
\hline
{$\mathbf{w}_{r}$}&{Global model in the $r$-th communication round}\\
\hline
{$\Delta \mathbf{w}_{r}^{k}$}&{Local model update of client $k$ at the $r$-th communication round}\\
\hline
{$f\left(\cdot\right)$}&{The global loss function}\\
\hline
{$f^*$}&{Minimum global training loss}\\
\hline
{$R$}&{Number of communication rounds}\\
\hline
{$E$}&{Number of local training steps in each communication round}\\
\hline
{$\sigma_L^2$}&{Variance bound of local gradient estimator}\\
\hline
{$\sigma_G^2$}&{Dissimilarity bound between local and global gradients}\\
\hline
{$K$}&{Number of clients}\\
\hline
{$N$}&{Each global model parameter is digitalized as an $N$-bit sequence}\\
\hline
{$D$}&{Dimension of model $\mathbf{w}$}\\
\hline
{$T$}&{Total number of local training steps, $T=RE$}\\
\hline
{$L$}&{$L$-smoothness parameter}\\
\hline
{$b_r$}&{Bit error rate in the $r$-th communication round}\\
\hline
{$\eta$}&{Learning rate}\\
\hline
{$Q_r$}&{Maximum number of LDPC decoding iterations in the $r$-th communication round}\\
\hline
{$M$}&{Bound of $\max(\mathbf{w}_{r})-\min(\mathbf{w}_{r}), r = 0, \cdots, R-1$}\\
\hline
\end{tabular}
\label{tab:notations}
\end{table}

\begin{figure*}[t]
\small
\centering
\subfigure[Wireless FL system: Multiple mobile clients are coordinated by a remote server through noisy wireless channels, where model updates and model weights are communicated in the uplink and downlink, respectively.]{
\begin{minipage}[b]{0.4\linewidth}
\includegraphics[width=1\linewidth]{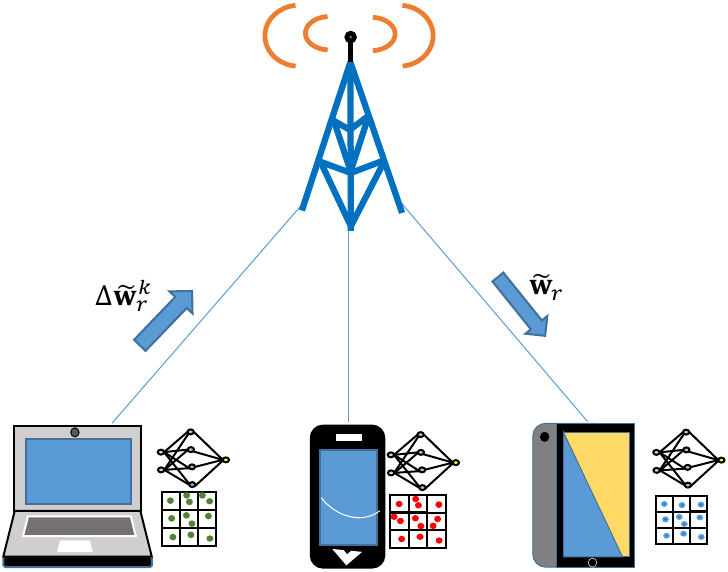}
\end{minipage}}
\qquad
\qquad
\subfigure[System operations: The server performs model aggregation, digitalization, LDPC encoding, and model broadcasting. Mobile clients perform LDPC decoding, local training, and upload the model updates. Procedures of uploading model updates are omitted.]{
\begin{minipage}[b]{0.4\linewidth}
\includegraphics[width=1\linewidth]{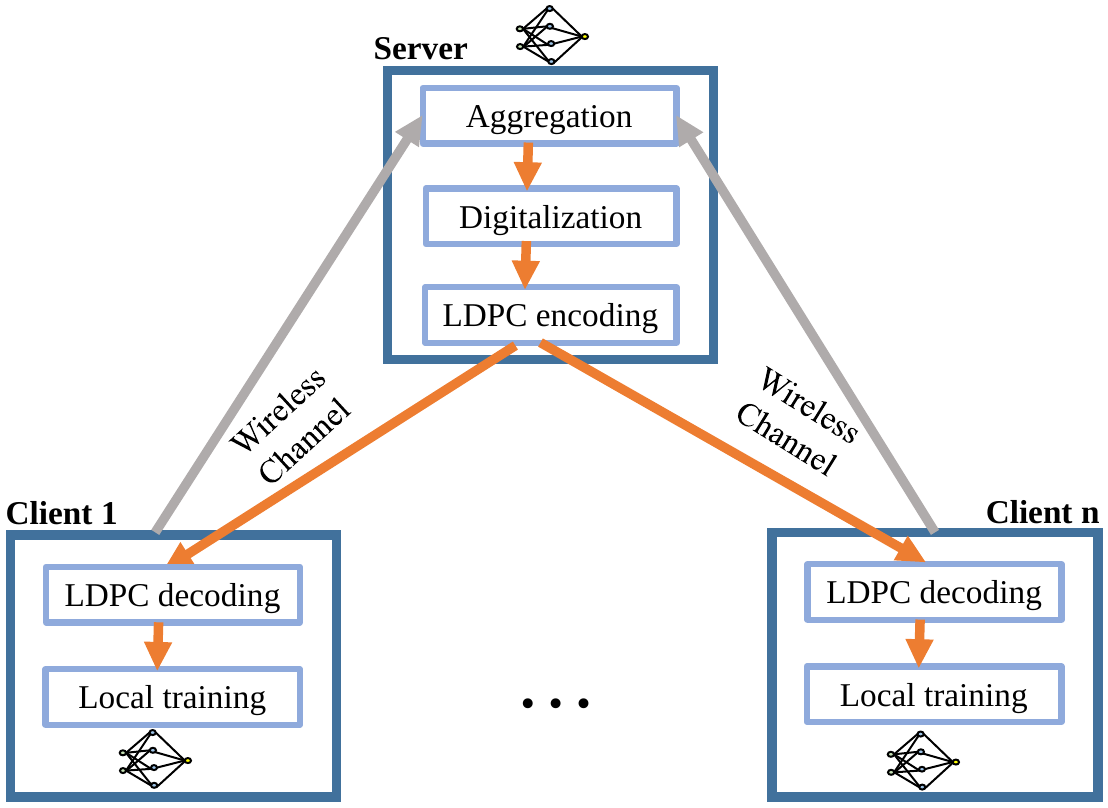}
\end{minipage}}
\caption{Wireless FL with digital communication and its operations.}
\label{Wireless FL system}
\end{figure*}

\subsection{Wireless FL}
Fig.~\ref{Wireless FL system}(a) depicts a wireless FL system, where model training of different clients is coordinated by a remote server. The FL problem can be formulated as follows \cite{mcmahan2017communication}:
\begin{equation} 
\min_{\mathbf{w}\in {\mathbb{R}^D}} f(\mathbf{w}) \triangleq \frac{1}{K}\sum_{k=1}^K f_k(\mathbf{w}),
\end{equation}where $\mathbf{w}\in {\mathbb{R}^D}$ denotes the DNN model with $D$ parameters, $K$ represents the number of mobile clients, $f_{k}\left(\cdot\right)$ is the local loss function of the $k$-th client, and $f\left(\cdot\right)$ is the global loss function.

\begin{figure}[t]
	\small
	\centering
	\includegraphics[width=2.5in]{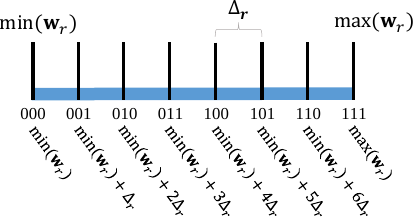}
	\caption{Digitalization for each parameter in $\tilde{\mathbf{w}}_{r}$ (an example with $N=3$), where a continuous model parameter is mapped to the binary sequence of the nearest boundary.}
	\label{fig:digit}
\end{figure}

To train a shared global model, the server communicates with the clients for multiple rounds through noisy wireless channels. As shown in Fig.~\ref{Wireless FL system}(b), in each communication round, the global model at the server (denoted as $\mathbf{w}_{r}$ for the $r$-th communication round) is broadcast to the clients. Each client then performs local training based on the received global model and transmits the model updates after local training to the server for aggregation.
Specifically, in the $r$-th communication round, the global model is digitalized by an $N$-bit sequence, encoded by a channel code (e.g., the LDPC codes), and broadcast to clients. The parameter digitalization scheme is illustrated in Fig.~\ref{fig:digit}, where the range of [$\min(\mathbf{w}_{r}), \max(\mathbf{w}_{r})$] is evenly divided into $2^N-1$ intervals and the width of each interval is $\Delta_r=\frac{\max(\mathbf{w}_{r})- \min(\mathbf{w}_{r})}{2^N-1}$. Each of the $2^{N}$ boundaries of these intervals are marked by a $N$-bit binary sequence. A parameter of $\mathbf{w}_{r}$ is mapped to the binary sequence of the nearest boundary. We assume $N$ is large enough so the continuous and digital representations of model parameters have negligible difference. The model parameters received/decoded by the $k$-th client at the $r$-th communication round is given as follows: 
\begin{align}
&\mathbf{w}^{k}_{r,0}=\mathbf{\tilde{w}}_r,\ k = 1,\cdots, K,\ r =0,\cdots,R-1,
\label{Bw}
\end{align}
where $\mathbf{\tilde{w}}_r$ denotes the distorted version of $\mathbf{w}_r$ due to channel decoding errors. To reconstruct $\tilde{\mathbf{w}}_{r}$ from the decoded bits at clients, the server also broadcasts the scalars $\min(\mathbf{w}_r)$ and $\max(\mathbf{w}_{r})$ through an error-free downlink channel. The impact of channel decoding errors can be quantified by considering the different error bit positions in the $N$-bit representations of model parameters, which will be analyzed in Section~\ref{Energy Efficient LDPC Decoding Design}-A.

Each client updates the received global model $\tilde{\mathbf{w}}_{r}$ by conducting $E$ steps of local mini-batch stochastic gradient descent (SGD) with a learning rate $\eta$ as follows:
\begin{equation}
\begin{split}
\mathbf{w}^{k}_{r,e+1}=\mathbf{w}^{k}_{r,e}-\eta&{\nabla} f_k(\mathbf{w}^k_{r,e};\xi_{r,e}^k),\\
&e = 0, \cdots, E-1,\ k=1,\cdots, K,
\end{split}
	\label{receive}
\end{equation}
where ${\nabla} f_k(\mathbf{w}^{k}_{r,e};\xi_{r,e}^{k})$ denotes the gradient computed from a mini-batch datasets $\xi_{r,e}^{k}$ at the $k$-th client.
By completing $E$ steps of local training, the local model update of the $k$-th client can be expressed as follows:
\begin{align}
\begin{split}
\Delta \mathbf{w}^{k}_r=\mathbf{w}^{k}_{r,E}-&\mathbf{w}^{k}_{r,0},\\
&r = 0, \cdots, R-1,\ k=1,\cdots, K,   
\end{split}
\end{align}
which is also digitalized, encoded, and uploaded to the server. For simplicity, we assume an orthogonal uplink channel is available for each client. After receiving these local updates, denoted as $\Delta \mathbf{\tilde{w}}^k_r$, the server performs model aggregation to update the global model as follows:
{\begin{equation}
		\mathbf{w}_{r+1}=\mathbf{w}_{r}+\frac{1}{K}\sum_{k=1}^K\Delta \mathbf{\tilde{w}}^{k}_r,\ r= 0, \cdots, R-1.
\label{server_update}
\end{equation}}

In this work, we focus on the downlink communication rather than the uplink since our objective is to design novel channel decoding schemes to save the energy consumption of mobile clients. For the uplink, the channel decoding happens at the server, where a powerful channel decoder as well as a sufficient number of decoding iterations can always be applied to achieve satisfactory BER performance. Hence, we consider $\Delta \mathbf{\tilde{w}}_{r}^{k} = \Delta \mathbf{w}_r^k$ in the following. Define $\bar{\mathbf{w}}_{r,e} \triangleq \frac{1}{K} \sum_{k=1}^{K} \mathbf{w}_{r,e}^{k}$ as the average of client models after $e$ local training steps in the $r$-th communication round. After $R$ communication rounds, the wireless FL system obtains $\bar{\mathbf{w}}_{R,E}$ as the learned model.

\subsection{Energy Consumption Evaluation}
\label{Energy Consumption Evaluation}
There are two main energy-consuming operations for clients participating in wireless FL, namely model training and model communication. For the latter, it is believed that the energy is mostly consumed by the data transmission. In this subsection, we evaluate the energy consumption of different components, taking the training of AlexNet for image classification tasks as an example \cite{krizhevsky2012imagenet}. The analysis is based on actual measurements obtained from fabricated chips published in recent years~\cite{aichip,trx,ldpc}. For fair comparisons, all data are collected for chips fabricated in the 65~nm complementary metal-oxide semiconductor (CMOS) process.

In\cite{aichip}, a fabricated AI hardware accelerator was reported to achieve the energy efficiency of 13.7~mJ/epoch for 8-bit quantized AlexNet. With five epochs of local training, the processing energy consumption in each communication round is 68.5~mJ. In \cite{trx}, a fifth-generation (5G) New Radio wireless transmitter and receiver were reported to achieve the energy efficiencies of 47.9~pJ/bit and 33.3~pJ/bit for transferring uncoded data, respectively, under a communication distance of around 200 meters. Therefore, the energy required to transmit and receive the 8-bit AlexNet with 60 million parameters in a single communication round is about 39~mJ. Assuming a code rate of 1/2, the communication energy consumption for coded transmission of the AlexNet model parameters in each round is 78~mJ. Besides, the LDPC decoder in \cite{ldpc} achieves the energy efficiency of 20.1~pJ/bit/iteration, which was measured with 10 decoding iterations on average. Accordingly, the energy consumption of the LDPC decoder for decoding the AlexNet model parameters is 96.5~mJ. The detailed comparison is presented in Table \ref{tab:energy}, which indicates that the energy consumption of LDPC decoding accounts for a large proportion of the overall energy consumption at clients in wireless FL.

\begin{table}[t]
	\small
	\caption{Energy Consumption of Main Components at a Mobile Unit in Wireless FL for Training an 8-bit AlexNet}
	\centering
	\begin{tabular}{| c | c | c |}
		\hline
		\textbf{Component} & \textbf{Energy efficiency} & \textbf{Energy consumption}\\
		\hline
		AI Chip& 13.7~mJ/epoch\cite{aichip} & 68.5~mJ\\
		\hline
		RF Transceiver& 81.2~pJ/bit\cite{trx}& 78~mJ\\
		\hline
		LDPC Decoder& 20.1~pJ/bit/iter.\cite{ldpc} & 96.5~mJ\\
		\hline 
	\end{tabular}
	\label{tab:energy}
\end{table}

\subsection{LDPC Decoding}
\label{LDPC}

In this paper, we take the LDPC codes as an example to illustrate how adaptive channel decoding can help reduce the energy consumption of wireless FL. In LDPC decoders, the iterative decoding process continues until $Q$ decoding iterations, which is termed as the maximum number of decoding iterations, are completed~\cite{koike2015iteration}. Such iterative LDPC decoding processing leads to high energy consumption. To save the decoding energy, early termination techniques are usually adopted. Specifically, if the decoded codewords can fully pass the parity check, the decoding process will stop even if not all the $Q$ decoding iterations have been completed\cite{nam2021early}. Therefore, early termination can save some decoding iterations without compromising the BER performance. The energy consumption of LDPC decoding depends on the target BER and thus the pre-defined maximum number of decoding iterations. As a result, to minimize the decoding energy consumption, it is essential to determine suitable BERs over communication rounds that can be tolerated by wireless FL systems, considering its unique statistical learning nature.

\section{Energy-Efficient LDPC Decoding Design}
\label{Energy Efficient LDPC Decoding Design}
In this section, we propose an energy-efficient LDPC decoding scheme to minimize the decoding energy consumption while guaranteeing the same convergence rate for wireless FL as the case with error-free communication. For that purpose, we explore the relationship between BER and convergence rate, and that between the maximum number of decoding iterations and BER. By connecting these two relationships, we demonstrate how to regulate the LDPC decoding scheme.

\subsection{Effect of BER on the Convergence Rate of Wireless FL}

With error-free communication, the convergence rate of FL with non-convex loss functions was proved as $\mathcal{O}(1/\sqrt{T})$ \cite{reisizadeh2020fedpaq}, where $T=RE$ is the total number of local training steps. In this subsection, we investigate how BER affects the convergence rate of FL and determine the admissible BER for maintaining the convergence rate of $\mathcal{O}(1/\sqrt{T})$.
Unlike FL with analog communication where model distortion is considered as additive noise\cite{ang2020robust, amiri2021convergence, amiri2020federated, wei2022federated, sery2021over,sun2023channel}, quantifying the impact of BER on FL with digital communication is not straightforward. In particular, we need to consider different cases of bit errors and analyze the impact of different bit errors, e.g., most significant bits and least significant bits. 

To conduct the theoretical analysis, we take the following assumptions that are widely adopted in literature.

\begin{assumption}
($L$-Smoothness~\cite{zheng2020design,reisizadeh2020fedpaq}) The loss function $f\left(\cdot\right)$ is $L$-smooth, i.e., $\forall \mathbf{w}, \mathbf{v}$, we have $||\nabla f(\mathbf{w})-\nabla f(\mathbf{v})||\leq L||\mathbf{w}-\mathbf{v}||$.
\label{assumption_smoothness}
\end{assumption}

\begin{assumption}
(Unbiased and Variance-bounded Local Gradient Estimator~\cite{yang2021achieving}) Let $\xi_{k}$ denote the batch of randomly sampled data at client $k$. The gradient estimator $\nabla f_{k}(\mathbf{w};\xi_{k})$ is unbiased, i.e., $\mathbb{E}_{\xi_{k}}[\nabla f_{k}(\mathbf{w};\xi_{k})]=\nabla f_{k}(\mathbf{w})$, and has bounded variance, i.e., $\mathbb{E}_{\xi_{k}}[||\nabla f_k(\mathbf{w};\xi_{k})-\nabla f_k(\mathbf{w})||^2] \leq \sigma_L^2$.
\label{assumption_unbiased}
\end{assumption}

\begin{assumption}
(Bounded Local and Global Gradient Dissimilarity~\cite{yang2021achieving}) The dissimilarity between the local and global gradients is bounded, i.e., $||\nabla f_k(\mathbf{w})-\nabla f(\mathbf{w})||^2  \leq \sigma_G^2$.
\label{assumption_dissimilarity}
\end{assumption}

\begin{assumption}
(Independent Bit Errors \cite{morrow1989bit}) \textit{Each bit in the bitstream of the model parameters encounters error (i.e., bit flipping) independently.}
\label{assumption_ber}
\end{assumption}

With the $N$-bit representation, each model parameter has $2^{N}-1$ possible distorted patterns. However, our empirical results in Fig.~\ref{fig:validation} show that it is sufficient to focus on cases with at most one bit error of each model parameter. In particular, Fig.~\ref{fig:validation} compares the actual model error with the analytical model error expression to be presented in (\ref{model_error}) assuming at most 1-bit error, which match closely under various BER conditions. Therefore, we also impose the following assumption to simplify our analysis.
	
\begin{assumption}
\textit{There is at most 1-bit error in the $N$-bit representation of each model parameter.}
\label{onebit_error}
\end{assumption}

Based on Assumptions \ref{assumption_ber} and \ref{onebit_error}, we are able to establish the relationship between the decoded model at a client and the broadcast global model in the following lemma.

\begin{lemma}
Denote $\tilde{\mathbf{w}}$ as the decoded model of $\mathbf{w}$. Based on Assumptions \ref{assumption_ber} and \ref{onebit_error}, the $d$-th dimension of $\tilde{\mathbf{w}}$ is given as
\begin{equation}
\tilde{w}_{d} = w_{d} + \delta \left(w_{d}\right),
\end{equation}
where $w_{d}$ is the $d$-th dimension of $\mathbf{w}$ and $\delta\left(w_{d}\right)$ is the distortion term.
The mean square model error caused by bit errors is given as
\begin{equation}
\mathbb{E}[||\mathbf{\tilde{w}}-\mathbf{w}||^2|\mathbf{w}]= \frac{D\left(4^N-1\right)}{3\left(2^N-1\right)^2}\cdot b(1-b)^{N-1} M_{\mathbf{w}}^2,
\label{model_error}
\end{equation}
where $b$ is the BER and $M_{\mathbf{w}} \triangleq \max(\mathbf{w})-\min(\mathbf{w})$.
\label{lemma_ber_distortion}
\end{lemma}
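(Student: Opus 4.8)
The plan is to decompose the total mean square error as a sum over the $D$ coordinates and, for a generic coordinate, enumerate the single-bit-flip events together with their squared distortions. Because all coordinates share the common quantization step $\Delta \triangleq M_{\mathbf{w}}/(2^N-1)$ fixed by $\mathbf{w}$, and because the bits of different coordinates flip independently, the per-coordinate expectation $\mathbb{E}[\delta(w_d)^2 \mid \mathbf{w}]$ is identical for every $d$; it then suffices to compute it once and multiply by $D$.

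First I would pin down the digitalization. By the construction in Section~\ref{System Model and Preliminaries}, the $2^N$ boundaries are indexed $j = 0, 1, \ldots, 2^N-1$, with the boundary of index $j$ equal to $\min(\mathbf{w}) + j\Delta$. Under the negligible-quantization assumption, $w_d$ coincides with its nearest boundary, so the distortion $\delta(w_d)$ is entirely due to bit flips in the label of that boundary. Using the natural $N$-bit binary labeling of $j$, flipping the bit at position $i \in \{0,\ldots,N-1\}$ shifts the integer index by $\pm 2^i$, the sign depending on whether the bit was $0$ or $1$. Hence a single error at position $i$ yields a value distortion $\pm 2^i \Delta$, and a squared distortion $4^i \Delta^2$ that is independent of the sign.

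Next I would assemble the expectation. By Assumption~\ref{assumption_ber} the bits flip independently, each with probability $b$, and by Assumption~\ref{onebit_error} at most one bit errs. The event that exactly bit $i$ flips (the remaining $N-1$ bits being correct) therefore has probability $b(1-b)^{N-1}$, the same for every position, while the error-free event contributes zero distortion. Weighting each squared distortion by its probability gives
\begin{equation}
\mathbb{E}[\delta(w_d)^2 \mid \mathbf{w}] = b(1-b)^{N-1}\,\Delta^2 \sum_{i=0}^{N-1} 4^i = b(1-b)^{N-1}\,\Delta^2 \cdot \frac{4^N-1}{3},
\end{equation}
where the closed form is the geometric sum. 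Substituting $\Delta^2 = M_{\mathbf{w}}^2/(2^N-1)^2$ and multiplying by $D$ reproduces exactly~(\ref{model_error}).

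The only delicate step is the second one: translating a bit flip into a value distortion. This relies on the natural binary encoding of the boundary indices, so that a flip at position $i$ displaces the represented integer by $2^i$; one must also observe that the squared distortion $4^i\Delta^2$ is blind to the flip direction, which is precisely what lets the common per-position probability $b(1-b)^{N-1}$ be factored out of the sum. Everything else—the geometric series and the substitution for $\Delta$—is routine algebra.
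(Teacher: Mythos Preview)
Your proposal is correct and follows essentially the same route as the paper's proof: both enumerate the $N$ single-bit-flip events per coordinate, observe that a flip at position $i$ yields squared distortion $4^i\Delta^2$ with probability $b(1-b)^{N-1}$, sum the resulting geometric series, and multiply by $D$. The only cosmetic difference is that the paper also records the conditional mean $\mathbb{E}[\tilde{w}_d\mid w_d]$ to emphasize the non-zero bias, which you omit since the lemma statement concerns only the mean square error.
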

\begin{proof}
Please refer to Appendix A.
\end{proof}

\begin{figure}[t]
\small
\centering
\includegraphics[width=2.5in]{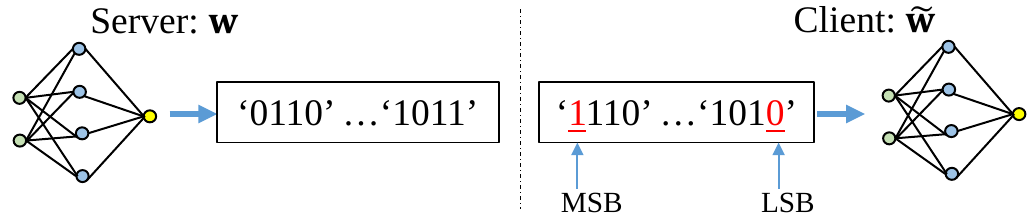}
\caption{Distortion between the decoded and broadcast global model. Different error bits, e.g., the most significant bit (MSB) and least significant bit (LSB), lead to different model distortion.}
\label{fig:communication errors}
\end{figure}

\begin{figure*}[t]
\small
\centering
\subfigure[$\text{BER} = 10^{-2}$.]{
\begin{minipage}[b]{0.3\linewidth}
\includegraphics[width=1\linewidth]{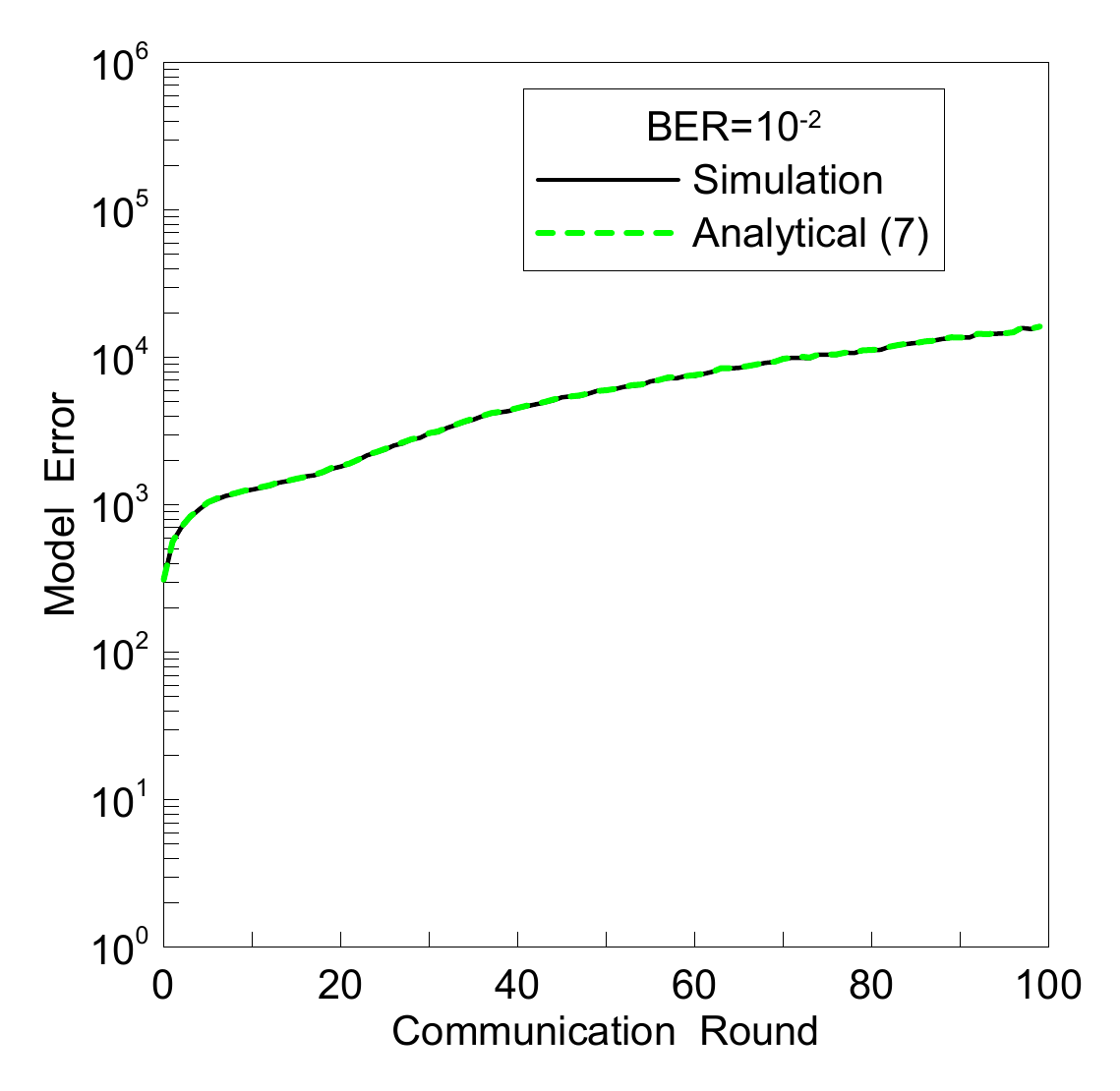}
\end{minipage}}
\quad
\subfigure[$\text{BER} = 10^{-3}$.]{
\begin{minipage}[b]{0.3\linewidth}
\includegraphics[width=1\linewidth]{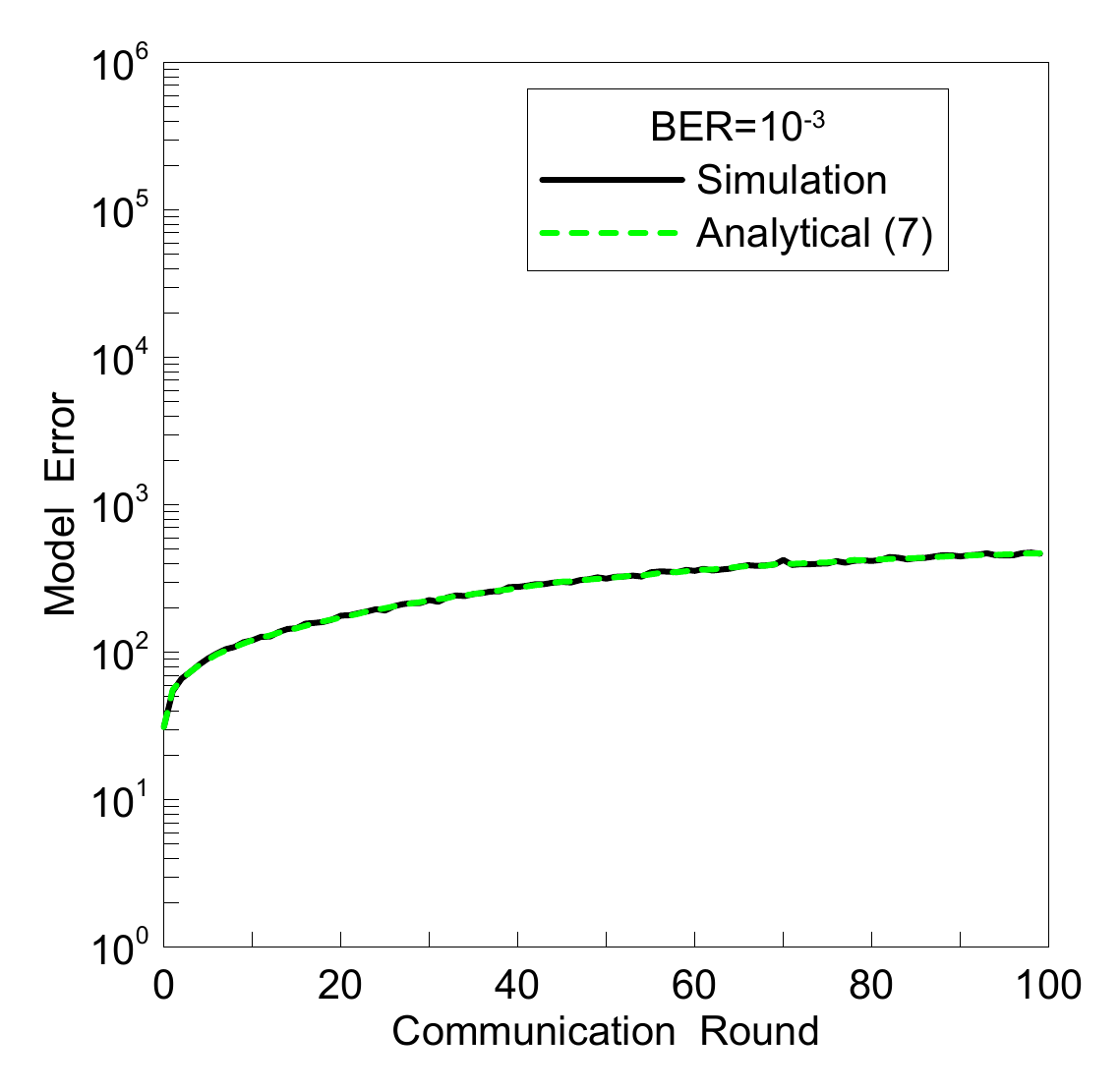}
\end{minipage}}
\quad
\subfigure[$\text{BER} = 10^{-4}$.]{
\begin{minipage}[b]{0.3\linewidth}
\includegraphics[width=1\linewidth]{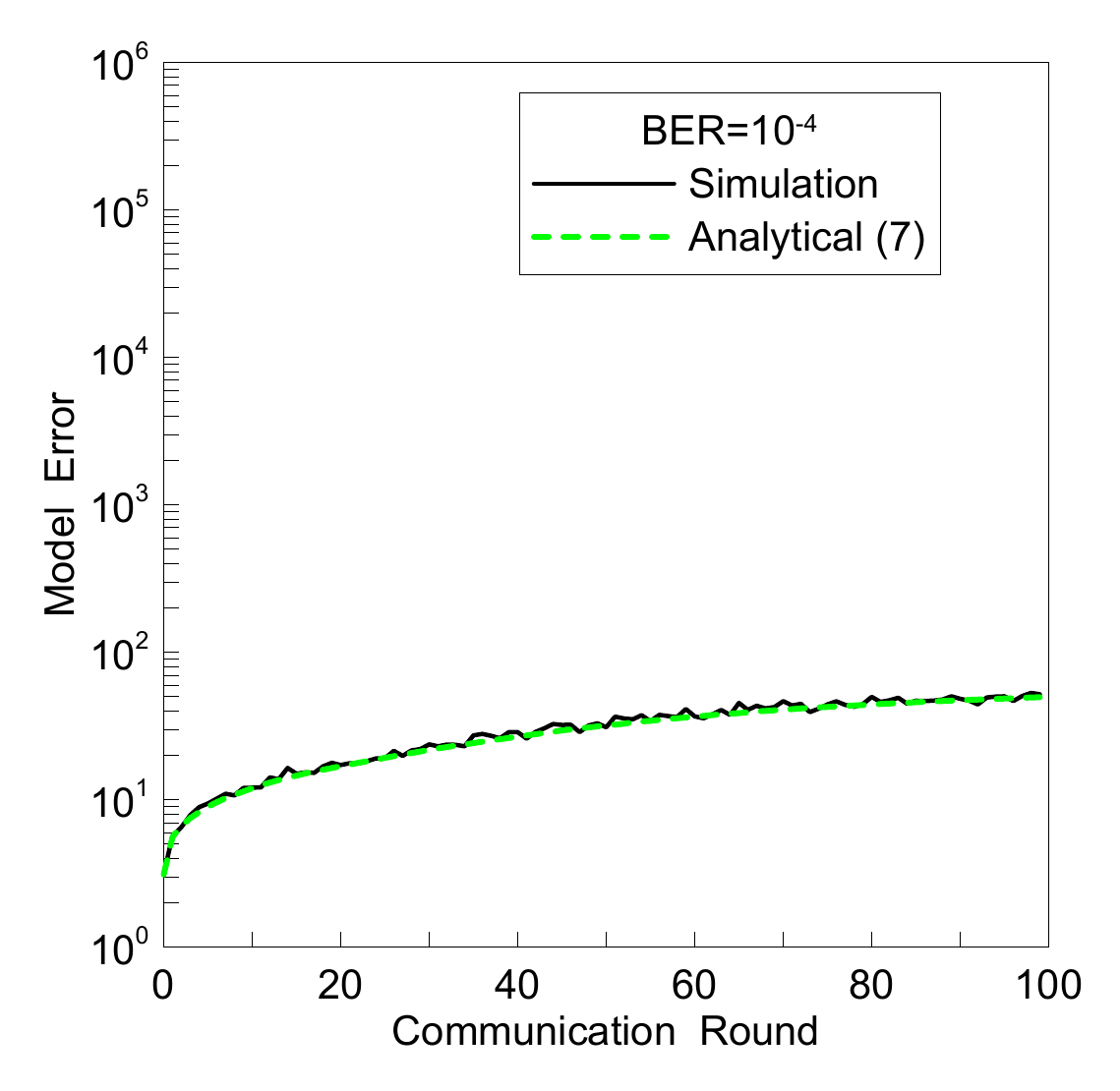}
\end{minipage}}
\caption{Validation of the model error analysis in (\ref{model_error}) by experiments on the Fashion-MNIST dataset. The experimental setting is detailed in Section~\ref{ex_setting}. In each communication round, we compute the actual model error between the decoded and broadcast global model on a selected client, and compare with the value calculated by the RHS in (\ref{model_error}) with $\mathbf{w}$ setting as $\mathbf{w}_{r}$. It is observed that the analytical and simulation results match closely for different BER conditions. The model errors also show a trend of boundness, which demonstrates the rationality of Assumption~\ref{range_bound}.}
\label{fig:validation}
\end{figure*}

\noindent Note that model distortion caused by bit errors $\mathbb{E}\left[\delta(w_d)|\mathbf{w}\right]$ does not necessarily equal zero. For example, any bit error of an all-zero sequence always leads to positive model distortion. Such non-zero bias, which was overlooked in a related analysis \cite{you2023broadband}, is carefully tackled in the convergence analysis in this paper.

After understanding the effect of BER, we then analyze the convergence property of wireless FL with communication error, for which, the following boundness assumption on the global model parameters \cite{zheng2020design} is needed. The rationality of this assumption can be evident in Fig.~\ref{fig:validation}. From this figure, it can be observed that the model error is bounded under various BER conditions, and thus $M_{\mathbf{w}}$ is also bounded according to (\ref{model_error}).
\begin{assumption}
(Boundness of Model Parameters \cite{zheng2020design}) For every communication round $r=0,1,\cdots, R-1$, $M_{r} \triangleq \max(\mathbf{w}_{r}) - \min(\mathbf{w}_{r}) \leq M$.
\label{range_bound}
\end{assumption}

\begin{lemma}
With Assumptions \ref{assumption_smoothness}-\ref{range_bound} and if the learning rate $\eta$ satisfies
\begin{equation}
1-\frac{K+L}{K}L\eta-2L^2\eta^2E(E-1) \geq 0,
\label{smalleta}
\end{equation}
the average expected gradient norm of $\{\nabla f(\mathbf{\bar{w}}_{r,e})\}$ is bounded as follows:
\begin{equation}
\begin{split}
&\frac{1}{RE}\sum_{r=0}^{R-1}\sum_{e=0}^{E-1}\mathbb{E}\Vert \nabla f(\mathbf{\bar{w}}_{r,e})\Vert^2 \leq \frac{2(\mathbb{E} f(\mathbf{w}_0)-f^*)}{\eta RE} \\
&+\frac{(L+1)^2DM^2(4^N-1)}{3(2^N-1)^2\eta RE}\sum_{r=0}^{R-1}b_r(1-b_r)^{N-1} \\
&+\frac{L(L+1)\eta(\sigma_L^2+\sigma_G^2)}{K}+\frac{L^2\eta^2\sigma_L^2(K+1)(E-1)}{K},
\end{split}
\label{convergence_ber_bound}
\end{equation}
where $f^{*}$ is the minimum global training loss, $M$ is defined in Assumption~\ref{range_bound}, $b_{r}$ is the BER in the $r$-th communication round, and $\mathbf{w}_{0}$ is the initial global model.
\label{thm_converge_bound}
\end{lemma}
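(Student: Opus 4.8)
The plan is to prove a one-round descent inequality for the server iterate $\mathbf{w}_{r}$ and then telescope it over $r=0,\dots,R-1$. The first step is to rewrite the aggregation rule \eqref{server_update}: inserting $\mathbf{w}^{k}_{r,0}=\tilde{\mathbf{w}}_{r}$ and the error-free uplink assumption gives $\mathbf{w}_{r+1}=\bar{\mathbf{w}}_{r,E}-(\tilde{\mathbf{w}}_{r}-\mathbf{w}_{r})$, so one round can be traversed along three legs: the distortion leg $\mathbf{w}_{r}\to\tilde{\mathbf{w}}_{r}=\mathbf{w}_{r}+(\tilde{\mathbf{w}}_{r}-\mathbf{w}_{r})=\bar{\mathbf{w}}_{r,0}$, the $E$ in-round SGD steps $\bar{\mathbf{w}}_{r,0}\to\bar{\mathbf{w}}_{r,E}$, and the second distortion leg $\bar{\mathbf{w}}_{r,E}\to\mathbf{w}_{r+1}=\bar{\mathbf{w}}_{r,E}-(\tilde{\mathbf{w}}_{r}-\mathbf{w}_{r})$. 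Applying $L$-smoothness (Assumption \ref{assumption_smoothness}) on each leg and summing reduces the proof to bounding the in-round SGD descent and the two distortion legs separately, with all stochastic-gradient expectations rendered unbiased by Assumption \ref{assumption_unbiased}.

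For the in-round part I would apply smoothness step by step. Writing $g_{r,e}\triangleq\tfrac{1}{K}\sum_{k}\nabla f_{k}(\mathbf{w}^{k}_{r,e})$ and using the identity $-\langle\nabla f(\bar{\mathbf{w}}_{r,e}),g_{r,e}\rangle=-\tfrac{1}{2}\Vert\nabla f(\bar{\mathbf{w}}_{r,e})\Vert^{2}-\tfrac{1}{2}\Vert g_{r,e}\Vert^{2}+\tfrac{1}{2}\Vert\nabla f(\bar{\mathbf{w}}_{r,e})-g_{r,e}\Vert^{2}$, each step produces the desired $-\tfrac{\eta}{2}\Vert\nabla f(\bar{\mathbf{w}}_{r,e})\Vert^{2}$, a negative pseudo-gradient term $-\tfrac{\eta}{2}\Vert g_{r,e}\Vert^{2}$, and a mismatch term. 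Since all clients share the common start $\tilde{\mathbf{w}}_{r}$, the mismatch $\Vert\nabla f(\bar{\mathbf{w}}_{r,e})-g_{r,e}\Vert^{2}\le\tfrac{L^{2}}{K}\sum_{k}\Vert\bar{\mathbf{w}}_{r,e}-\mathbf{w}^{k}_{r,e}\Vert^{2}$ involves no distortion and is controlled by a standard drift lemma: the drift accumulates over at most $E-1$ local steps and is bounded through Assumptions \ref{assumption_unbiased} and \ref{assumption_dissimilarity} by a term of order $\eta^{2}E(E-1)(\sigma_{L}^{2}+\sigma_{G}^{2})$, which is the origin of the $E(E-1)$ factor in \eqref{smalleta} and the $(E-1)$ factor in the last term of \eqref{convergence_ber_bound}. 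The stochastic variance in the smoothness quadratic term is handled by Assumption \ref{assumption_unbiased}, yielding the $\sigma_{L}^{2}/K$ contribution.

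The two distortion legs are where the model error enters and where the main difficulty lies. Smoothness on each leg yields a squared term $\tfrac{L}{2}\Vert\tilde{\mathbf{w}}_{r}-\mathbf{w}_{r}\Vert^{2}$ whose conditional expectation is exactly the mean-square model error, so I would invoke Lemma \ref{lemma_ber_distortion} with Assumption \ref{range_bound} to bound it by $\tfrac{D(4^{N}-1)}{3(2^{N}-1)^{2}}b_{r}(1-b_{r})^{N-1}M^{2}$; crucially, this second moment already accounts for the nonzero mean of the distortion. Each leg also produces a term linear in $\tilde{\mathbf{w}}_{r}-\mathbf{w}_{r}$, of the form $\langle\nabla f(\cdot),\tilde{\mathbf{w}}_{r}-\mathbf{w}_{r}\rangle$. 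The essential obstacle, and the point the paper stresses, is that the distortion is biased, $\mathbb{E}[\tilde{\mathbf{w}}_{r}-\mathbf{w}_{r}\mid\mathbf{w}_{r}]\neq\mathbf{0}$, so unlike zero-mean analog noise these linear terms cannot be set to zero in expectation; I would therefore retain them and bound them by Cauchy--Schwarz against the \emph{exact} second moment of Lemma \ref{lemma_ber_distortion}, ensuring no genuinely nonzero contribution is discarded. Combining the Young's-inequality weights on the linear terms with the $\tfrac{L}{2}$ weights on the squared terms is what yields the $(L+1)^{2}$ prefactor and the summand $b_{r}(1-b_{r})^{N-1}$ in \eqref{convergence_ber_bound}.

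Finally I would assemble the one-round bound so that every pseudo-gradient term $\Vert g_{r,e}\Vert^{2}$ carries a single net coefficient gathered from the $-\tfrac{\eta}{2}$ polarization weight, the $\tfrac{L\eta^{2}}{2}$ smoothness weight, and the drift bound; the learning-rate condition \eqref{smalleta} is precisely what forces this net coefficient to be non-positive, so these terms may be discarded and only the clean $-\tfrac{\eta}{2}\sum_{e}\Vert\nabla f(\bar{\mathbf{w}}_{r,e})\Vert^{2}$ remains. Summing the one-round inequality over $r=0,\dots,R-1$ telescopes the loss to $\mathbb{E}f(\mathbf{w}_{0})-\mathbb{E}f(\mathbf{w}_{R})\le\mathbb{E}f(\mathbf{w}_{0})-f^{*}$, and dividing by $\tfrac{\eta RE}{2}$ converts the telescoped gap into $\tfrac{2(\mathbb{E}f(\mathbf{w}_{0})-f^{*})}{\eta RE}$, the accumulated distortion into the $\sum_{r}b_{r}(1-b_{r})^{N-1}$ term, and the per-step variance and drift residuals into the two remaining noise-floor terms, which is exactly \eqref{convergence_ber_bound}.
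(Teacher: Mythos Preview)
Your three-leg decomposition and the in-round SGD analysis match the paper's scaffold (its Lemmas~\ref{lemma_equiv_error}--\ref{bound_loss_global_bit_error}, with Lemma~\ref{bound_loss_global_bit_error} adapted from \cite{reisizadeh2020fedpaq}), and you correctly identify the nonzero bias of the distortion as the central difficulty. However, the step where you propose to bound each linear term $\langle\nabla f(\cdot),\tilde{\mathbf{w}}_{r}-\mathbf{w}_{r}\rangle$ \emph{separately} by Cauchy--Schwarz/Young's against the second moment of Lemma~\ref{lemma_ber_distortion} does not close. Doing so leaves positive residuals of the form $\tfrac{1}{2}\Vert\nabla f(\mathbf{w}_{r})\Vert^{2}$ and $\tfrac{1}{2}\Vert\nabla f(\bar{\mathbf{w}}_{r,E})\Vert^{2}$ with coefficients that are $O(1)$ in $\eta$; neither evaluation point belongs to $\{\bar{\mathbf{w}}_{r,e}\}_{e=0}^{E-1}$, and these terms cannot be absorbed into the $-\tfrac{\eta}{2}\sum_{e}\Vert\nabla f(\bar{\mathbf{w}}_{r,e})\Vert^{2}$ budget without destroying the bound or the stated condition~\eqref{smalleta}.

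The paper instead exploits the sign structure you already wrote down---the first leg adds $\bm{\delta}(\mathbf{w}_{r})$ and the third subtracts it---to first \emph{combine} the two linear terms into the single cross term $\mathbb{E}\langle\nabla f(\mathbf{w}_{r})-\nabla f(\bar{\mathbf{w}}_{r,E}),\bm{\delta}(\mathbf{w}_{r})\rangle$, and only then applies Young's inequality. The gradient \emph{difference} is now controlled via $L$-smoothness by $\tfrac{L^{2}}{2}\Vert\bar{\mathbf{w}}_{r,E}-\mathbf{w}_{r}\Vert^{2}$, which is expanded into the $E$ stochastic steps plus $\bm{\delta}(\mathbf{w}_{r})$ itself. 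This produces an extra $\tfrac{L^{2}\eta^{2}}{2K^{2}}\sum_{k,e}\mathbb{E}\Vert\nabla f(\mathbf{w}^{k}_{r,e})\Vert^{2}$ that is $O(\eta^{2})$ and hence absorbable---it is precisely what converts $L\eta$ into $\tfrac{K+L}{K}L\eta$ in~\eqref{smalleta}---together with an additional $\tfrac{L^{2}+1}{2}\mathbb{E}\Vert\bm{\delta}(\mathbf{w}_{r})\Vert^{2}$ which, added to the two $\tfrac{L}{2}$ quadratic pieces from the legs, yields the $(L+1)^{2}/2$ prefactor. Without this combine-then-bound step, neither the exact learning-rate condition nor the $(L+1)^{2}$ constant in~\eqref{convergence_ber_bound} follows from your plan.
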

\begin{proof}
Please refer to Appendix B.
\end{proof}
\begin{remark}
It is worthy noting that the first term on the RHS of (\ref{convergence_ber_bound}) captures the distance between the initial and minimum global training loss. The second term indicates how bit errors in each communication round affect convergence. The last two terms stem from non-IID data among clients and the random mini-batch sampling.\\
\end{remark}
We next simplify (\ref{convergence_ber_bound}) by applying $\eta =\frac{1}{L\sqrt{RE}}=\frac{1}{L\sqrt{T}}$, which can satisfy (\ref{smalleta}) with a sufficiently large value of $R$.

\begin{theorem}
With Assumptions \ref{assumption_smoothness}-\ref{range_bound} and a sufficiently large value of $R$ such that $\eta =\frac{1}{L\sqrt{RE}}$ satisfies (\ref{smalleta}), (\ref{convergence_ber_bound}) can be simplified as follows:
\begin{equation}
\begin{split}
\frac{1}{T}&\sum_{r=0}^{R-1}\sum_{e=0}^{E-1}\mathbb{E}\Vert \nabla f(\mathbf{\bar{w}}_{r,e})\Vert^2 \\ \leq & \frac{C_0}{\sqrt{T}} + \frac{C_1\sum_{r=0}^{R-1} b_r(1-b_r)^{N-1}}{\sqrt{T}} +\frac{C_2}{\sqrt{T}} +\frac{C_3}{T},
\end{split}
\label{our_T}
\end{equation}
where $C_0 \triangleq 2L(\mathbb{E} f(\mathbf{w}_0)-f^*)$, $C_1\triangleq \frac{L(L+1)^2DM^2(4^N-1)}{3(2^N-1)^2}$, $C_2 \triangleq \frac{(L+1)(\sigma_L^2+\sigma_G^2)}{K}$, $C_3 \triangleq \frac{L^{2}\sigma_L^2(K+1)(E-1)}{K}$.
\label{thm_converge}
\end{theorem}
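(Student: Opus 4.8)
The plan is to treat Theorem~\ref{thm_converge} as a direct specialization of Lemma~\ref{thm_converge_bound} under the step-size choice $\eta = \frac{1}{L\sqrt{RE}} = \frac{1}{L\sqrt{T}}$. Since $T = RE$ by definition, the left-hand sides of (\ref{convergence_ber_bound}) and (\ref{our_T}) are literally the same object, so the whole task reduces to inserting this particular $\eta$ into the four summands on the right-hand side of (\ref{convergence_ber_bound}) and collecting the resulting problem-dependent constants into $C_0,\dots,C_3$.

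Before substituting, I would first check that the proposed $\eta$ is admissible, i.e., that it meets the learning-rate condition (\ref{smalleta}) that Lemma~\ref{thm_converge_bound} presupposes. With $\eta = \frac{1}{L\sqrt{T}}$ one has $L\eta = \frac{1}{\sqrt{T}}$ and $L^2\eta^2 = \frac{1}{T}$, so the left side of (\ref{smalleta}) becomes $1 - \frac{K+L}{K\sqrt{T}} - \frac{2E(E-1)}{T}$. For fixed $E$, both subtracted terms vanish as $T = RE \to \infty$, so the expression tends to $1 > 0$ and the condition holds for all sufficiently large $R$; this is precisely the hypothesis stated in the theorem, so no additional argument beyond this limiting observation is needed.

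The remaining work is a term-by-term substitution driven by the three elementary identities $\frac{1}{\eta RE} = \frac{L\sqrt{T}}{T} = \frac{L}{\sqrt{T}}$, $\eta = \frac{1}{L\sqrt{T}}$, and $\eta^2 = \frac{1}{L^2 T}$. Applying $\frac{1}{\eta RE} = \frac{L}{\sqrt{T}}$ to the first summand gives $\frac{2L(\mathbb{E}f(\mathbf{w}_0)-f^*)}{\sqrt{T}} = \frac{C_0}{\sqrt{T}}$, and to the second summand gives $\frac{C_1}{\sqrt{T}}\sum_{r=0}^{R-1} b_r(1-b_r)^{N-1}$, with the factor $\frac{L(L+1)^2 D M^2 (4^N-1)}{3(2^N-1)^2}$ absorbed into $C_1$. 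The third summand is linear in $\eta$, so $\frac{L(L+1)\eta(\sigma_L^2+\sigma_G^2)}{K} = \frac{(L+1)(\sigma_L^2+\sigma_G^2)}{K\sqrt{T}} = \frac{C_2}{\sqrt{T}}$, while the fourth summand is quadratic in $\eta$ and therefore decays one order faster: the $\eta^2 = \frac{1}{L^2T}$ dependence turns it into a constant divided by $T$, which I collect into $\frac{C_3}{T}$.

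The mechanics are routine, so there is no genuine obstacle here: all the analytical substance resides in the derivation of (\ref{convergence_ber_bound}) inside Lemma~\ref{thm_converge_bound}, which I am free to assume. The only points that require mild care are (i) confirming admissibility of $\eta$ via the limit above, and (ii) carefully bookkeeping the power of $\eta$ in each summand, since this is exactly what separates the three $\mathcal{O}(1/\sqrt{T})$ terms from the faster $\mathcal{O}(1/T)$ term and thereby fixes the scaling that makes the $\mathcal{O}(1/\sqrt{T})$ convergence rate, and the admissible decreasing rule for $b_r$, transparent.
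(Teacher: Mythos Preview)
Your proposal is correct and mirrors exactly what the paper does: Theorem~\ref{thm_converge} is obtained from Lemma~\ref{thm_converge_bound} by the direct substitution $\eta=\frac{1}{L\sqrt{T}}$, with the admissibility of this step size for large $R$ noted in passing. Your term-by-term bookkeeping (using $\frac{1}{\eta RE}=\frac{L}{\sqrt{T}}$, $\eta=\frac{1}{L\sqrt{T}}$, $\eta^{2}=\frac{1}{L^{2}T}$) is precisely the intended derivation; the only caveat is that the exact substitution in the fourth summand yields $\frac{\sigma_L^2(K+1)(E-1)}{KT}$, so the stated constant $C_3$ carries a superfluous $L^{2}$ factor, which does not affect the $\mathcal{O}(1/T)$ scaling you identify.
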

According to \textbf{Theorem~\ref{thm_converge}}, in order to guarantee the convergence rate of $\mathcal{O}(1/\sqrt{T})$, it is crucial to control the BER in each communication round such that the RHS of (\ref{our_T}) reduces to zero at a rate of $\mathcal{O}(1/\sqrt{T})$. Since the explicit relationship between $b_r$ and $r$ is difficult to obtain, we propose a simple yet practical BER regulation method in \textbf{Corollary~\ref{coro_ber_regulation}}. As such, wireless FL systems can achieve the same convergence rate as those with error-free communication.
\begin{corollary}
If the BER in each communication round is regulated as
\begin{align}
b_r \propto \frac{1}{(r+1)^2}, r = 0, \cdots, R-1,
\label{BERr}
\end{align}
the RHS of (\ref{our_T}) diminishes to zero at a rate of $\mathcal{O}(1/\sqrt{T})$, i.e., wireless FL converges to a local optimal solution of the global training loss function at a rate of $\mathcal{O}(1/\sqrt{T})$.
\label{coro_ber_regulation}
\end{corollary}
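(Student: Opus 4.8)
The plan is to verify that, under the regulation $b_r \propto \frac{1}{(r+1)^2}$, every one of the four terms on the RHS of (\ref{our_T}) is $\mathcal{O}(1/\sqrt{T})$. Three of them are settled by inspection: $\frac{C_0}{\sqrt{T}}$ and $\frac{C_2}{\sqrt{T}}$ are already of the claimed order, while $\frac{C_3}{T}$ is in fact $\mathcal{O}(1/T)$ and thus asymptotically negligible, since $C_0, C_2, C_3$ are constants independent of $R$ and $T$. The entire argument therefore collapses to controlling the second term, and in turn to showing that the partial sum $S_R \triangleq \sum_{r=0}^{R-1} b_r (1-b_r)^{N-1}$ remains bounded by a constant that does not grow with $R$.

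First I would exploit the fact that $b_r$ is a bit error rate, so $0 \leq b_r \leq 1$, which gives $(1-b_r)^{N-1} \leq 1$ for every $N \geq 1$. This lets me discard the decay factor and bound each summand by $b_r$, yielding $S_R \leq \sum_{r=0}^{R-1} b_r$. Substituting the regulation rule $b_r = \frac{c}{(r+1)^2}$ with a fixed proportionality constant $c > 0$ (taken small enough that $b_0 = c \leq 1$) turns this into a partial sum of a $p$-series with $p = 2$. Because $\sum_{r=0}^{\infty} \frac{1}{(r+1)^2} = \sum_{n=1}^{\infty} \frac{1}{n^2} = \frac{\pi^2}{6}$ converges, I conclude $S_R \leq \frac{c\pi^2}{6}$, a finite constant independent of $R$. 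Consequently the second term is at most $\frac{C_1 c \pi^2 / 6}{\sqrt{T}} = \mathcal{O}(1/\sqrt{T})$, and summing all four contributions shows the RHS of (\ref{our_T}) vanishes at rate $\mathcal{O}(1/\sqrt{T})$, matching the error-free case.

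I do not expect a genuine technical obstacle, as the estimate is routine once the right bound is applied. The point truly worth emphasizing is \emph{why} the exponent $2$ in the regulation rule is the correct design choice: it is precisely the threshold that renders the BER sequence summable. A slower decay such as $b_r \propto \frac{1}{r+1}$ would make $\sum_{r} b_r$ a harmonic partial sum growing like $\log R$, and since $T = RE$ this would leave a residual term of order $\frac{\log R}{\sqrt{T}}$, degrading the rate by a logarithmic factor. Thus the essence of the corollary is not the inequality manipulation but the recognition that \textbf{summability of the BER sequence} is the operative criterion for preserving the $\mathcal{O}(1/\sqrt{T})$ convergence rate, and that the prescribed inverse-square schedule is the simplest rule meeting it.
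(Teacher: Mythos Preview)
Your proposal is correct and follows essentially the same approach as the paper: both arguments drop the factor $(1-b_r)^{N-1}\le 1$ and then show that $\sum_{r=0}^{R-1} b_r$ is bounded by a constant independent of $R$. The only cosmetic difference is that you invoke the closed-form value $\sum_{n\ge 1} n^{-2}=\pi^2/6$, whereas the paper uses the telescoping bound $\frac{1}{(r+1)^2}<\frac{1}{r(r+1)}$ to obtain the constant $2-\tfrac{1}{R}$ and then applies the Sandwich Theorem; your added remark on why the exponent $2$ (i.e., summability) is the operative threshold is a nice observation not made explicit in the paper.
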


\begin{proof}
Denote $b_r \propto \frac{1}{(r+1)^2} \triangleq \frac{\vartheta}{(r+1)^2}$. To prove the RHS of (\ref{our_T}) diminishes to zero at a rate of $\mathcal{O}(1/\sqrt{T})$, it is sufficient to verify
\begin{equation}
\lim_{T\rightarrow \infty} \frac{\sum_{r=0}^{R-1}b_r(1-b_r)^{N-1}}{\sqrt{T}} = 0.
\label{colla_limit_1}
\end{equation}
Specifically, since
\begin{equation}
\begin{split}
0 & \leq \frac{\sum_{r=0}^{R-1}b_r(1-b_r)^{N-1}}{\sqrt{T}} \leq \frac{\sum_{r=0}^{R-1}b_r}{\sqrt{T}} = \vartheta \frac{\sum_{r=0}^{R-1}1 \slash (r+1)^2}{\sqrt{T}}\\
& = \frac{\vartheta}{\sqrt{T}}\left(1+\sum_{r=1}^{R-1} \frac{1}{(r+1)^2}\right) \\
& < \frac{\vartheta}{\sqrt{T}} \left(1+\sum_{r=1}^{R-1} \frac{1}{r(r+1)}\right) = \frac{\vartheta}{\sqrt{T}}  \left(2 - \frac{1}{R} \right)\\
& = \frac{\vartheta}{\sqrt{T}}  \left(2 - \frac{E}{T} \right),
\end{split}
\label{colla_limit_2}
\end{equation}
we conclude the proof by letting $T\rightarrow \infty$ for both sides of (\ref{colla_limit_2}) and applying the Sandwich Theorem.
\end{proof}

\textbf{Corollary \ref{coro_ber_regulation}} indicates that gradually decreasing the BER at a rate of $\mathcal{O}(1/(r+1)^2)$ can guarantee wireless FL converges at a rate of $\mathcal{O}(1/\sqrt{T})$.
In other words, the proposed BER regulation rule achieves an improving communication quality over communication rounds. This is consistent with the solution developed in \cite{wei2022federated} for wireless FL on strongly convex loss functions, which improves the communication quality by increasing the signal-to-noise ratio (SNR) with communication rounds.

\subsection{BER vs. Number of LDPC Decoding Iterations}

The analysis in Section III-A indicates that the convergence rate of conventional FL with error-free communication can still be achieved in wireless FL if the BER is properly controlled. In this subsection, we analyze the relationship between BER and the maximum number of LDPC decoding iterations, based on which, we develop an adaptive decoding scheme to maintain the desired convergence rate.
\begin{algorithm}[t]
\label{algorithm}
	\setstretch{1.0}
	{\caption{Wireless FL with Adaptive LDPC Decoding}
	\textbf{Input: }{The initial global model $\mathbf{w}_0$, $R>1$, $b_0$, and $b_{R-1}$.}\\
	\For{ Communication round $r=0, \cdots, R-1$}
	{
            The server broadcasts the $N$-bit representation of the global model $\mathbf{w}_r$ and scalars $\min(\mathbf{w}_{r}),\max(\mathbf{w}_{r})$.\\
		\For{Client $k=1, \cdots, K$}
		{
			Compute $b_r=\frac{(b_0-b_{R-1})R^2}{(R^2-1)(r+1)^2}+\frac{b_{R-1}R^2-b_0}{R^2-1}$.\\
			Find $Q_r$ by its mapping with $b_r$.\\
			Initialize $decoding.iter=0$.\\
                \For{$decoding.iter < Q_r$}
			{Execute one iteration of the LDPC decoding algorithm for each frame that contains the global model parameters.\\
				\If{Parity check passes}
				{Early termination and break.}
				$decoding.iter+=1$
			}
			Reconstruct $\mathbf{\tilde{w}}_r^k$ according to the LDPC decoding results and $\max(\mathbf{w}_{r}),\min(\mathbf{w}_{r})$. \\
   Set
                $\mathbf{w}^k_{r,0}=\mathbf{\tilde{w}}_r^k$.\\
			\For{local training step $e=0,...,E-1$}
			{
				$\mathbf{w}^k_{r,e+1}=\mathbf{w}^k_{r,e}-\eta\nabla f_k(\mathbf{w}^k_{r,e};\xi^k_{r,e})$.\
			}
			Upload $\Delta \mathbf{w}_r^k=\mathbf{w}^k_{r,E}-\mathbf{w}_{r,0}^k$ to the server.\
		}
		Perform model aggregation at the server as $\mathbf{w}_{r+1}=\mathbf{w}_r+\frac{1}{K}\sum^K_{k=1}\Delta \mathbf{w}_r^k$.
 }
 \textbf{Output: }{$\mathbf{\bar{w}}_{R,E}=\frac{1}{K}\sum_{k=1}^K\mathbf{w}_{R,E}^k$ as the learned model.}}
\end{algorithm}
It has been shown in \cite{khandekar2001complexity} that for message-passing-based channel decoders, to achieve a fraction of 1-$\epsilon$ of the channel capacity over a binary erasure channel, the relationship between the decoding complexity (denoted as $\chi_D$) and BER (denoted as $b$) can be modeled as follows:
\begin{equation}
\chi_D = \mathcal{O}\left(\log\frac{1}{b}\right)+\mathcal{O}\left(\frac{1}{\epsilon}\log\frac{1}{\epsilon}\right).
\label{iter_ber_capacity}
\end{equation}
According to \cite{mackay1999good}, the relationship between the decoding complexity and the maximum number of decoding iterations ($Q$) can be modeled as follows:
\begin{equation}
\chi_D = Q\cdot\chi_0,
\label{complexity_iter}
\end{equation}
where $\chi_0$ denotes the decoding complexity per iteration.
By combining (\ref{iter_ber_capacity}) and (\ref{complexity_iter}), it leads to the following relationship between the maximum number of decoding iterations and BER:
\begin{equation}
Q = \frac{1}{\chi_0}\mathcal{O}\left(\log\frac{1}{b}\right)+\frac{1}{\chi_0}\mathcal{O}\left(\frac{1}{\epsilon}\log\frac{1}{\epsilon}\right).
\label{iter_ber}
\end{equation}
In the next subsection, we develop an adaptive scheme to control the maximum number of LDPC decoding iterations for clients in a wireless FL system based on \textbf{Corollary 1} and (\ref{iter_ber}).

\subsection{LDPC Decoding Iteration Control}
\label{Decoding Iterations Control}
In this subsection, we show how to adaptively control the maximum number of LDPC decoding iterations over communication rounds to achieve the desired BER.

In principle, by combining (\ref{BERr}) and (\ref{iter_ber}), we obtain the following relationship between the maximum number of LDPC decoding iterations and communication rounds:
\begin{equation}
Q_r = \frac{1}{\chi_0}\mathcal{O}\left(2\log\ (r+1)\right)+\frac{1}{\chi_0}\mathcal{O}\left(\frac{1}{\epsilon}\log\frac{1}{\epsilon}\right),
\label{iter_round}
\end{equation}
which provides a theoretical guideline to determine the maximum number of LDPC decoding iterations in the $r$-th communication round. Specifically, $Q_{r}$ can be set as $\alpha \log (r+1) + \beta$, where $\alpha$ and $\beta$ are constants. However, (\ref{iter_round}) cannot be directly utilized because $\alpha$ and $\beta$ are code-specific and depend on the receive SNR as well as channel decoding algorithms. Thus, we propose a practical approach via the mapping between BER and the maximum number of LDPC decoding iterations at different levels of receive SNR. Entries in the mapping can be obtained through empirical measurements specific to the LDPC code used in the wireless FL system. Fig.~\ref{fig:BER performance} shows the sample simulation results for the rate 1/2 LDPC code with a block length of 1008 bits and binary phase shift keying (BPSK) modulation over an additive white Gaussian noise (AWGN) channel with $\text{SNR}$ = 1.5 and 2.5 dB, which can be stored in memory of clients~\cite{lu2004performance}. From the mapping, we are able to obtain the corresponding maximum number of decoding iteration for a target BER for clients with different receive SNR in each communication round.

\begin{figure}[t]
\small
\centering
\includegraphics[width=2.5in]{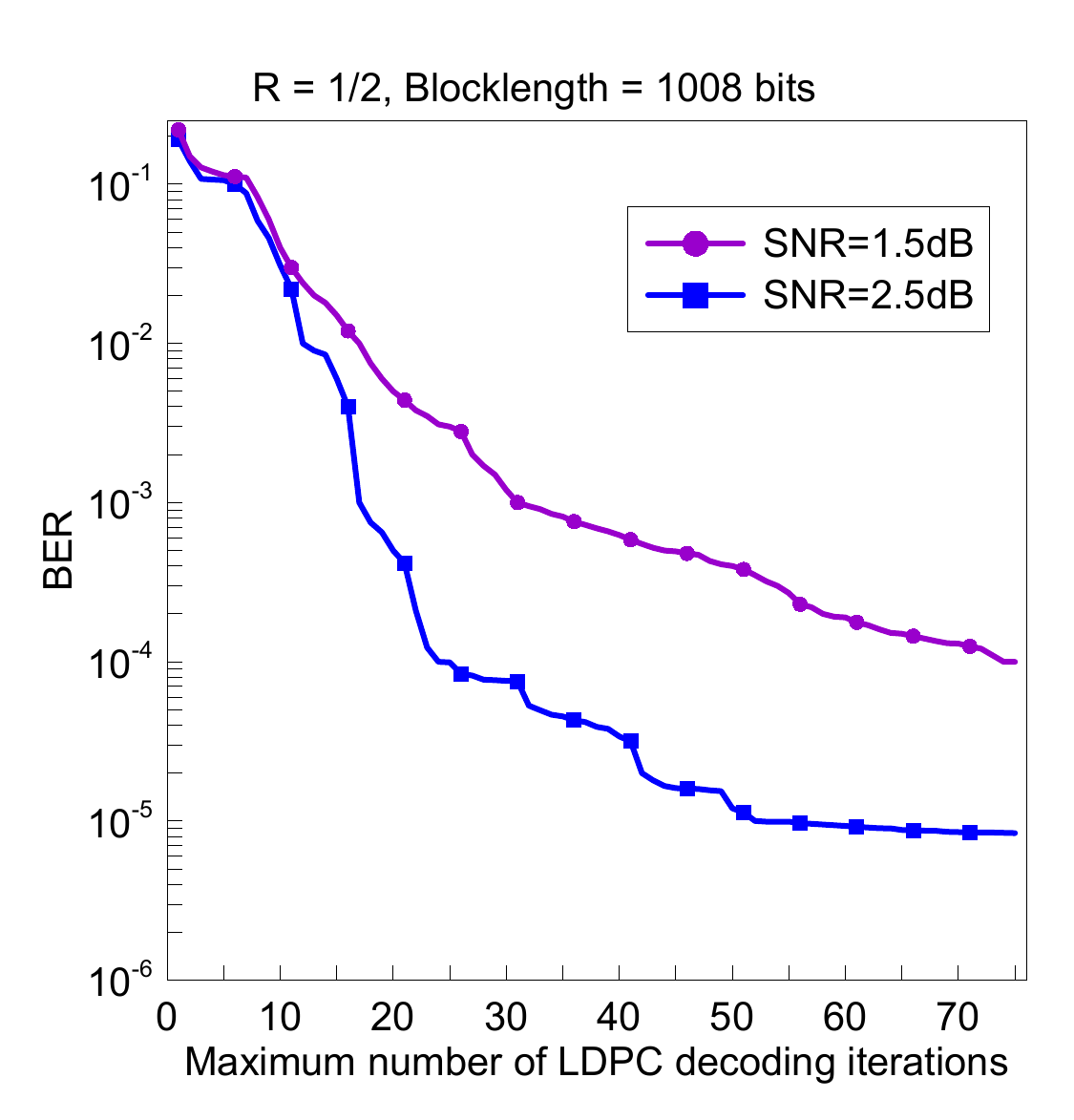}
\caption{Mappings between BER and the maximum number of LDPC decoding iterations at different receive SNRs.}
\label{fig:BER performance}
\end{figure}

The wireless FL procedures with adaptive LDPC decoding are summarized in \textbf{Algorithm~1}. The algorithm starts with an initial BER, denoted as $b_0$, and gradually decreases the BER with communication rounds at the rate of $\frac{1}{(r+1)^2}$. In practice, since the BER of LDPC decoding cannot be arbitrarily small (see Fig. \ref{fig:BER performance}), we set a target BER at the last communication round, denoted as $b_{R-1}$. With $b_0$, $b_{R-1}$ ($b_{0}>b_{R-1}$), and the desired BER should scale at the rate of $\frac{1}{(r+1)^2}$, $b_{r}$ can be designed as 
\begin{equation}
b_r=\frac{(b_0-b_{R-1})R^2}{(R^2-1)(r+1)^2}+\frac{b_{R-1}R^2-b_0}{R^2-1}
\label{concrete_br}
\end{equation}
for $R>1$.
The corresponding maximum number of LDPC decoding iterations in that round is obtained from the measured mapping relationship.  

\begin{remark}
Note that the number of entries in the mappings depend on the granularity of discrete BER and SNR values. Since this number is not large, the overhead for storing them is expected to be small. Additionally, reading these entries from memory takes constant time and does not introduce additional LDPC decoding latency. 
\end{remark}

\section{Experimental Results}
\label{Expirements}
In this section, we validate the effectiveness of the proposed adaptive LDPC decoding algorithm via numerical experiments. First, we introduce the experimental settings. Then, performance of the proposed algorithm, including test accuracy and decoding energy consumption, is compared with a baseline by fixing the maximum number of LDPC decoding iterations in each communication round (hereinafter referred to as the ``Fixed-$Q$'' scheme).

\begin{figure}[t]
\small
\centering
\includegraphics[width=2.5in]{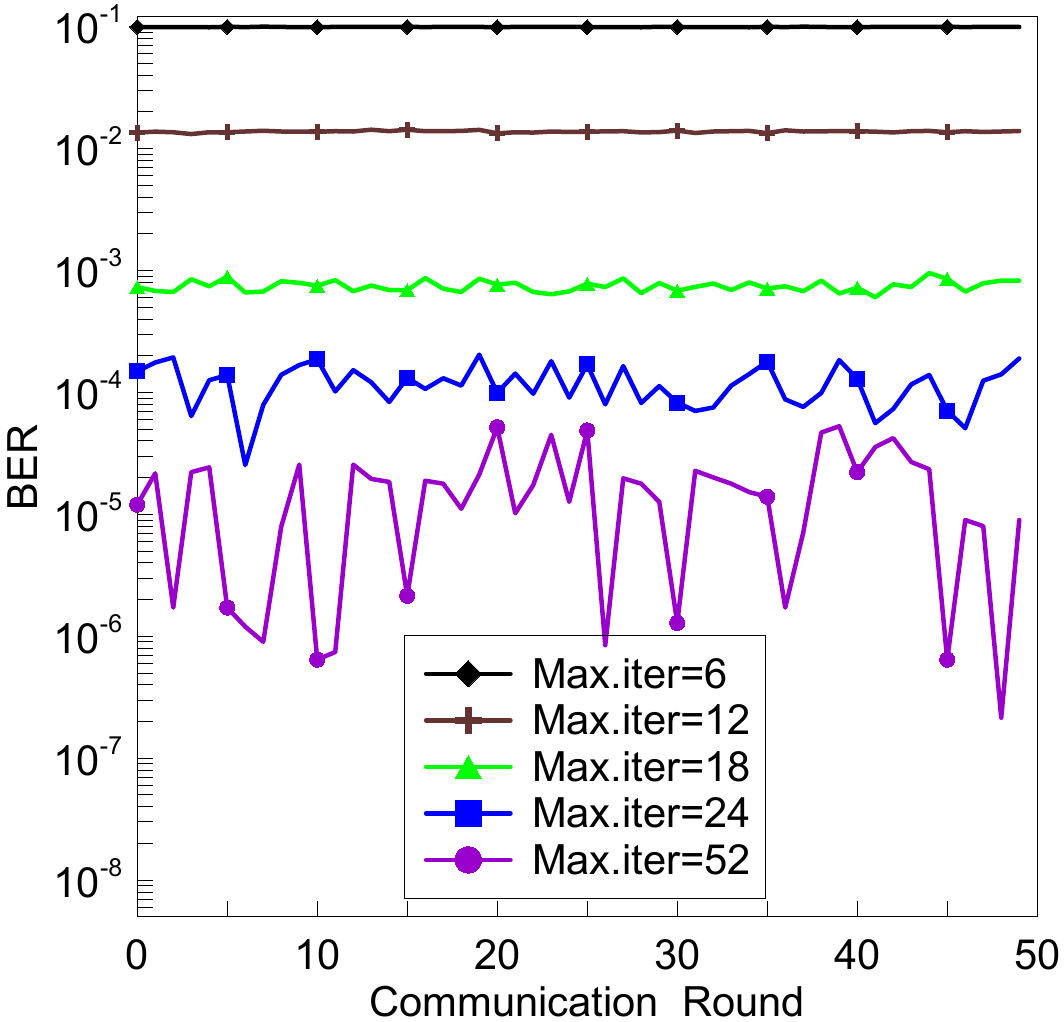}
\caption{BER of a sample client achieved by the baseline scheme (``Fixed-$Q$'') with a fixed maximum number of LDPC decoding iterations: By setting the maximum number of LDPC decoding iterations $Q$ to 6, 12, 18, 24, and 52, clients achieve the BER at around $10^{-1}$, $10^{-2}$, $10^{-3}$, $10^{-4}$, and $10^{-5}$, respectively.}
\label{fig:Achieved BER}
\end{figure}

\subsection{Experimental Settings}
\label{ex_setting}
\label{Experiment settings}
\begin{figure}[t]
\small
\begin{minipage}[b]{.48\linewidth}
  \centering
  \centerline{\includegraphics[width=4.1cm]{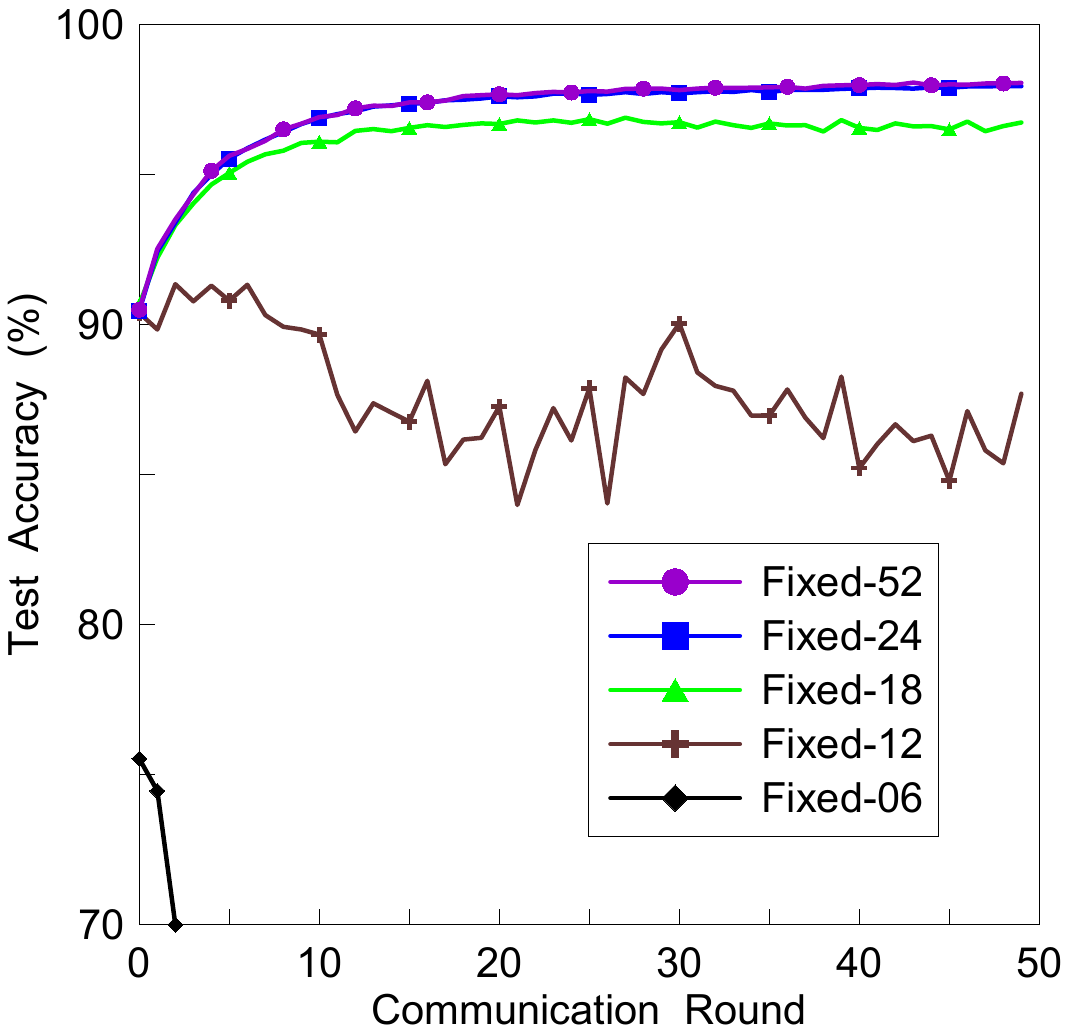}}
  \begin{center}
  (a) LeNet-300-100 on MNIST (IID).
  \end{center}
\end{minipage}
\hfill
\begin{minipage}[b]{0.48\linewidth}
  \centering
  \centerline{\includegraphics[width=4.0cm]{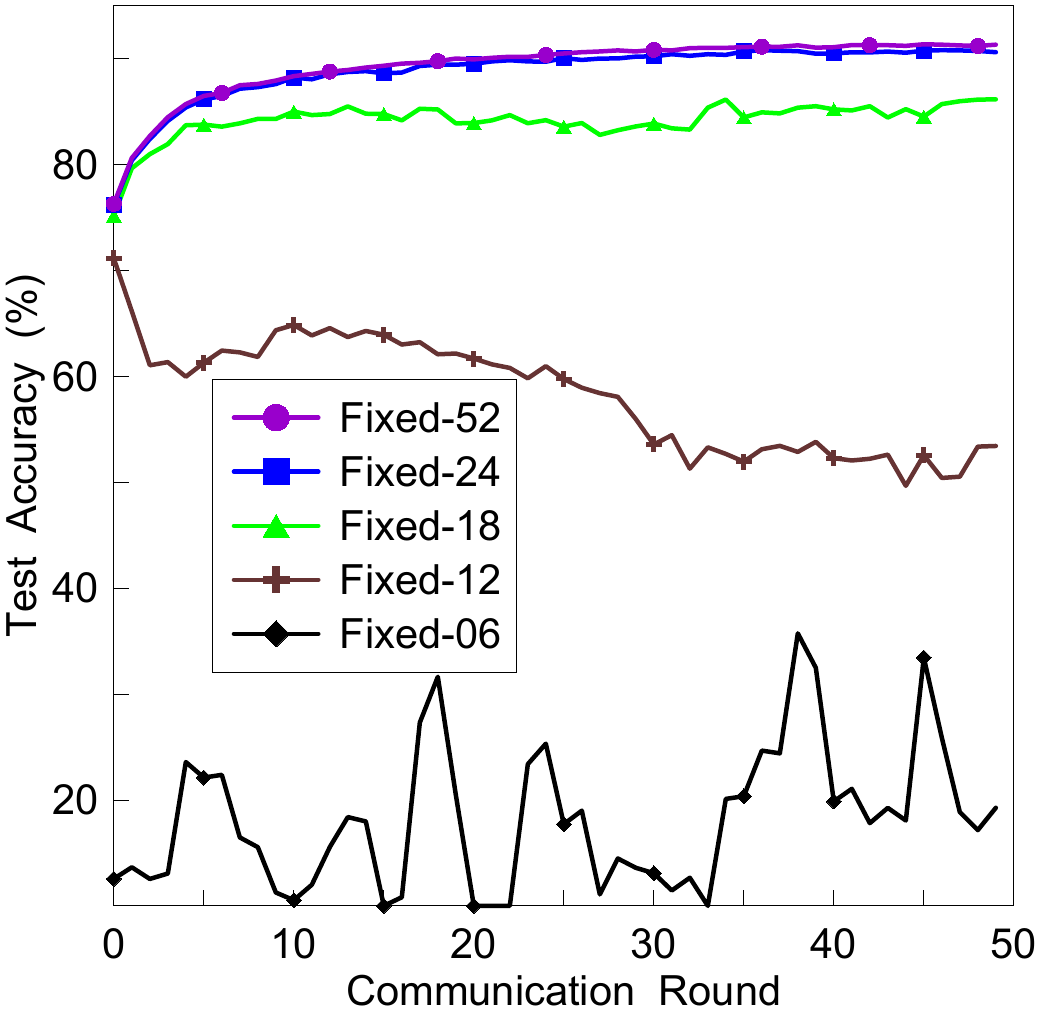}}
  \begin{center}
  (b) Vanilla-CNN on Fashion-MNIST (IID).
  \end{center}
\end{minipage}
\hfill
\begin{minipage}[b]{0.48\linewidth}
  \centering
  \centerline{\includegraphics[width=4.0cm]{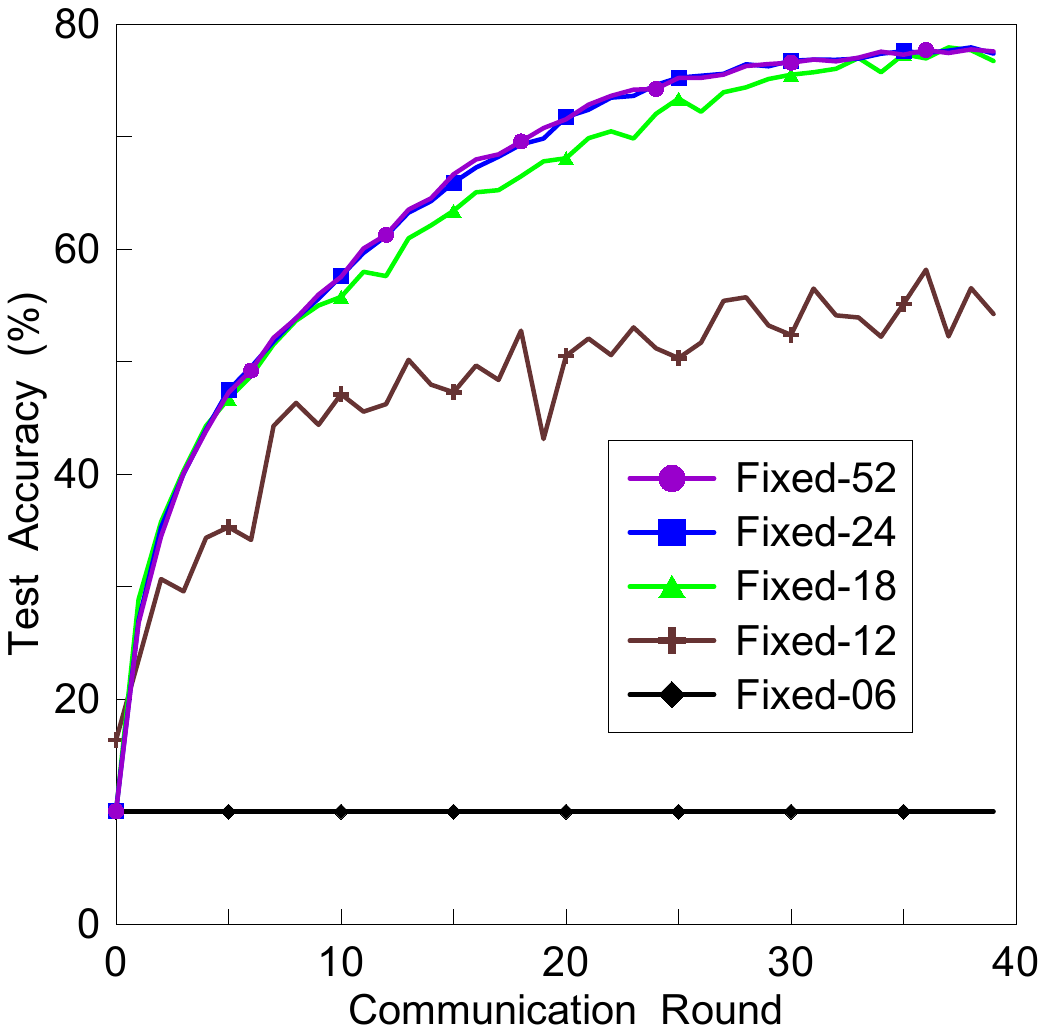}}
  \begin{center}
  (c) 7-layers CNN on CIFAR-10 (IID).
  \end{center}
\end{minipage}
\hfill
\begin{minipage}[b]{0.48\linewidth}
  \centering
  \centerline{\includegraphics[width=4.0cm]{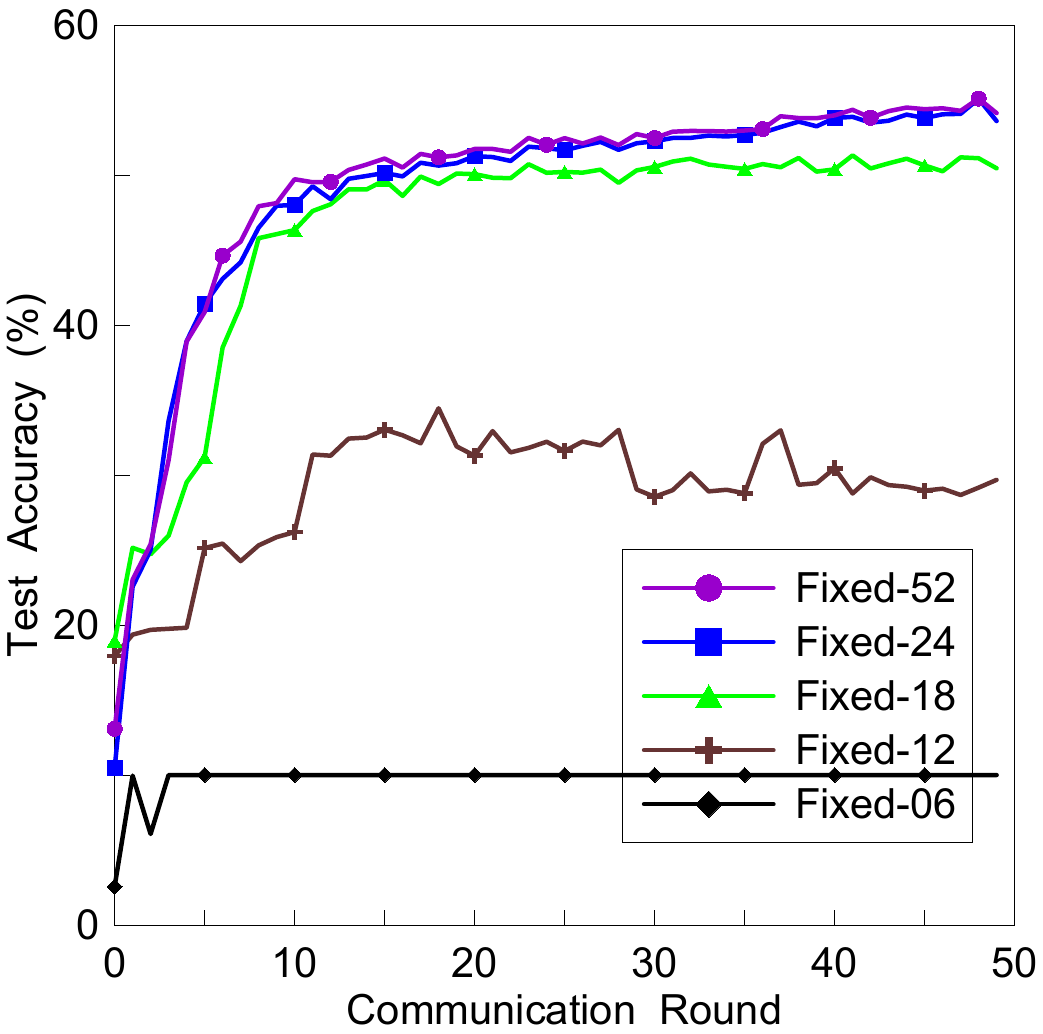}}
  \begin{center}
  (d) Vanilla-CNN on Fashion-MNIST (non-IID).
  \end{center}
\end{minipage}
\caption{Test accuracy on different datasets under the baseline scheme ``Fixed-$Q$'' with a fixed maximum number of LDPC decoding iterations.}
\label{fig:BER impacts on FL}
\end{figure}

\begin{figure}[t]
\small
\begin{minipage}[b]{.48\linewidth}
  \centering
  \centerline{\includegraphics[width=4.0cm]{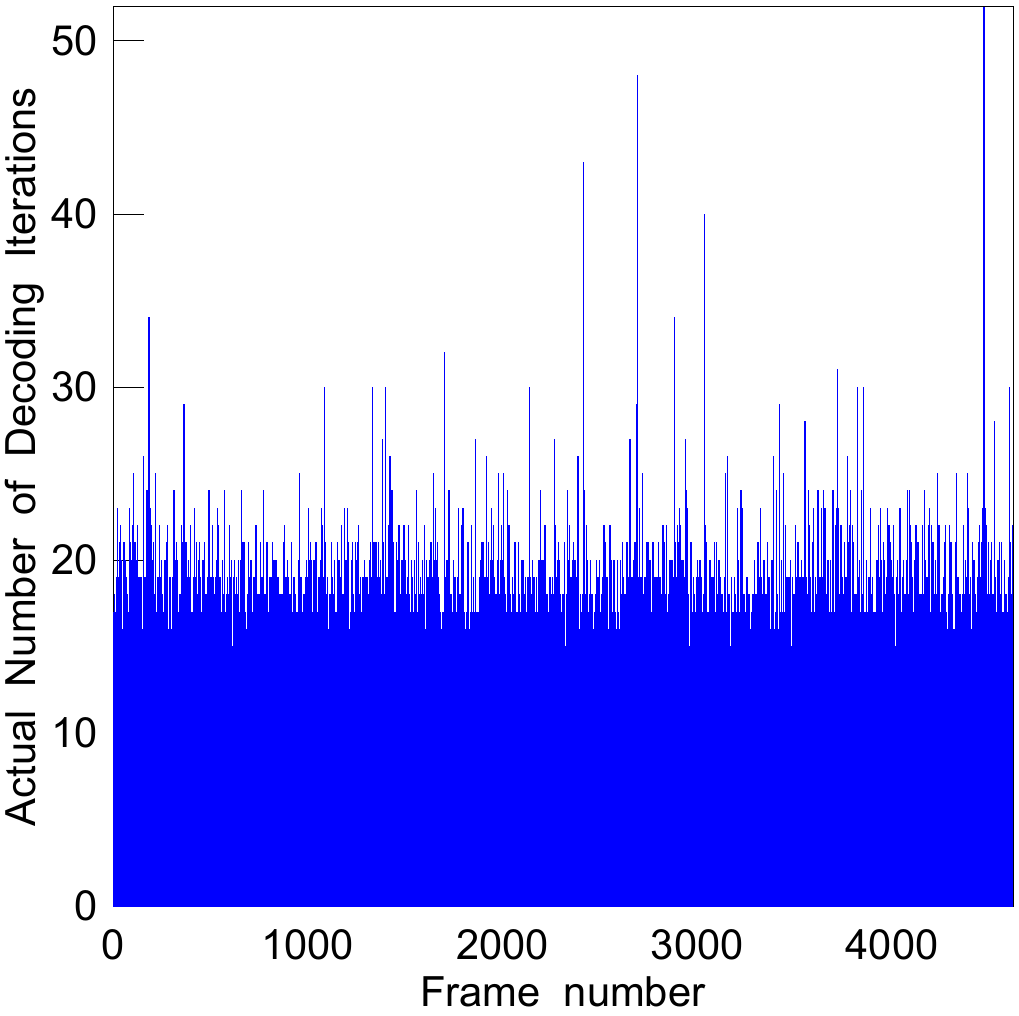}}
  \begin{center}
  (a) Actual number of LDPC decoding iterations.
  \end{center}
\end{minipage}
\hfill
\begin{minipage}[b]{0.48\linewidth}
  \centering
  \centerline{\includegraphics[width=4.0cm]{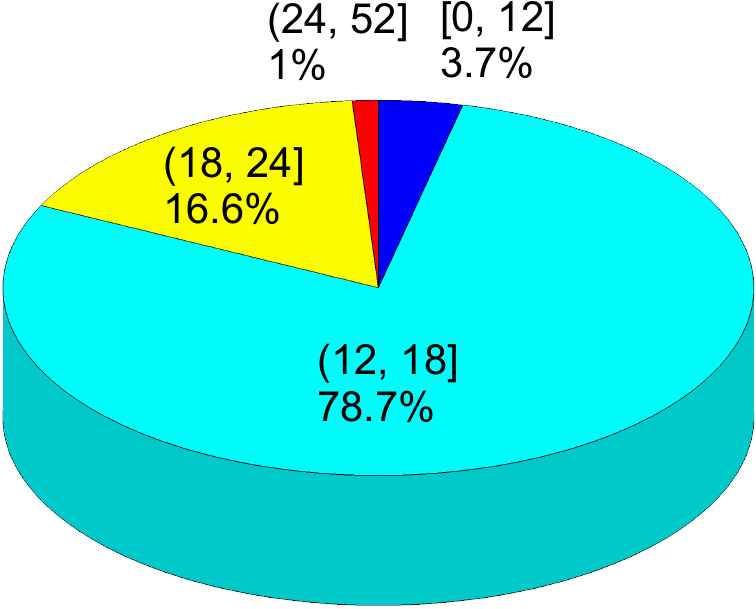}}
  \begin{center}
  (b) Distribution of the actual number of LDPC decoding iterations.
  \end{center}
\end{minipage}
\caption{Actual numbers of LDPC decoding iterations and its distribution at a sample client in a selected communication round on Fashion-MNIST (IID). A 8-bit vanilla CNN model is encoded by the rate 1/2 min-sum LDPC code of block-length of 1008 bit, resulting in 4620 frames in a communication round.}
\label{fig:Consumed LDPC decoding iteration}
\end{figure}
\textit{System Setup:} We consider a wireless FL system with 10 clients. The clients communicate with the server via the rate 1/2 min-sum LDPC code using a code length of 1008 bits\cite{sha2008multi}. BPSK is adopted as the modulation scheme. In each communication round, $N=8$ bits are used to digitalize each model parameter. The number of local training steps in each communication round is set to $E=5$ with the learning rate and batch size as 0.01 and 64, respectively. The receive SNR of all the downlink wireless channels is assumed to be 2.5 dB.

\textit{Datasets}: We conduct experiments on three classic datasets, including MNIST\cite{lecun1998mnist}, Fashion-MNIST\cite{xiao2017fashion}, and CIFAR-10\cite{krizhevsky2010cifar}. The training datasets are distributed among clients in an IID manner. We further consider one experiment where the Fashion-MNIST data are non-IID among clients. In particular, each client is assigned with two categories of training data similar to the setup in \cite{li2020federated}. The test datasets are used for model accuracy evaluation at the server.

\textit{DNN Models}: For the MNIST dataset, the LeNet-300-100 model\cite{han2015deep} is employed, which is a fully connected neural network consisting of two hidden layers and one softmax layer, with approximately 0.27 million parameters.
For the Fashion-MNIST dataset, a vanilla convolutional neural network (CNN) \cite{mcmahan2017communication} with two 5×5 convolution layers, one fully connected layer, and one softmax layer is employed, comprising around 0.58 million parameters. 
For the CIFAR-10 dataset, a 7-layer CNN model with four 3×3 convolution layers, two fully connected layers, and one softmax layer is employed, which contains approximately 9.07 million parameters.

\subsection{Learning Performance of the Baseline Scheme}
\begin{figure}[t]
\small
\begin{minipage}[b]{.48\linewidth}
  \centering
  \centerline{\includegraphics[width=4.0cm]{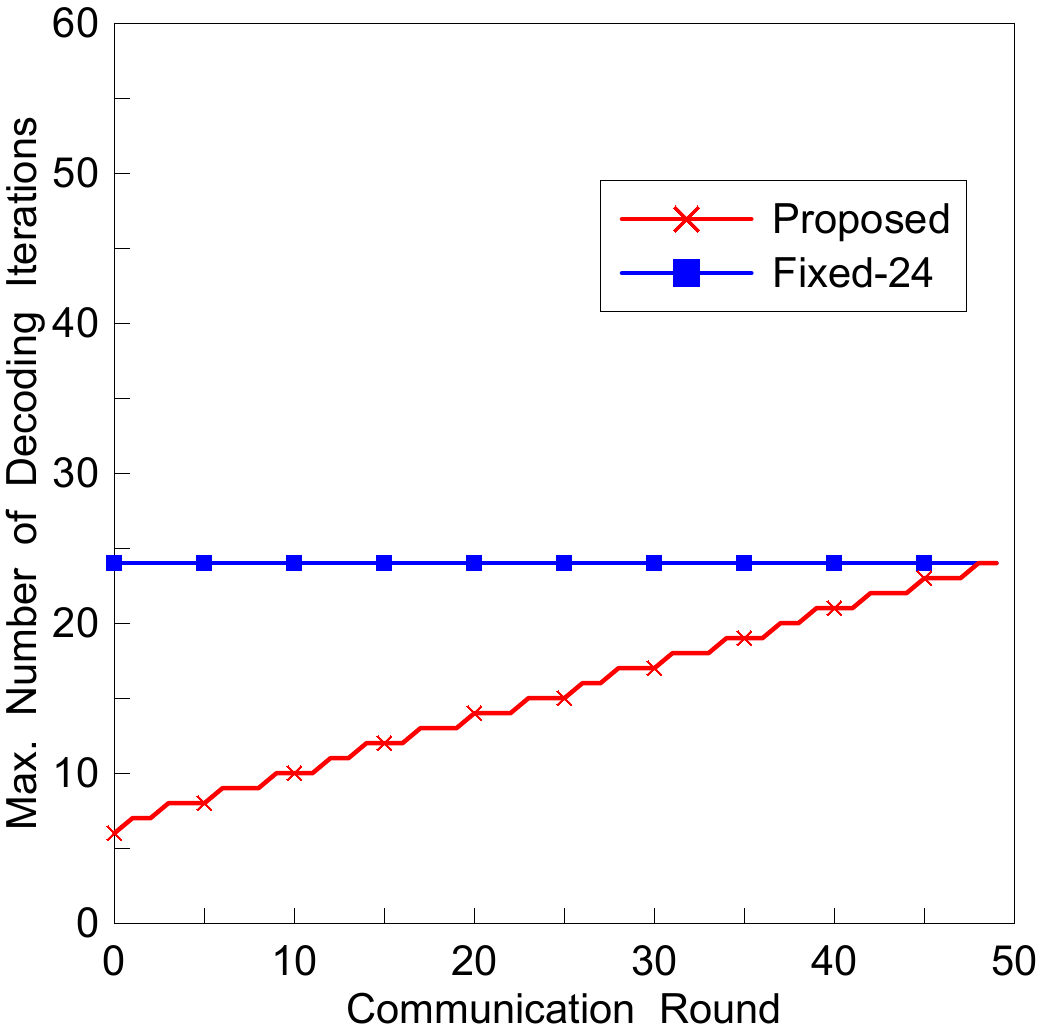}}
  \begin{center}
  (a) $Q_r$.
  \end{center}
\end{minipage}
\hfill
\begin{minipage}[b]{0.48\linewidth}
  \centering
  \centerline{\includegraphics[width=4.0cm]{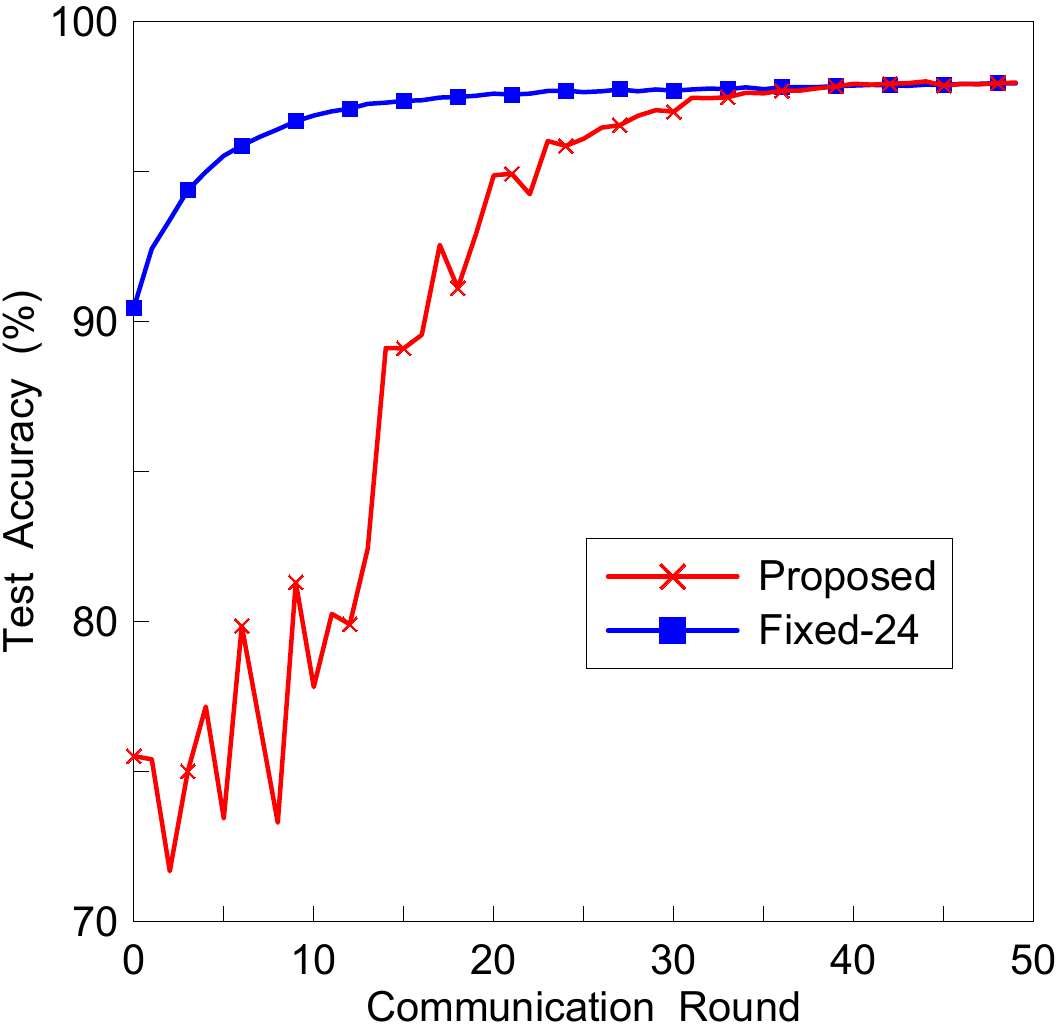}}
  \begin{center}
  (b) Test accuracy.
  \end{center}
\end{minipage}
\hfill
\begin{minipage}[b]{0.48\linewidth}
  \centering
  \centerline{\includegraphics[width=4.0cm]{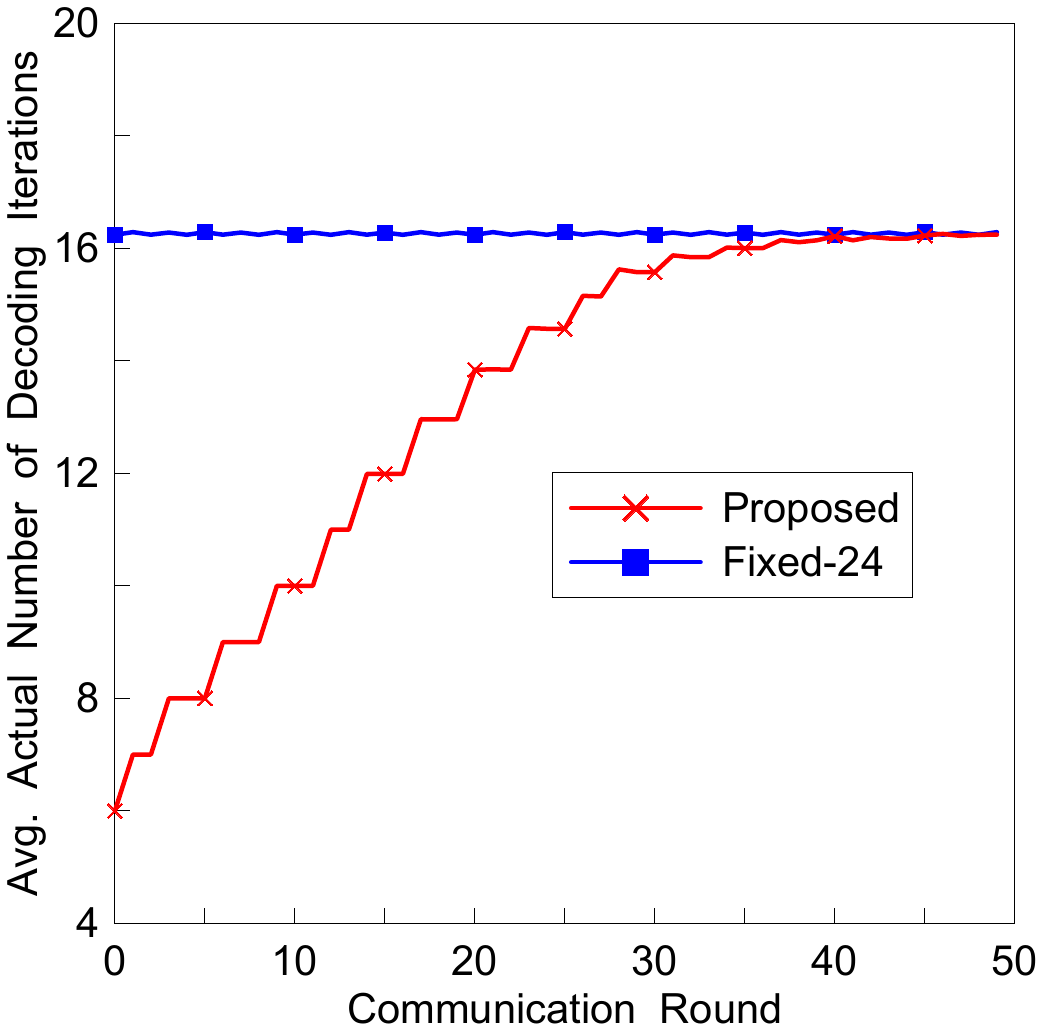}}
  \begin{center}
  (c) Avg. actual number of decoding iterations among clients.
  \end{center}
\end{minipage}
\hfill
\begin{minipage}[b]{0.48\linewidth}
  \centering
  \centerline{\includegraphics[width=4.0cm]{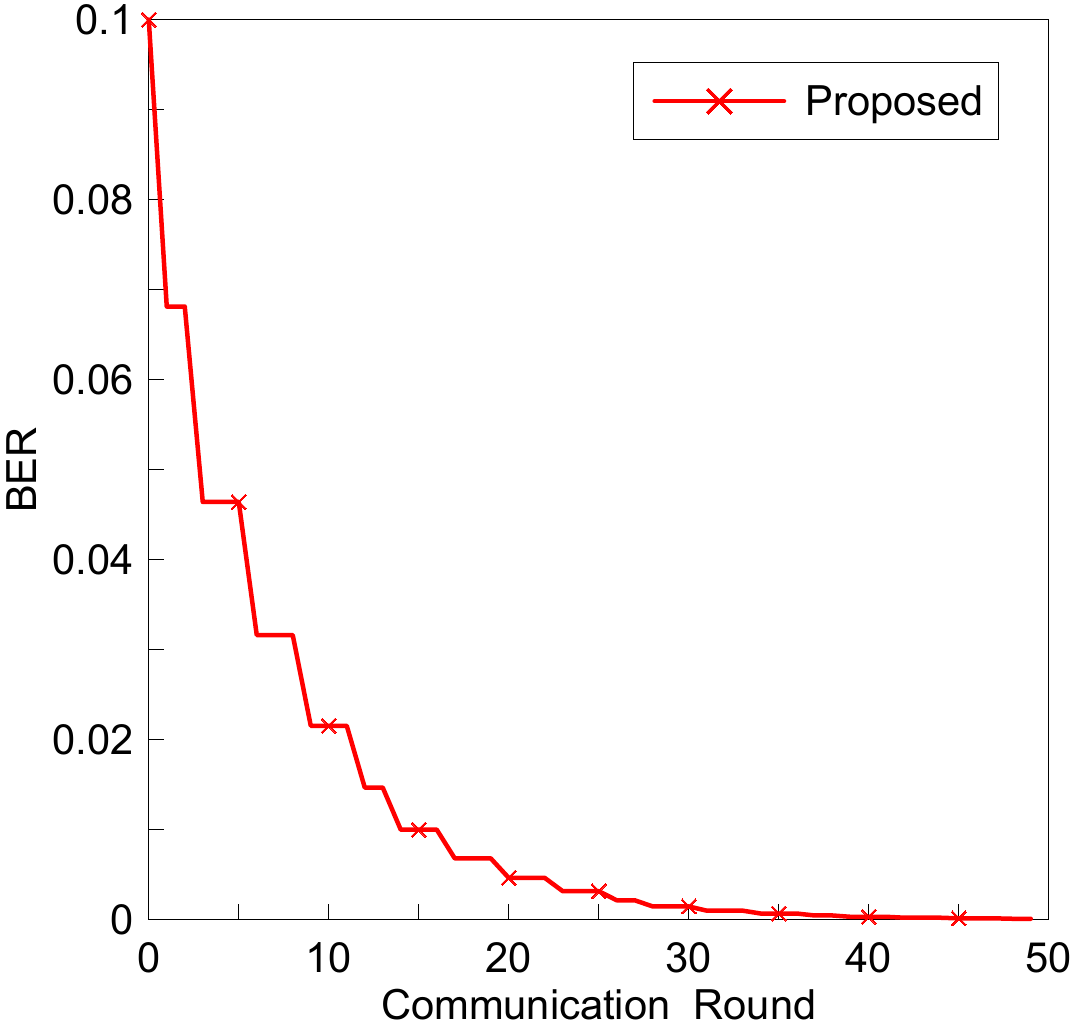}}
  \begin{center}
  (d) BER at every communication round, \\i.e., ($b_{r}$).
  \end{center}
\end{minipage}
\hfill
\begin{minipage}[b]{0.48\linewidth}
  \centering
  \centerline{\includegraphics[width=4.0cm]{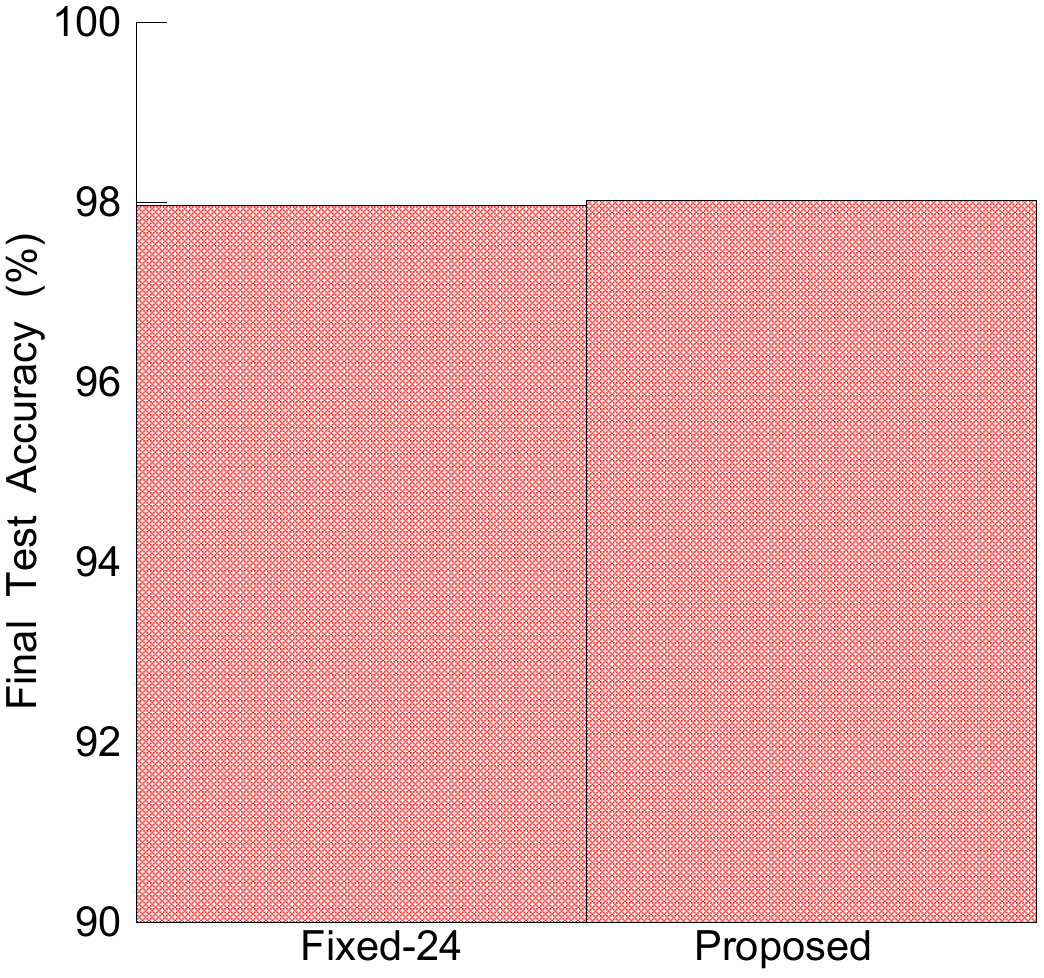}}
  \begin{center}
  (e) Test accuracy at the last communication round.
  \end{center}
\end{minipage}
\hfill
\begin{minipage}[b]{0.48\linewidth}
  \centering
  \centerline{\includegraphics[width=4.0cm]{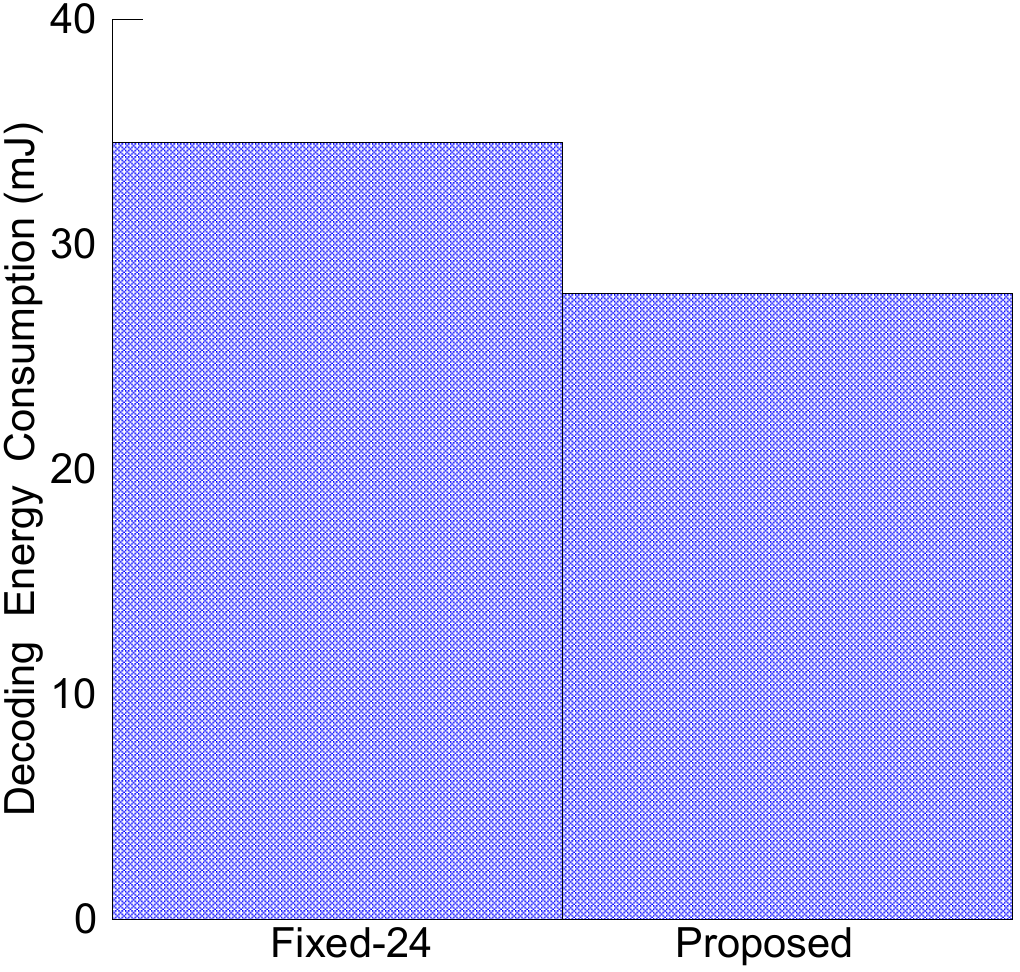}}
  \begin{center}
  (f) Total decoding energy consumption.
  \end{center}
\end{minipage}
\caption{Experimental results on MNIST (IID). Compared with the baseline scheme, our proposed algorithm that adaptively increases the maximum number of LDPC decoding iterations to 24 after 50 communication rounds can achieve the same final test accuracy of 98\% and reduce the decoding energy by 6.7~mJ.}
\label{fig:MNIST}
\end{figure}

\begin{figure}[t]
\small
\begin{minipage}[b]{.48\linewidth}
  \centering
  \centerline{\includegraphics[width=4.0cm]{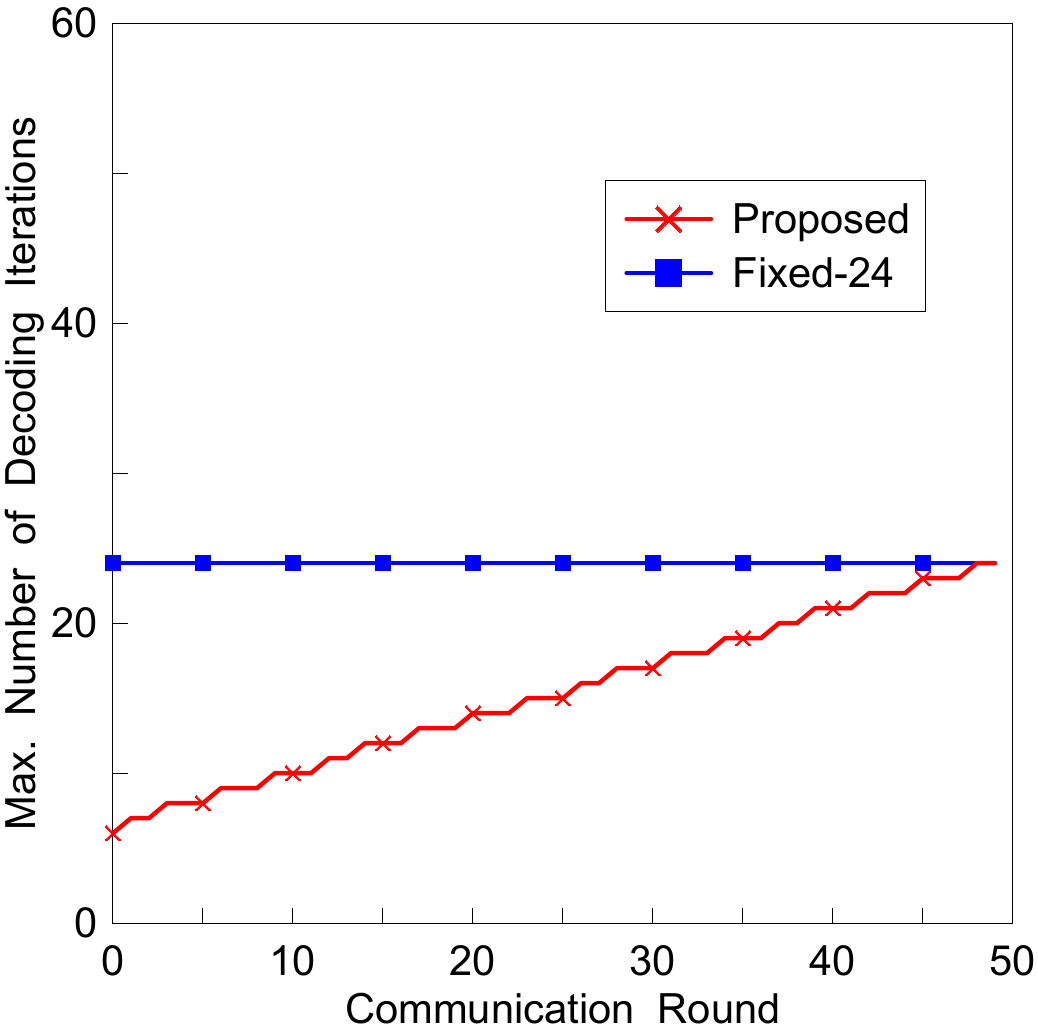}}
  \begin{center}
  (a) $Q_r$.
  \end{center}
\end{minipage}
\hfill
\begin{minipage}[b]{0.48\linewidth}
  \centering
  \centerline{\includegraphics[width=4.0cm]{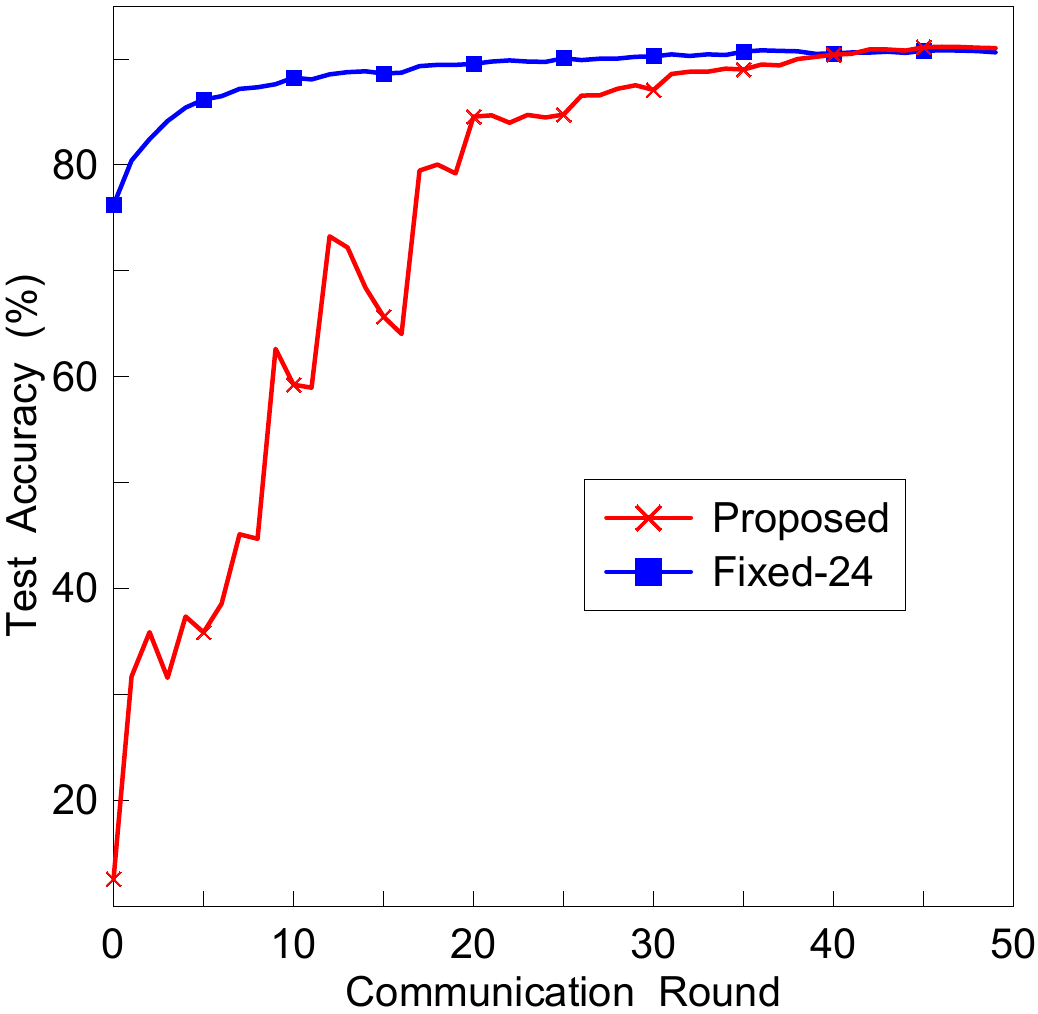}}
  \begin{center}
  (b) Test accuracy.
  \end{center}
\end{minipage}
\hfill
\begin{minipage}[b]{0.48\linewidth}
  \centering
  \centerline{\includegraphics[width=4.0cm]{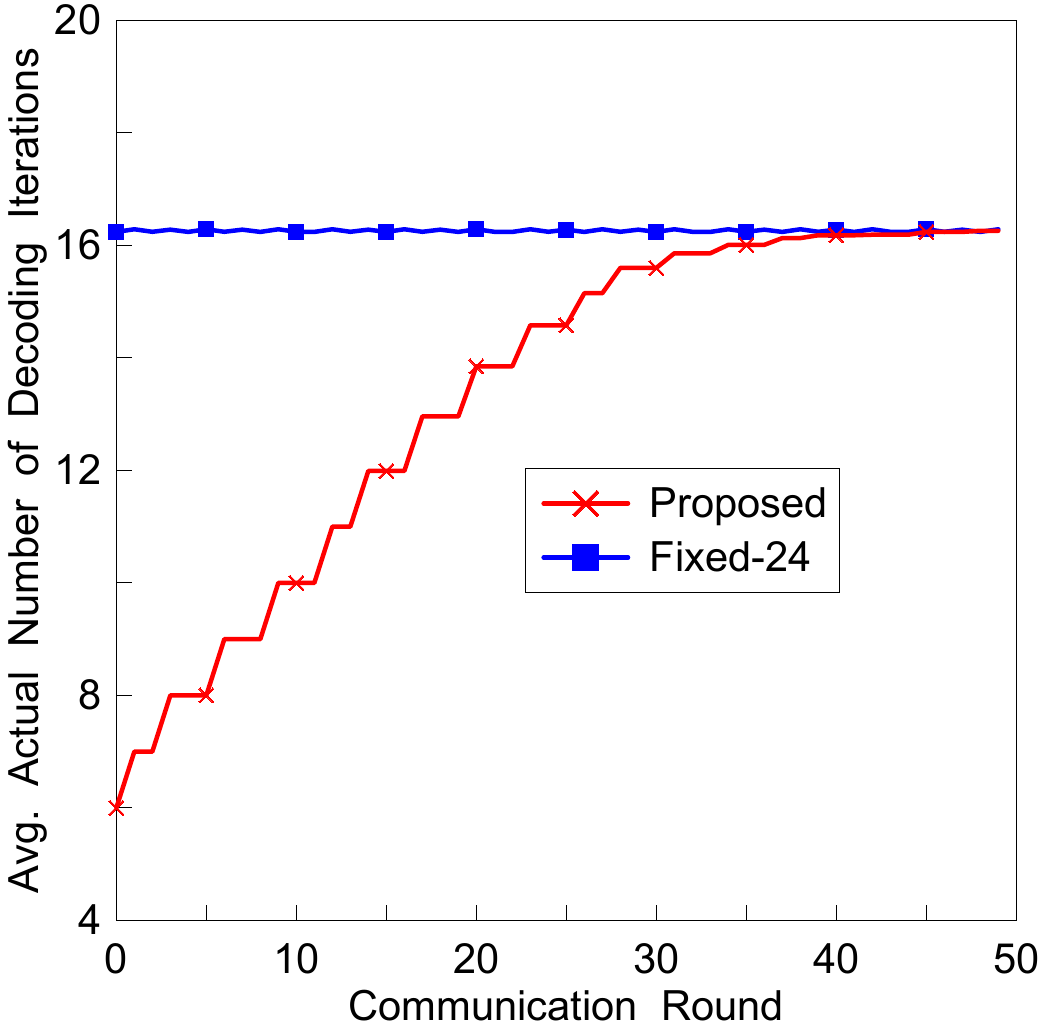}}
  \begin{center}
  (c) Avg. actual number of decoding iterations among clients.
  \end{center}
\end{minipage}
\hfill
\begin{minipage}[b]{0.48\linewidth}
  \centering
  \centerline{\includegraphics[width=4.0cm]{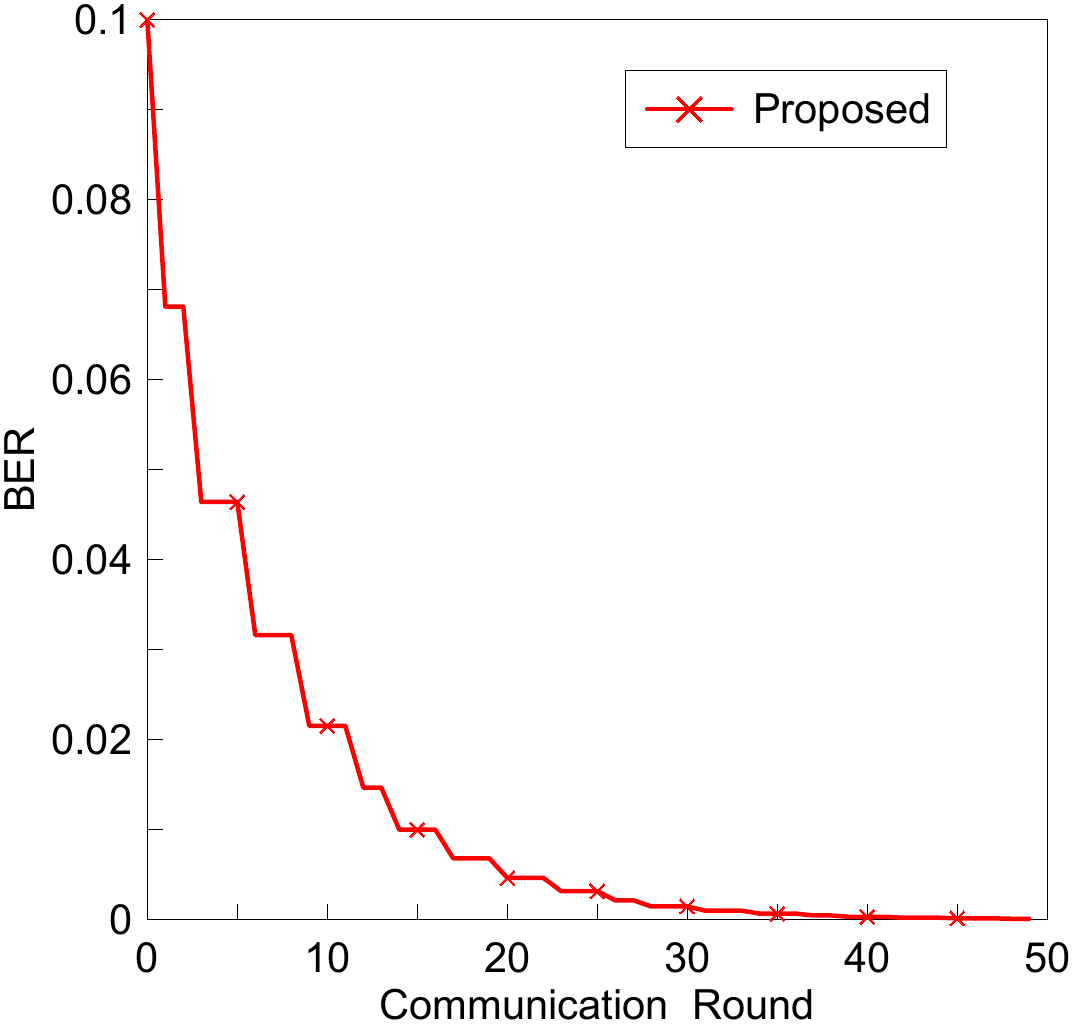}}
  \begin{center}
  (d) BER at every communication round, \\i.e., ($b_{r}$).
  \end{center}
\end{minipage}
\hfill
\begin{minipage}[b]{0.48\linewidth}
  \centering
  \centerline{\includegraphics[width=4.0cm]{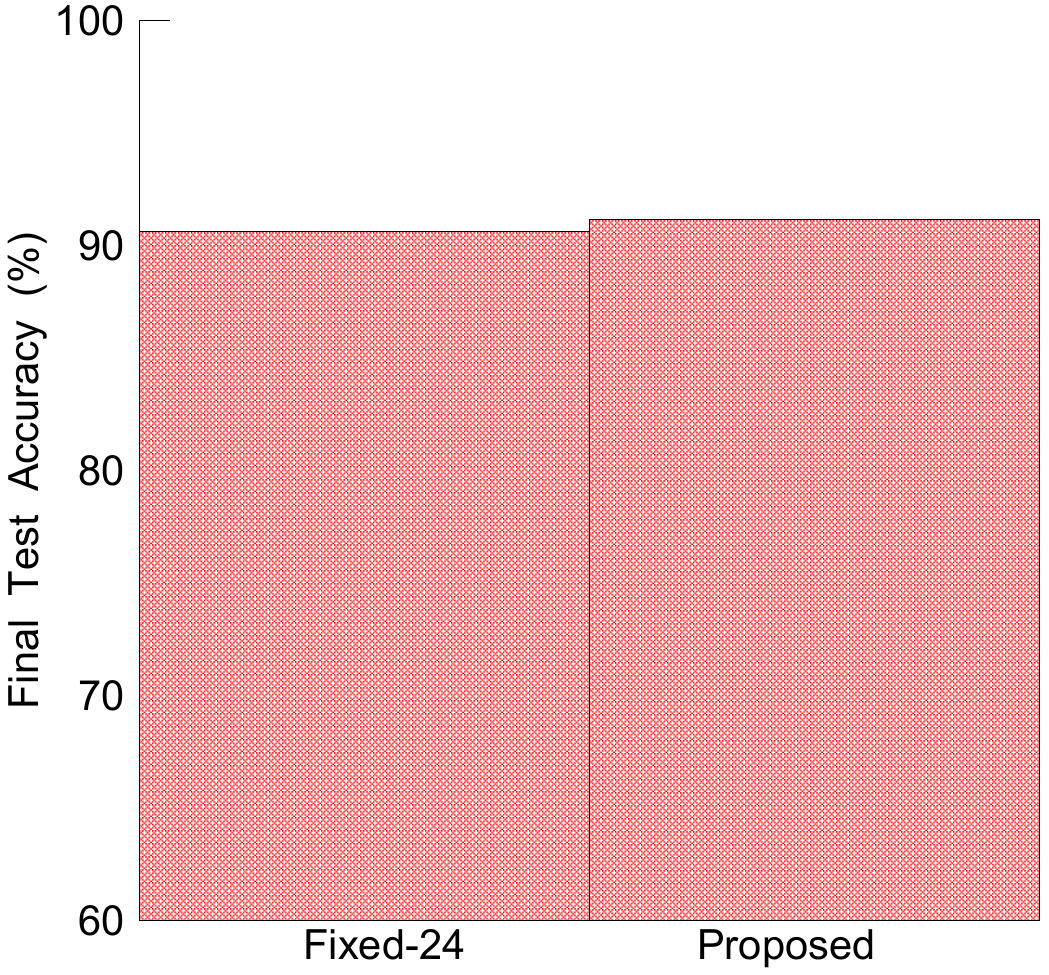}}
  \begin{center}
  (e) Test accuracy at the last communication round.
  \end{center}
\end{minipage}
\hfill
\begin{minipage}[b]{0.48\linewidth}
  \centering
  \centerline{\includegraphics[width=4.0cm]{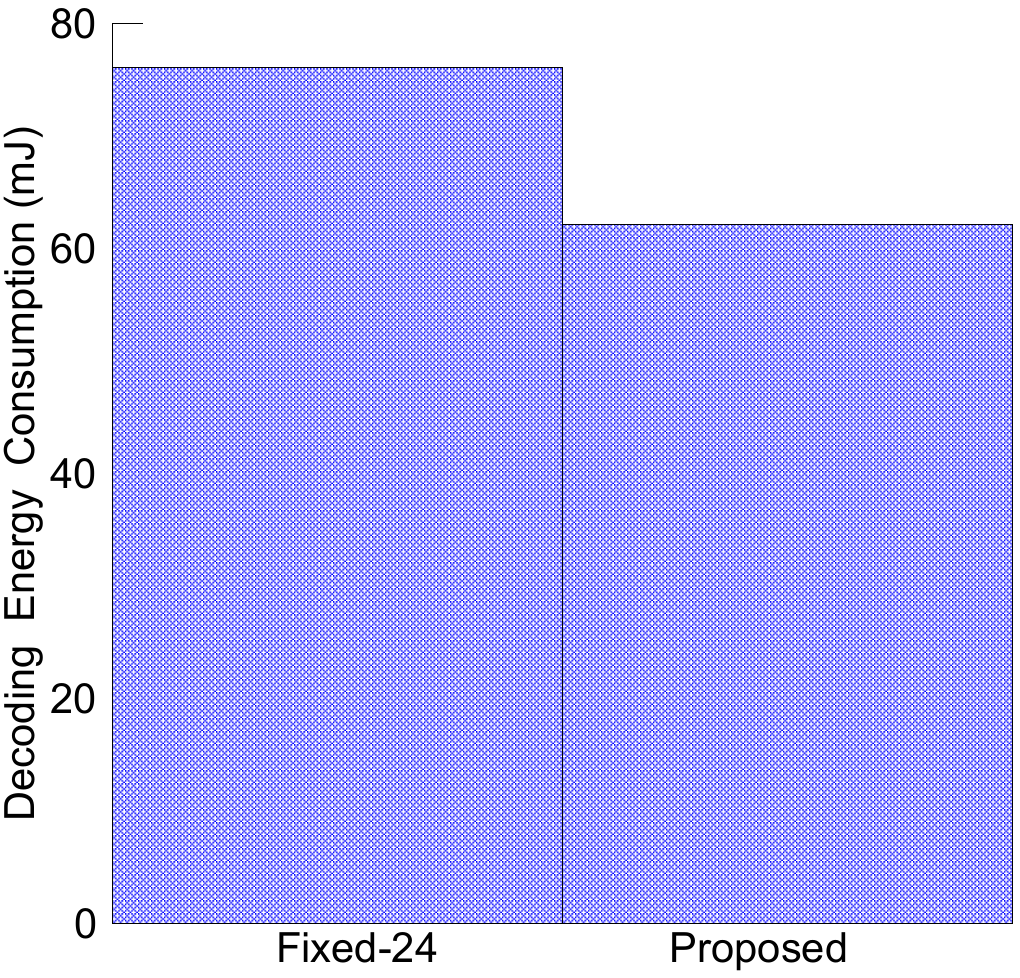}}
  \begin{center}
  (f) Total decoding energy consumption.
  \end{center}
\end{minipage}
\caption{Experimental results on Fashion-MNIST (IID). Compared with the baseline scheme, our proposed algorithm that adaptively increases the maximum number of LDPC decoding iterations to 24 after 50 communication rounds can achieve a similar final test accuracy of 91\% and reduce the decoding energy by 14~mJ.}
\label{fig:Fashion-MNIST}
\end{figure}
In this subsection, experimental results for the baseline scheme ``Fixed-$Q$'' are presented. First, as shown in Fig.~\ref{fig:Achieved BER}, a fixed maximum number of LDPC decoding iterations leads to approximately constant BERs over the training process.

Next, we investigate the impact of a roughly constant BER on the learning performance of wireless FL. Fig.~\ref{fig:BER impacts on FL} summarizes the test accuracy with different maximum numbers of LDPC decoding iterations. It can be observed that for all the four FL tasks, setting the maximum number of decoding iterations to 6 (which corresponds to a BER of $\sim 10^{-1}$) severely affects the learning accuracy. Similarly, when the maximum number of decoding iterations is set to 12 (which corresponds to a BER of $\sim 10^{-2}$), the training process fails to converge to satisfactory accuracy. By setting the maximum number of decoding iterations as 18 (which corresponds to a BER of $\sim 10^{-3}$), the training process converges, but with a noticeable loss in test accuracy compared to the case when the maximum number of decoding iterations is 24 (which corresponds to a BER of $\sim 10^{-4}$). However, further increasing the maximum number of decoding iterations to 52 does not lead to much accuracy improvement.
Thus, we next focus on the most efficient configuration of the baseline scheme with $Q=24$, i.e., ``Fixed-24''.

\subsection{Learning Performance of the Proposed Scheme}

\begin{figure}[t]
\small
\begin{minipage}[b]{.48\linewidth}
  \centering
  \centerline{\includegraphics[width=4.0cm]{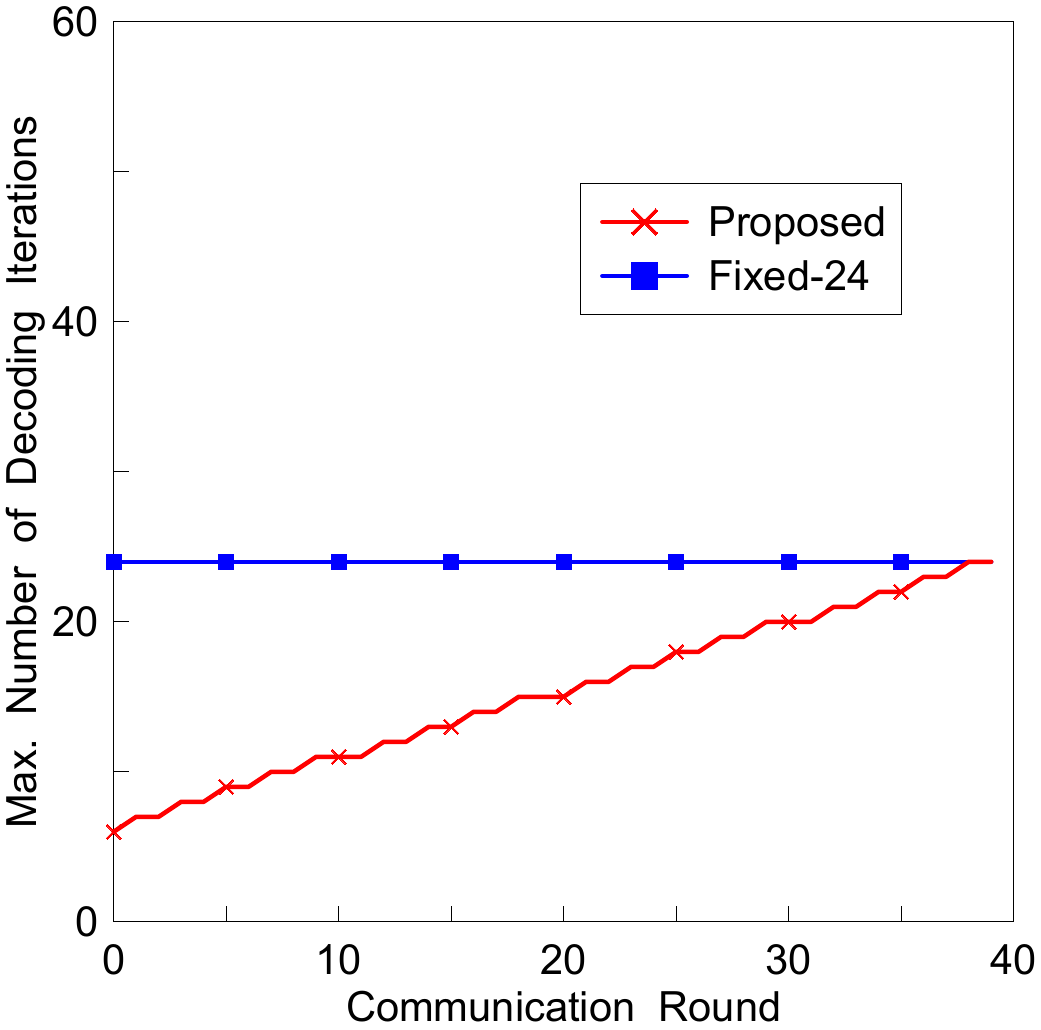}}
  \begin{center}
  (a) $Q_r$.
  \end{center}
\end{minipage}
\hfill
\begin{minipage}[b]{0.48\linewidth}
  \centering
  \centerline{\includegraphics[width=4.0cm]{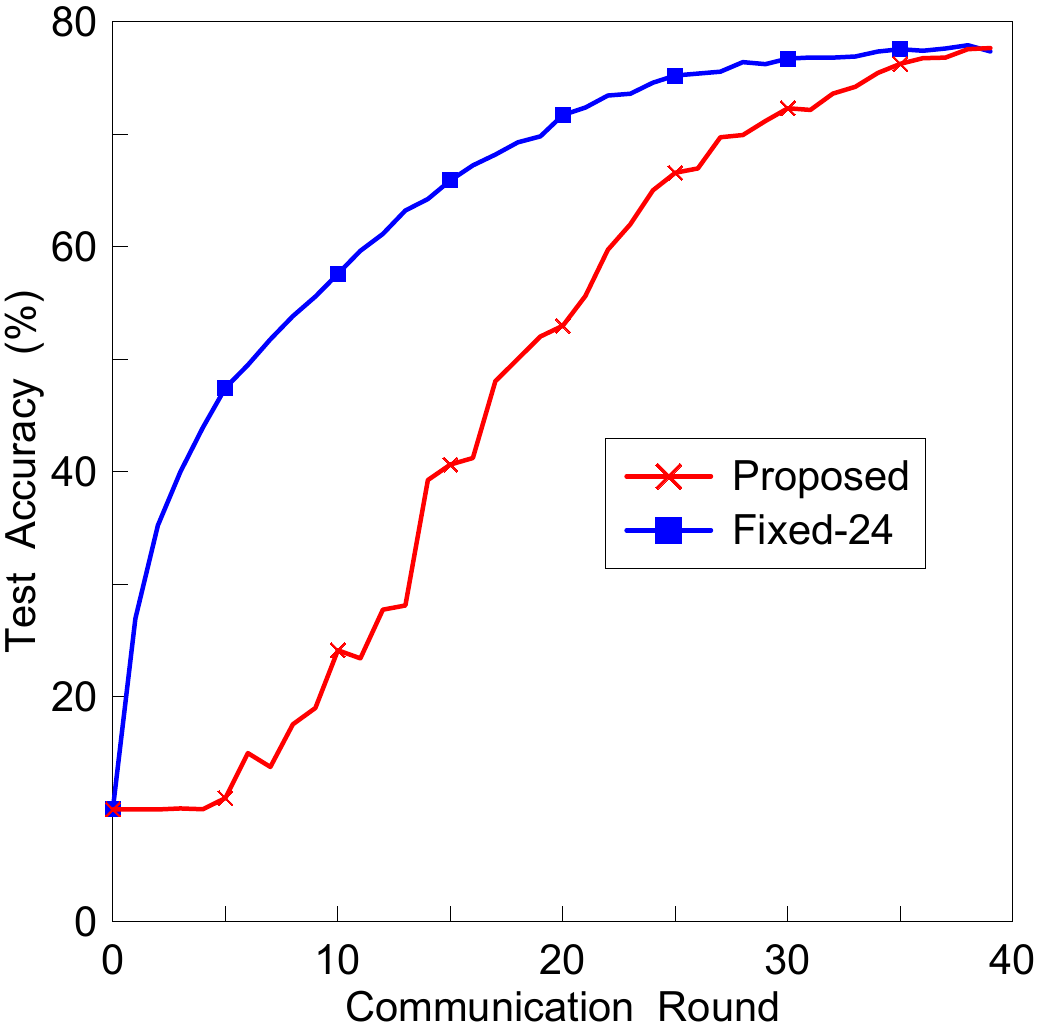}}
  \begin{center}
  (b) Test accuracy.
  \end{center}
\end{minipage}
\hfill
\begin{minipage}[b]{0.48\linewidth}
  \centering
  \centerline{\includegraphics[width=4.0cm]{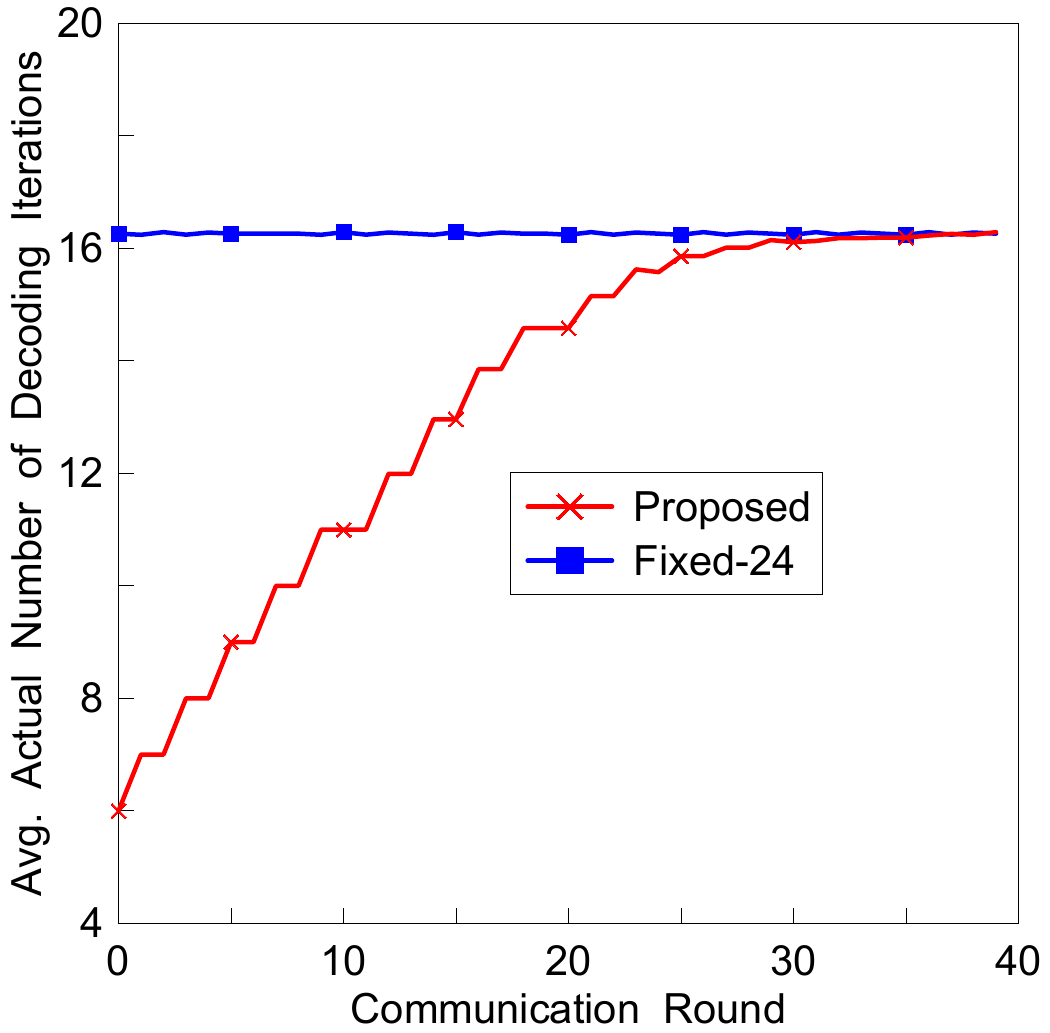}}
  \begin{center}
  (c) Avg. actual number of decoding iterations among clients.
  \end{center}
\end{minipage}
\hfill
\begin{minipage}[b]{0.48\linewidth}
  \centering
  \centerline{\includegraphics[width=4.0cm]{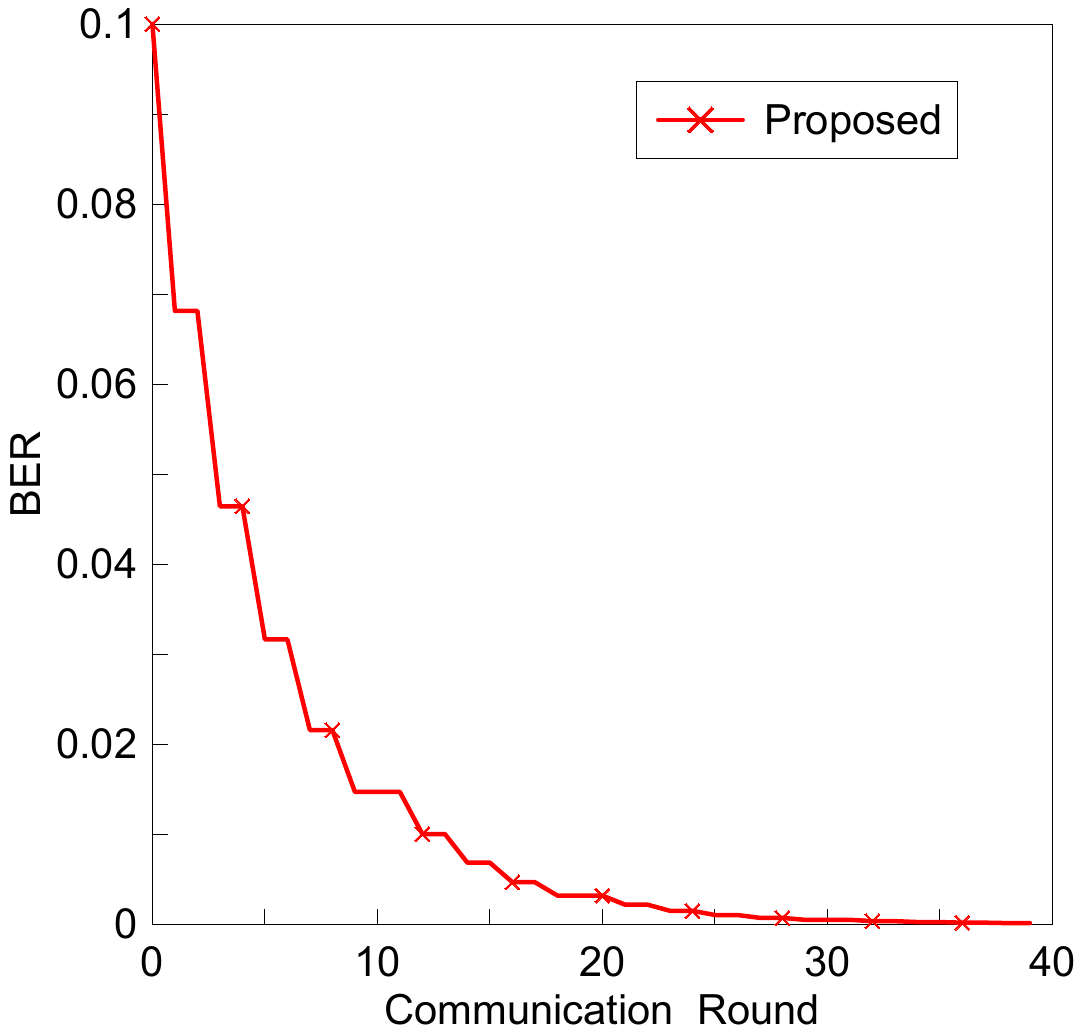}}
  \begin{center}
  (d) BER at every communication round, \\i.e., ($b_{r}$).
  \end{center}
\end{minipage}
\hfill
\begin{minipage}[b]{0.48\linewidth}
  \centering
  \centerline{\includegraphics[width=4.0cm]{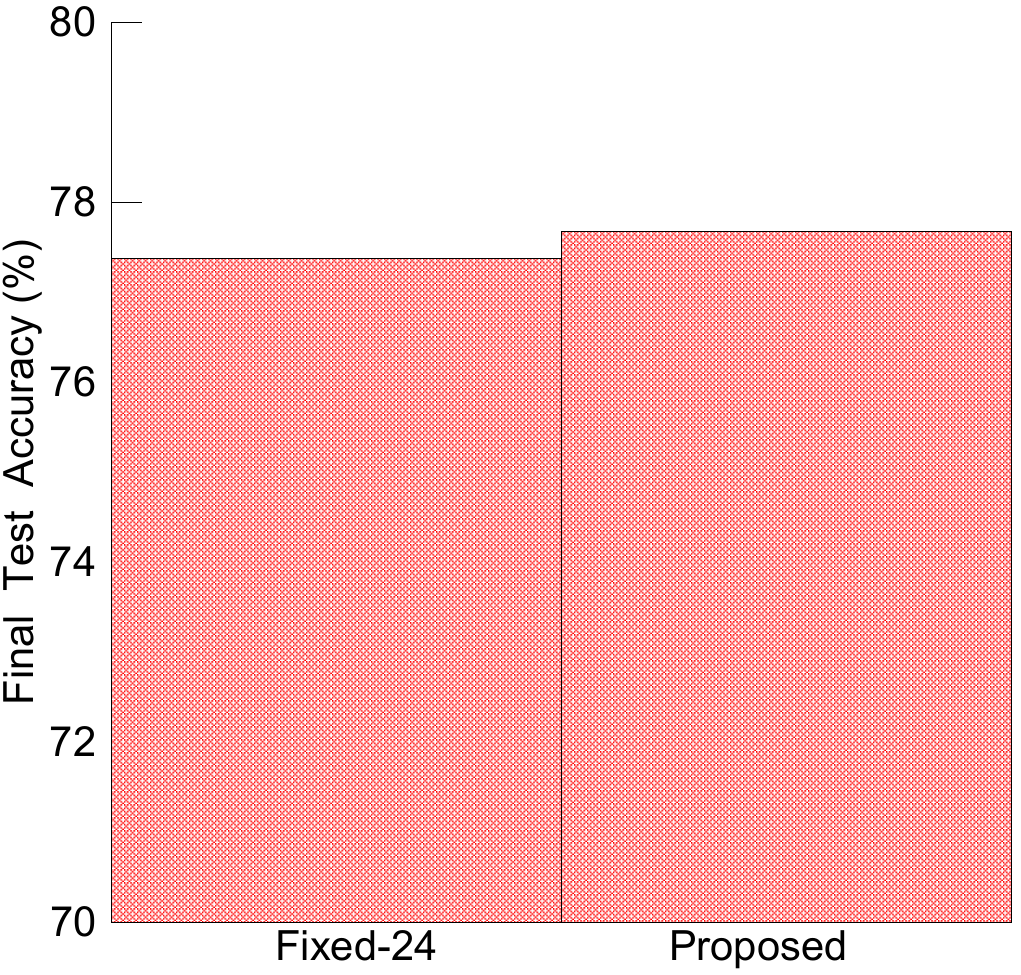}}
  \begin{center}
  (e) Test accuracy at the last communication round.
  \end{center}
\end{minipage}
\hfill
\begin{minipage}[b]{0.48\linewidth}
  \centering
  \centerline{\includegraphics[width=4.0cm]{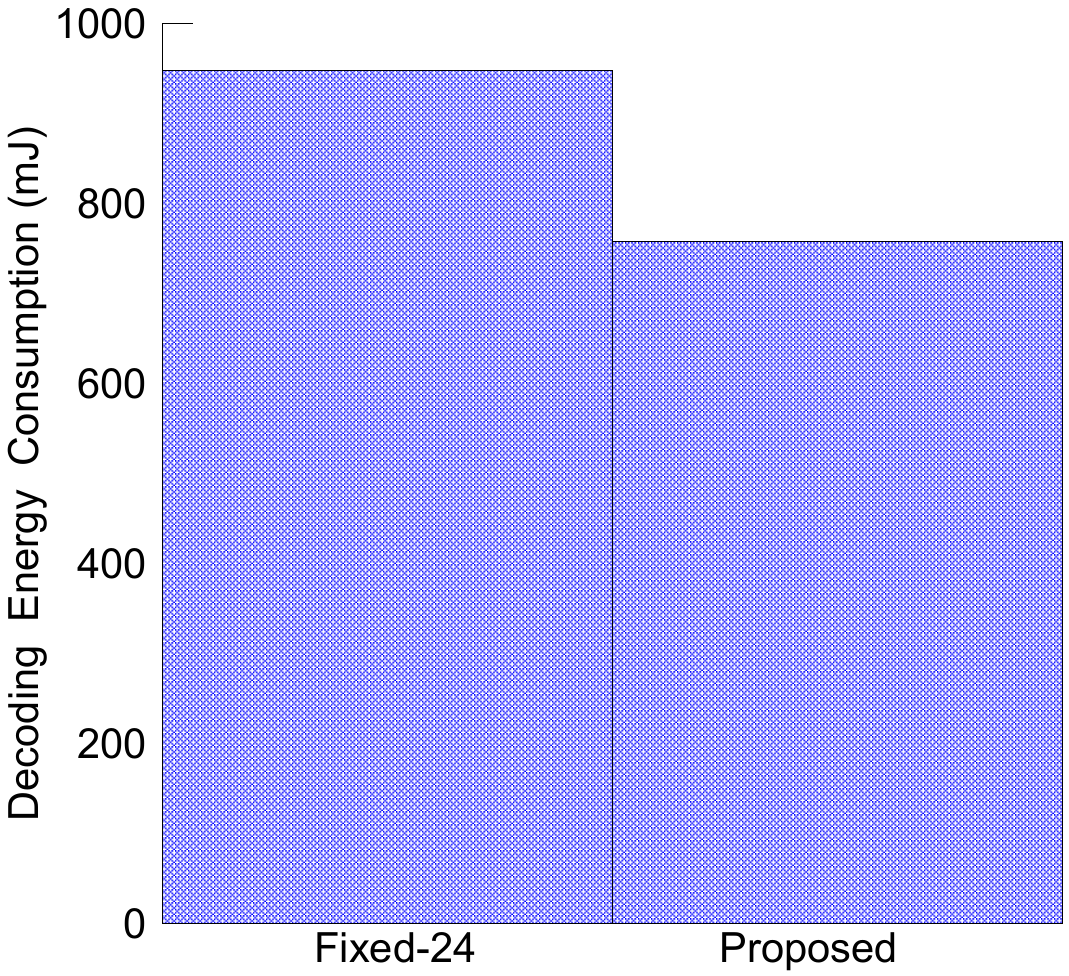}}
  \begin{center}
  (f) Total decoding energy consumption.
  \end{center}
\end{minipage}
\caption{Experimental results on CIFAR-10 (IID). Compared with the baseline scheme, our proposed algorithm that adaptively increases the maximum number of LDPC decoding iterations to 24 after 40 communication rounds can achieve a similar final test accuracy of 77.6\% and reduce the decoding energy by 191~mJ.}
\label{fig:CIFAR-10}
\end{figure}

\begin{figure}[t]
\small
\begin{minipage}[b]{.48\linewidth}
  \centering
  \centerline{\includegraphics[width=4.0cm]{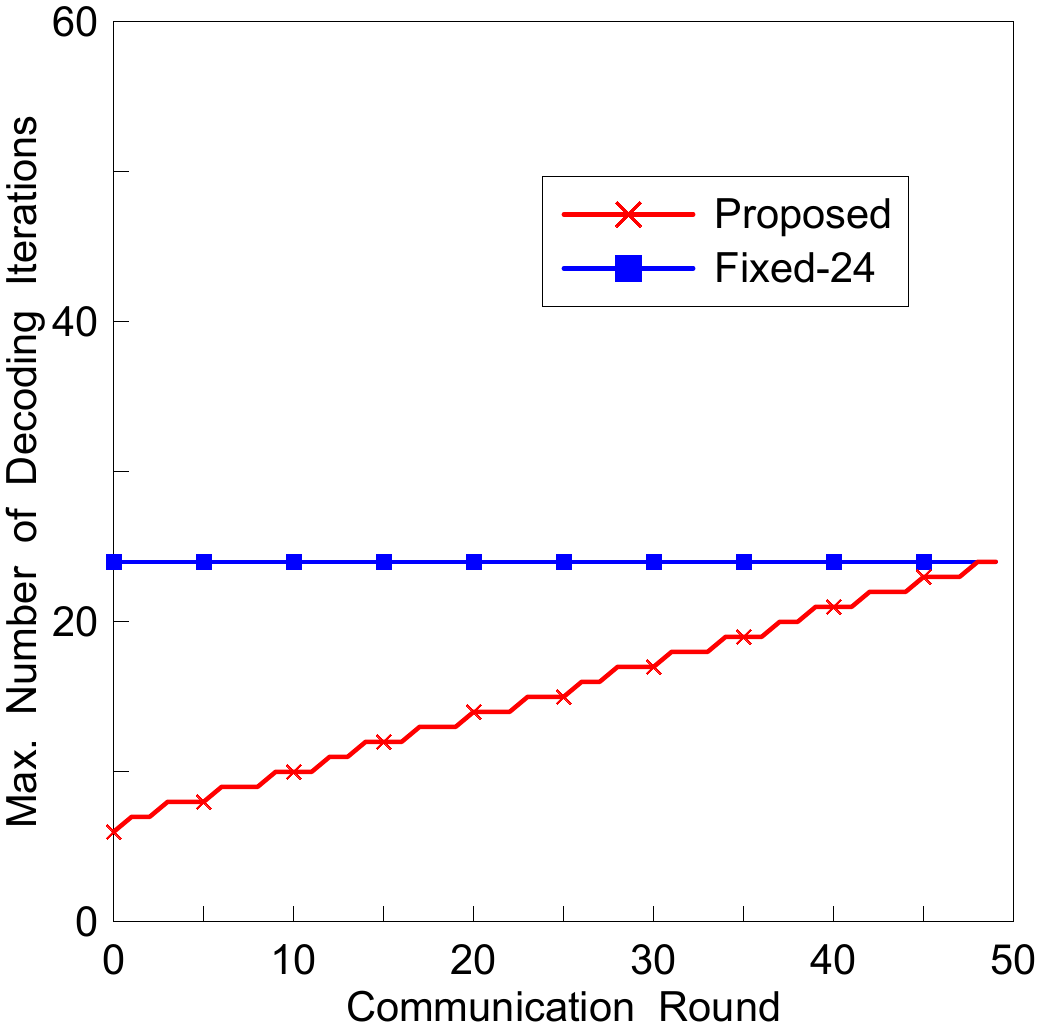}}
  \begin{center}
  (a) $Q_r$.
  \end{center}
\end{minipage}
\hfill
\begin{minipage}[b]{0.48\linewidth}
  \centering
  \centerline{\includegraphics[width=4.0cm]{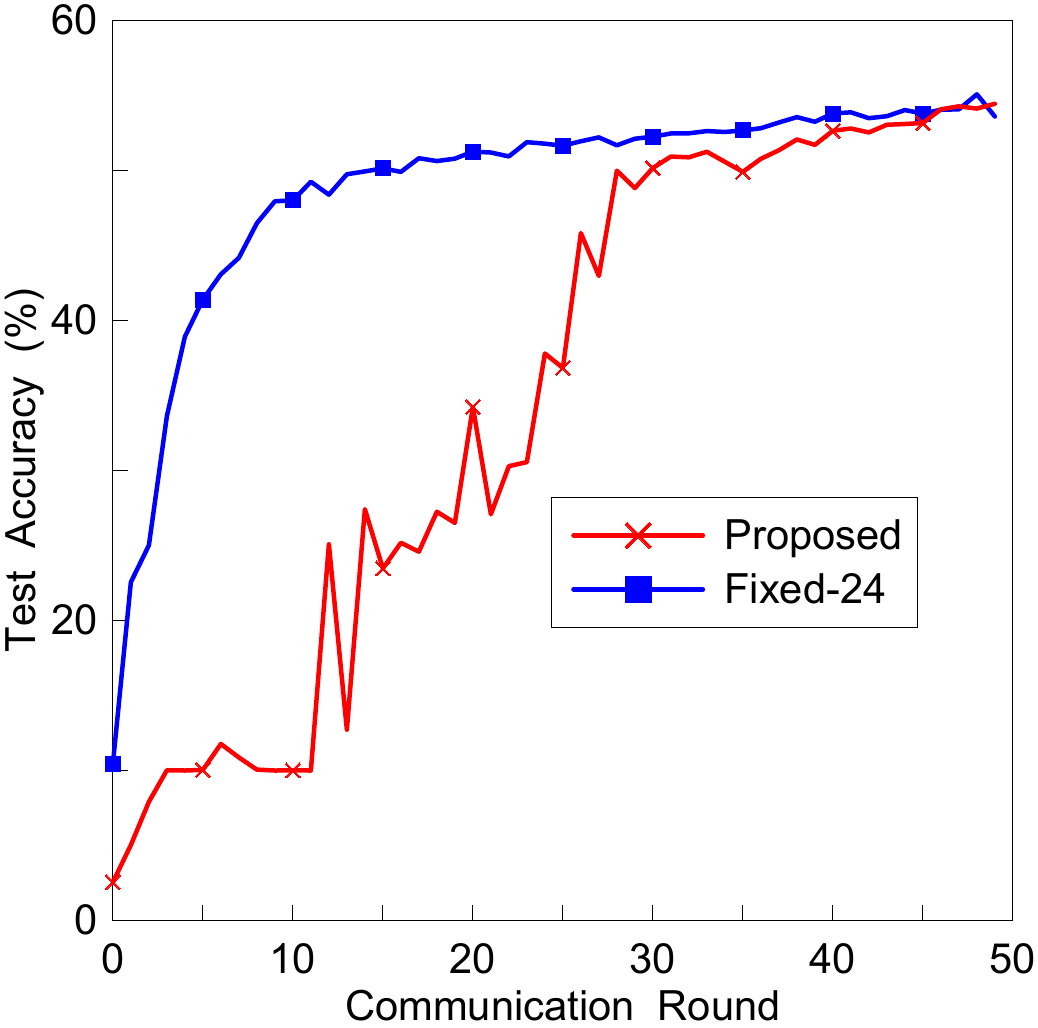}}
  \begin{center}
  (b) Test accuracy.
  \end{center}
\end{minipage}
\hfill
\begin{minipage}[b]{0.48\linewidth}
  \centering
  \centerline{\includegraphics[width=4.0cm]{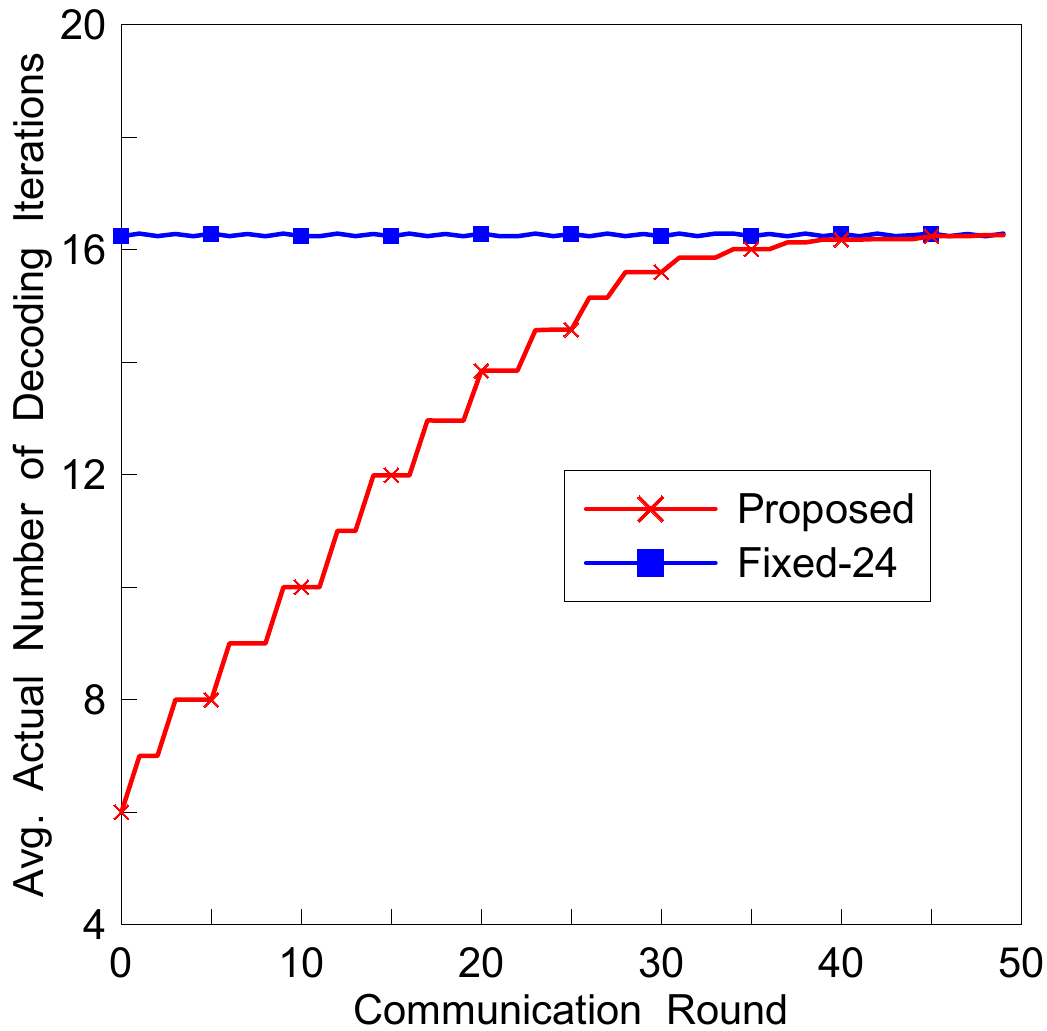}}
  \begin{center}
  (c) Avg. actual number of decoding iterations among clients.
  \end{center}
\end{minipage}
\hfill
\begin{minipage}[b]{0.48\linewidth}
  \centering
  \centerline{\includegraphics[width=4.0cm]{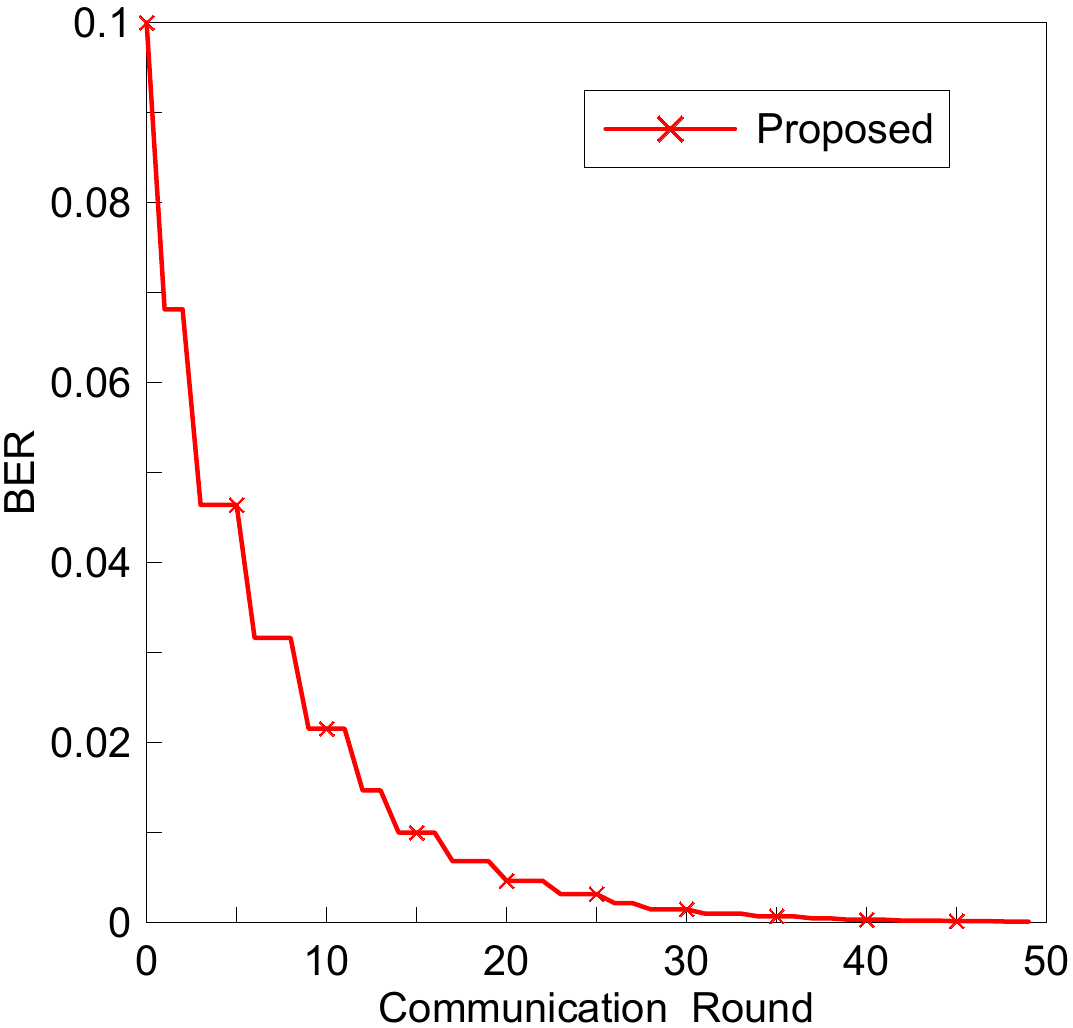}}
  \begin{center}
  (d) BER at every communication round, \\i.e., ($b_{r}$).
  \end{center}
\end{minipage}
\hfill
\begin{minipage}[b]{0.48\linewidth}
  \centering
  \centerline{\includegraphics[width=4.0cm]{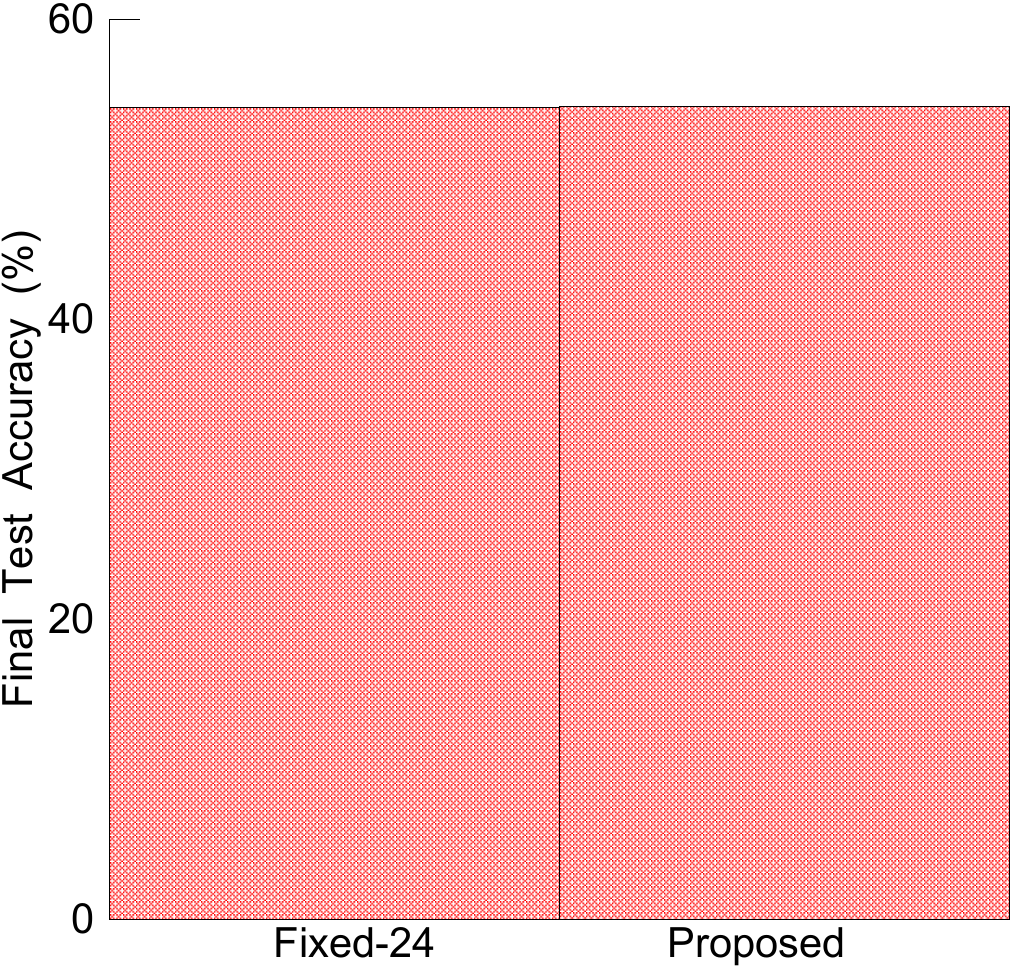}}
  \begin{center}
  (e) Test accuracy at the last communication round.
  \end{center}
\end{minipage}
\hfill
\begin{minipage}[b]{0.48\linewidth}
  \centering
  \centerline{\includegraphics[width=4.0cm]{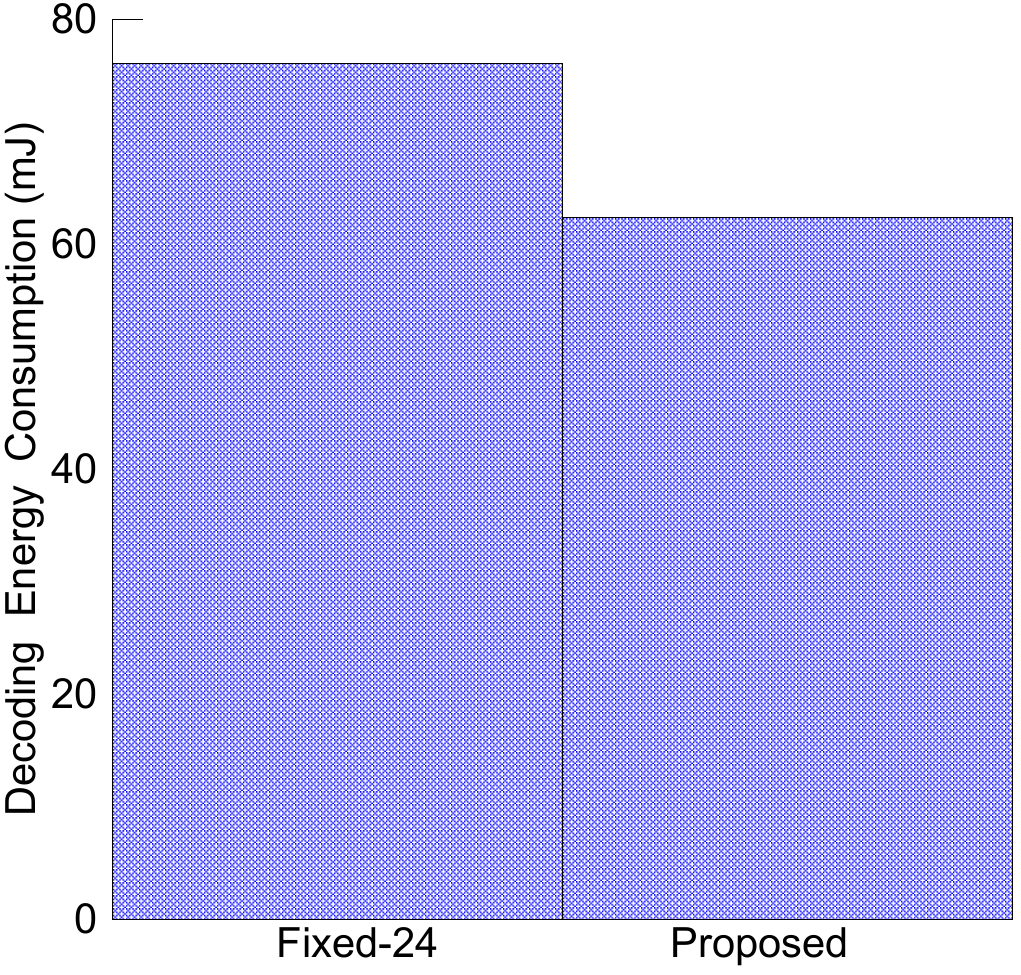}}
  \begin{center}
  (f) Total decoding energy consumption.
  \end{center}
\end{minipage}
\caption{Experimental results on Fashion-MNIST (non-IID). Compared with the baseline scheme, our proposed algorithm that adaptively increases the maximum number of LDPC decoding iterations to 24 after 50 communication rounds can achieve a similar final test accuracy of 54.1\% and reduce the decoding energy by 13.7~mJ.}
\label{fig:noniid-Fashion-MNIST}
\end{figure}
In this subsection, the proposed adaptive LDPC decoding scheme is compared to the baseline scheme.
Specifically, the learning process of the proposed scheme is assigned with an initial BER of $b_0=10^{-1}$ and a final BER of $b_{R-1}=10^{-4}$. The maximum number of LDPC decoding iterations is determined according to its mapping with BER as stated in Section~\ref{Energy Efficient LDPC Decoding Design}. In the experiments, the initial maximum numbers of LDPC decoding iterations are thus set to 6 and gradually increased to 24 at the last communication round. For the baseline ``Fixed-24'' scheme, a maximum of 24 LDPC decoding iterations can be used in each communication round, which corresponds to a BER of $10^{-4}$.

Fig.~\ref{fig:Consumed LDPC decoding iteration}(a) shows an example on the Fashion-MNIST (IID) dataset to illustrate the actual number of decoding iterations of digitalized model parameters at a sample client in a selected communication round. It can observed that only a few frames exhaust all the admissible LDPC decoding iterations, which is set to 52. Note that the decoding energy consumption is proportional to the actual number of decoding iterations. From the distribution of the actual number of LDPC decoding iterations shown in Fig.~\ref{fig:Consumed LDPC decoding iteration}(b), it is observed that 18 decoding iterations are sufficient for 80\% of the frames, implying the significant reduction of decoding energy consumption.

Fig.~\ref{fig:MNIST} summarizes the experimental results on the MNIST (IID) dataset. Specifically, Fig.~\ref{fig:MNIST}(a) shows how the maximum number of LDPC decoding iterations ($Q_{r}$) varies with communication rounds in different schemes. While the baseline scheme ``Fixed-24'' results in a constant BER, the proposed scheme achieves the desired BER by linearly increasing the maximum number of LDPC decoding iterations. We note that this behavior results from the empirical mapping between $Q_r$ and $b_r$ for the used LDPC decoder, which is different from the general theoretical guideline in (\ref{iter_round}).
Fig.~\ref{fig:MNIST}(b) shows the test accuracy achieved by the proposed scheme and the baseline ``Fixed-24'' scheme. 
Compared with the baseline scheme with a fixed number of 24 decoding iterations, the proposed scheme achieves the same test accuracy with 50 communication rounds despite with slower convergence. Fig.~\ref{fig:MNIST}(c) illustrates how the average actual number of decoding iterations at clients over all DNN parameters varies during the training process. The proposed scheme can reduce the average number of decoding iterations and thus the overall decoding energy consumption. Fig.~\ref{fig:MNIST}(d) depicts the BER (i.e., $b_{r}$) of the proposed scheme over communication rounds, which scales as $\mathcal{O}(1/(r+1)^2)$ from $10^{-1}$ to $10^{-4}$. Fig.~\ref{fig:MNIST}(e) shows the final test accuracy of different schemes at the 49-th communication round. It can be observed that the proposed scheme achieves the same accuracy as the baseline schemes (``Fixed-24'') that always maintains a very low BER of $10^{-4}$ throughout the training process. Fig.~\ref{fig:MNIST}(f) illustrates the overall decoding energy consumption at clients under different schemes, which is obtained by multiplying the total number of decoding iterations with the LDPC decoding energy efficiency reported in \cite{ldpc}. Compared to the baseline scheme, the proposed scheme significantly reduces the overall decoding energy consumption by 19.4\%.
\begin{table}[t]
\small
\caption{Summary of Experimental Results. Our scheme can achieve similar accuracy while saving around 20\% of the overall decoding energy consumption compared to the best baseline (``Fixed-$Q$") scheme.}
\centering
\begin{tabular}{|p{2.1cm}|c|c|c|c|}
\hline
&\multicolumn{4}{c|}{\textbf{Final Test Accuracy}} \\
\hline
&{Fixed-18}&{Fixed-24}&{Fixed-52}&{\textbf{Ours}}\\
\hline
{MNIST (IID)}&{96.74\%}&{97.96\%}&{98.06\%}&{98.02\%}\\
\hline
{Fashion-MNIST (IID)}&{86.18\%}&90.63\%&{91.24\%}&{91.16\%}\\
\hline  
{CIFAR-10 (IID)}&{76.72\%}&77.38\%&{77.57\%}&{77.68\%}\\
\hline
Fashion-MNIST (non-IID)&51.3\%&54.1\%&54.3\%&54.2\%\\
\hline
&\multicolumn{4}{c|}{\textbf{Total Decoding Energy Consumption (mJ)}}\\
\hline
&{Fixed-18}&{Fixed-24}&{Fixed-52}&{\textbf{Ours}}\\
\hline
{MNIST (IID)}&{33.87}&{34.51}&{34.68}&{27.81}\\
\hline
{Fashion-MNIST (IID)}&{74.23}&{76.10}&{76.29}&{62.17}\\
\hline
{CIFAR-10 (IID)}&{925.10}&{948.44}&{951.36}&{757.80}\\
\hline
Fashion-MNIST (non-IID)&{74.22}&{76.11}&{76.27}&{62.38}\\
\hline
\end{tabular}
\label{tab:data summary}
\end{table}

Figs.~\ref{fig:Fashion-MNIST}, \ref{fig:CIFAR-10}, and \ref{fig:noniid-Fashion-MNIST} present the experimental results on Fashion-MNIST (IID), CIFAR-10 (IID), and Fashion-MNIST (non-IID) datasets, respectively. The results are similar to those on the MNIST (IID) dataset in Fig.~\ref{fig:MNIST}, showing that adapting the maximum number of LDPC decoding iterations can achieve comparable or even slightly better accuracy compared to the baseline scheme. This observation was also reported in \cite{graves2011practical, hochreiter1994simplifying, neelakantan2015adding}, which suggests that adding annealed noise to the DNN model could enhance its generalization ability. Moreover, the proposed scheme reduces the LDPC decoding energy consumption on the Fashion-MNIST (IID), CIFAR-10 (IID), and Fashion-MNIST (non-IID) datasets by 18.3\%, 20.1\%, and 18.0\%, respectively. Table~\ref{tab:data summary} provides a summary of the experimental results. 

\section{Conclusion}
\label{Conclusion}
In this paper, we considered the energy consumption issue in wireless federated learning (FL) over digital communication networks. Our study recognized that channel decoding contributes to a large portion of the energy consumption at mobile clients in wireless FL systems. To achieve energy-efficient wireless FL, we first show that by properly controlling the bit error rate (BER) over different communication rounds, wireless FL can achieve the same convergence rate as FL with error-free communication. Based on this analysis, we proposed an energy-efficient LDPC decoding scheme to regulate the maximum number of decoding iterations over communication rounds. Experimental results demonstrated that the proposed method can reduce approximately 20\% of the total energy consumed by LDPC decoding, while maintaining the same or even a slightly higher learning accuracy compared to the scheme without adaptive decoding iteration control.

\section*{Appendix}
\label{Appendix}

\subsection{Proof of Lemma \ref{lemma_ber_distortion}}
\label{proof of model error}
We express a decoded model $\tilde{\mathbf{w}}$ in terms of $\mathbf{w}$ as follows:
\begin{equation}
\tilde{\mathbf{w}} = \mathbf{w} + \bm{\delta}\left(\mathbf{w}\right),
\end{equation}
where $\bm{\delta}\left(\mathbf{w}\right)$ is a $D$-dimensional random function of $\mathbf{w}$ denoting the model distortion caused by bit errors. Under Assumption \ref{assumption_ber}, each entry of $\bm{\delta}\left(\mathbf{w}\right)$ depends only on $w_{d}$, i.e., the $d$-th dimension of $\mathbf{w}$. Denote $\delta \left(w_d\right)$ as the $d$-th dimension of $\bm{\delta}\left(\mathbf{w}\right)$. Therefore, the $d$-th dimension of $\tilde{\mathbf{w}}$ is given as
\begin{equation}
\tilde{w}_{d} = w_{d} + \delta \left(w_{d}\right).
\end{equation}
Under Assumptions \ref{assumption_ber} and \ref{onebit_error}, the 1-bit error may occur at different bits, leading to different levels of model distortion. Let $w_{d}^{(i)} \in\{0,1\}$ be the $i$-th digit ($i=0$ and $N-1$ respectively denote the least and most significant bits) in the $N$-bit representation of model parameter $w_{d}$. Hence,
\begin{equation}
\mathbb{E}\left[\tilde{w}_{d} | w_{d} \right] = w_{d} + b \left(1-b\right)^{N-1}\Delta \sum_{i=0}^{N-1}  \left(1 -2 w_{d}^{(i)}\right) 2^{i},
\label{our_mean}
\end{equation}
where $\Delta = \frac{M_{\mathbf{w}}}{2^{N-1}}$. Note that the second term on the RHS of (21) is the non-zero expected bias caused by random bit errors. 
Thus, the mean square error of $w_{d}$ is given as follows:
\begin{equation}
\begin{split}
\mathbb{E}\left[||\tilde{w}_{d}-w_{d}||^2 |w_{d}\right] &= b \left(1-b\right)^{N-1} \sum_{i=0}^{N-1} \left(\Delta 2^{i}\right)^{2}\\
& = b \left(1-b\right)^{N-1} \frac{4^{N} -1}{3(2^{N}-1)^{2}} M_{\mathbf{w}}^{2}.
\end{split}
\label{our_var}
\end{equation} 
Therefore,
\begin{equation}
\begin{split}
\mathbb{E}\left[||\bm{\delta} (\mathbf{w})||^2 | \mathbf{w}\right] &=\mathbb{E}\left[||\mathbf{\tilde{w}}-\mathbf{w}||^2 | \mathbf{w}\right]
\\
&= \sum_{d=1}^D\mathbb{E}\left[||\tilde{w}_d-w_d||^2 | \mathbf{w} \right]
\\&= \frac{D(4^N-1)}{3(2^N-1)^2}\cdot b\left(1-b\right)^{N} M_{\mathbf{w}}^2.
\end{split}
\label{model_bound}
\end{equation}

\subsection{Proof of Lemma~\ref{thm_converge_bound}}
\label{proof_of_theorem_1}
We rewrite (\ref{server_update}) by substituting $\mathbf{w}_{r,0}^{k}=\tilde{\mathbf{w}}_{r}$ as
\begin{align}
\mathbf{w}_{r+1}&=\mathbf{w}_r+\frac{1}{K}\sum_{k=1}^{K}(\mathbf{w}_{r,E}^k-\mathbf{\tilde{w}}_r).
\label{add_notation_1}
\end{align}
To prove the Lemma 2, we first present Lemmas \ref{lemma_equiv_error}$\sim$5.

\begin{lemma}
Based on the definitions of $\mathbf{w}_{r+1}$ and $\mathbf{\bar{w}}_{r,E}$, we have 
\begin{equation}
\mathbf{w}_{r+1}-\mathbf{\bar{w}}_{r,E} =  \mathbf{w}_r-\mathbf{\tilde{w}}_r,\ r = 0,\cdots, R-1.
\label{lemma_equiv_error_formula}
\end{equation}
\label{lemma_equiv_error}
\end{lemma}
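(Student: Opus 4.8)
The plan is to establish (\ref{lemma_equiv_error_formula}) by direct substitution, since it is a pure bookkeeping identity relating the server-side update to the averaged client iterate. First I would recall the rewritten aggregation step (\ref{add_notation_1}), namely $\mathbf{w}_{r+1}=\mathbf{w}_r+\frac{1}{K}\sum_{k=1}^{K}(\mathbf{w}_{r,E}^k-\mathbf{\tilde{w}}_r)$, together with the definition $\mathbf{\bar{w}}_{r,E} \triangleq \frac{1}{K}\sum_{k=1}^{K}\mathbf{w}_{r,E}^k$. Subtracting the latter from the former, the averaged terminal iterates cancel exactly:
\[
\mathbf{w}_{r+1}-\mathbf{\bar{w}}_{r,E}=\mathbf{w}_r+\frac{1}{K}\sum_{k=1}^{K}\left(\mathbf{w}_{r,E}^k-\mathbf{\tilde{w}}_r\right)-\frac{1}{K}\sum_{k=1}^{K}\mathbf{w}_{r,E}^k=\mathbf{w}_r-\frac{1}{K}\sum_{k=1}^{K}\mathbf{\tilde{w}}_r.
\]

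The one observation that completes the argument is that, in the convergence analysis, every client decodes the \emph{same} broadcast global model, i.e., $\mathbf{w}_{r,0}^k=\mathbf{\tilde{w}}_r$ for all $k$ as specified in (\ref{Bw}). Hence $\mathbf{\tilde{w}}_r$ carries no dependence on the client index $k$, and the average $\frac{1}{K}\sum_{k=1}^{K}\mathbf{\tilde{w}}_r$ collapses to $\mathbf{\tilde{w}}_r$. This immediately yields $\mathbf{w}_{r+1}-\mathbf{\bar{w}}_{r,E}=\mathbf{w}_r-\mathbf{\tilde{w}}_r$ for each $r=0,\cdots,R-1$, which is the claim.

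There is no genuine obstacle here; the only point worth flagging is the modelling convention that the decoded models coincide across clients in the analysis (in contrast to the client-specific $\mathbf{\tilde{w}}_r^k$ appearing in \textbf{Algorithm~1}). The real purpose of the lemma is structural: it isolates the per-round communication-error displacement $\mathbf{w}_r-\mathbf{\tilde{w}}_r$ as precisely the gap between the true server model $\mathbf{w}_{r+1}$ and the averaged iterate $\mathbf{\bar{w}}_{r,E}$ that the $L$-smoothness descent argument naturally controls. This lets the later proof of \textbf{Lemma~\ref{thm_converge_bound}} substitute the second-moment characterization of $\mathbf{w}_r-\mathbf{\tilde{w}}_r$ from \textbf{Lemma~\ref{lemma_ber_distortion}}, cleanly decoupling the distortion analysis from the stochastic-gradient-descent bound.
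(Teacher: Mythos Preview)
Your argument is correct and matches the paper's own proof exactly: both simply substitute the definition of $\mathbf{\bar{w}}_{r,E}$ into the rewritten aggregation step (\ref{add_notation_1}) and let the $\mathbf{w}_{r,E}^{k}$ terms cancel, using that $\mathbf{\tilde{w}}_r$ is common to all clients. There is nothing to add.
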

\begin{proof}
The proof is straightforward by substituting the definition of $\mathbf{\bar{w}}_{r,E}$ in (\ref{add_notation_1}) to the LHS of (\ref{lemma_equiv_error_formula}).
\end{proof}
\noindent Lemma \ref{lemma_equiv_error} translates the difference between $\mathbf{w}_{r+1}$ and $\mathbf{\bar{w}}_{r,E}$ to the model distortion incurred by  downlink bit errors.
\begin{lemma}
With Assumption 1, for $r=0,...,R-1$,
\begin{equation}
\begin{split}
\mathbb{E}f(\mathbf{w}_{r+1}) 
&\leq \mathbb{E}f(\mathbf{\bar{w}}_{r,E})+\frac{L}{2}\mathbb{E}||\mathbf{w}_{r+1}-\mathbf{\bar{w}}_{r,E}||^2\\
&-\mathbb{E} \left \langle\nabla f(\mathbf{\bar{w}}_{r,E}), \bm{\delta}(\mathbf{w}_{r})\right \rangle.
\end{split}
\label{bound_loss_global_formula}
\end{equation}
\label{bound_loss_global}
\end{lemma}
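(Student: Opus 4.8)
The plan is to invoke the descent inequality that follows directly from the $L$-smoothness of $f$ (Assumption~\ref{assumption_smoothness}). Specifically, $L$-smoothness implies that for any two points $\mathbf{u},\mathbf{v}\in\mathbb{R}^D$,
\begin{equation}
f(\mathbf{u}) \leq f(\mathbf{v}) + \langle \nabla f(\mathbf{v}), \mathbf{u}-\mathbf{v}\rangle + \frac{L}{2}\|\mathbf{u}-\mathbf{v}\|^2.
\end{equation}
First I would apply this with $\mathbf{u}=\mathbf{w}_{r+1}$ and $\mathbf{v}=\mathbf{\bar{w}}_{r,E}$, which yields a bound on $f(\mathbf{w}_{r+1})$ in terms of $f(\mathbf{\bar{w}}_{r,E})$, a first-order inner-product term, and the quadratic term $\frac{L}{2}\|\mathbf{w}_{r+1}-\mathbf{\bar{w}}_{r,E}\|^2$ that already appears verbatim in the target inequality.

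The key step is to simplify the inner-product term using Lemma~\ref{lemma_equiv_error}, which states $\mathbf{w}_{r+1}-\mathbf{\bar{w}}_{r,E}=\mathbf{w}_r-\mathbf{\tilde{w}}_r$. Recalling from the proof of Lemma~\ref{lemma_ber_distortion} that $\mathbf{\tilde{w}}_r=\mathbf{w}_r+\bm{\delta}(\mathbf{w}_r)$, the displacement reduces to $\mathbf{w}_{r+1}-\mathbf{\bar{w}}_{r,E}=-\bm{\delta}(\mathbf{w}_r)$. Substituting this identity turns $\langle \nabla f(\mathbf{\bar{w}}_{r,E}), \mathbf{w}_{r+1}-\mathbf{\bar{w}}_{r,E}\rangle$ into $-\langle \nabla f(\mathbf{\bar{w}}_{r,E}), \bm{\delta}(\mathbf{w}_r)\rangle$, which matches the last term of the statement exactly; note that leaving the quadratic term expressed as $\frac{L}{2}\|\mathbf{w}_{r+1}-\mathbf{\bar{w}}_{r,E}\|^2$ (rather than rewriting it via $\bm{\delta}$) is what produces the stated form.

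Finally, I would take expectations on both sides of the resulting deterministic inequality; by linearity of expectation this directly produces (\ref{bound_loss_global_formula}). I do not anticipate any genuine obstacle here, as the argument is a one-line application of the descent lemma followed by the substitution from Lemma~\ref{lemma_equiv_error}. The only point requiring care is the sign, namely that $\mathbf{w}_r-\mathbf{\tilde{w}}_r=-\bm{\delta}(\mathbf{w}_r)$ rather than $+\bm{\delta}(\mathbf{w}_r)$, since flipping it would mis-sign the bias term $\mathbb{E}\langle \nabla f(\mathbf{\bar{w}}_{r,E}), \bm{\delta}(\mathbf{w}_r)\rangle$ that, unlike in the analog-noise setting, is non-zero and must be retained for the subsequent convergence argument.
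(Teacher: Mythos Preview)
Your proposal is correct and matches the paper's approach exactly: the paper states that the proof follows directly from the $L$-smoothness of $f(\cdot)$ and omits the details, and your expansion via the descent inequality together with Lemma~\ref{lemma_equiv_error} (to rewrite $\mathbf{w}_{r+1}-\mathbf{\bar{w}}_{r,E}=-\bm{\delta}(\mathbf{w}_r)$) is precisely that argument. Your attention to the sign of $\bm{\delta}(\mathbf{w}_r)$ is well placed and consistent with how the term is subsequently used in the proof of Lemma~\ref{thm_converge_bound}.
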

\begin{proof}
The proof can be obtained using the $L$-smoothness property of $f(\cdot)$, which is omitted for brevity.
\end{proof}
\noindent Lemma \ref{bound_loss_global} provides an upper bound on the expected value of the global loss function at $\mathbf{w}_{r+1}$.
\begin{lemma}
With Assumptions 1 - 3, for $r=0,\cdots,R-1$,
\begin{equation}
\begin{split}
&\mathbb{E}f(\mathbf{\bar{w}}_{r,E}) \leq \mathbb{E}f(\mathbf{w}_r) +\frac{L}{2}\mathbb{E}\Vert \bm{\delta} (\mathbf{w}_{r}) \Vert^2 +\mathbb{E}\left \langle\nabla f(\mathbf{w}_r), \bm{\delta}(\mathbf{w}_{r})\right \rangle \\
&-\frac{\eta}{2K}\left[1-L\eta-2L^2\eta^2E(E-1)\right]\sum_{k=1}^{K}\sum_{e=0}^{E-1}\mathbb{E}\Vert \nabla f(\mathbf{w}_{r,e}^k)\Vert^2 \\
&-\frac{\eta}{2}\sum_{e=0}^{E-1}\mathbb{E}\Vert \nabla f(\mathbf{\bar{w}}_{r,e})\Vert^2+\frac{LE\eta^2(\sigma_L^2+\sigma_G^2)}{2K} \\
& +\frac{L^2\eta^3\sigma_L^2(K+1)E(E-1)}{2K}.
\label{bound_loss_global_bit_error_formula}
\end{split}
\end{equation}
\label{bound_loss_global_bit_error}
\end{lemma}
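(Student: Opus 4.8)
The plan is to split each round into a \emph{decoding jump} followed by $E$ local SGD steps, and to bound the loss increment of each using the $L$-smoothness of Assumption~\ref{assumption_smoothness}. Since (\ref{Bw}) makes every client start local training from the same decoded model, $\bar{\mathbf{w}}_{r,0}=\tilde{\mathbf{w}}_r=\mathbf{w}_r+\bm{\delta}(\mathbf{w}_r)$; thus the round factors through $\mathbf{w}_r\to\bar{\mathbf{w}}_{r,0}\to\bar{\mathbf{w}}_{r,E}$. For the jump, applying $L$-smoothness at $\mathbf{w}_r$ with increment $\bm{\delta}(\mathbf{w}_r)$ gives $f(\tilde{\mathbf{w}}_r)\le f(\mathbf{w}_r)+\langle\nabla f(\mathbf{w}_r),\bm{\delta}(\mathbf{w}_r)\rangle+\frac{L}{2}\Vert\bm{\delta}(\mathbf{w}_r)\Vert^2$. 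Taking expectations reproduces the first three terms of (\ref{bound_loss_global_bit_error_formula}); crucially the linear term is \emph{kept}, because $\mathbb{E}[\bm{\delta}(\mathbf{w}_r)\mid\mathbf{w}_r]\ne 0$ as noted after Lemma~\ref{lemma_ber_distortion}.

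For the local steps I would telescope the per-step descent $f(\bar{\mathbf{w}}_{r,e+1})\le f(\bar{\mathbf{w}}_{r,e})+\langle\nabla f(\bar{\mathbf{w}}_{r,e}),\bar{\mathbf{w}}_{r,e+1}-\bar{\mathbf{w}}_{r,e}\rangle+\frac{L}{2}\Vert\bar{\mathbf{w}}_{r,e+1}-\bar{\mathbf{w}}_{r,e}\Vert^2$ over $e=0,\dots,E-1$, using $\bar{\mathbf{w}}_{r,e+1}-\bar{\mathbf{w}}_{r,e}=-\frac{\eta}{K}\sum_k\nabla f_k(\mathbf{w}^k_{r,e};\xi^k_{r,e})$. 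Conditioning on the iterates and invoking unbiasedness (Assumption~\ref{assumption_unbiased}) replaces the stochastic gradient by $\nabla f_k(\mathbf{w}^k_{r,e})$ in the inner product; the polarization identity $-\langle a,b\rangle=\frac12(\Vert a-b\Vert^2-\Vert a\Vert^2-\Vert b\Vert^2)$ then isolates the descent term $-\frac{\eta}{2}\Vert\nabla f(\bar{\mathbf{w}}_{r,e})\Vert^2$, a negative second-order term in the averaged local gradients, and a drift residual $\frac{\eta}{2}\Vert\nabla f(\bar{\mathbf{w}}_{r,e})-\frac1K\sum_k\nabla f_k(\mathbf{w}^k_{r,e})\Vert^2$. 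For the curvature term I would use $\mathbb{E}\Vert X\Vert^2=\Vert\mathbb{E}X\Vert^2+\mathbb{E}\Vert X-\mathbb{E}X\Vert^2$, so that cross-client independence of the mini-batches bounds the stochastic part by $\frac{L\eta^2\sigma_L^2}{2K}$ per step (summing to the $\frac{LE\eta^2\sigma_L^2}{2K}$ piece of the $\sigma_L^2+\sigma_G^2$ term), while its mean part merges with the negative second-order term above.

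The crux is the drift residual. Writing $\nabla f=\frac1K\sum_k\nabla f_k$ and applying Jensen with smoothness bounds it by $\frac{L^2}{K}\sum_k\mathbb{E}\Vert\bar{\mathbf{w}}_{r,e}-\mathbf{w}^k_{r,e}\Vert^2$; since each client departs from the common $\tilde{\mathbf{w}}_r$, the deviation $\mathbf{w}^k_{r,e}-\bar{\mathbf{w}}_{r,e}$ is an accumulation of at most $e$ local stochastic gradient steps, and unrolling this recursion under Assumptions~\ref{assumption_unbiased}--\ref{assumption_dissimilarity} yields the $E(E-1)$ and $(K+1)$ factors together with the $\sigma_G^2$ and residual $\sigma_L^2$ contributions whenever a local gradient is exchanged for the global one through the dissimilarity bound. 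Summing every per-step inequality over $e$ and gathering all second-order gradient terms should then collapse the averaged-local, curvature, and unrolled-drift pieces into the single coefficient $-\frac{\eta}{2K}[1-L\eta-2L^2\eta^2E(E-1)]$ multiplying $\sum_k\sum_e\mathbb{E}\Vert\nabla f(\mathbf{w}^k_{r,e})\Vert^2$. I expect the main obstacle to be precisely this final bookkeeping: keeping the signs straight so that the positive drift and curvature contributions are absorbed into the (negative) per-client gradient term to form the bracketed factor --- whose nonnegativity is exactly condition (\ref{smalleta}) --- while ensuring that the $\sigma_L^2$, $\sigma_G^2$, $E(E-1)$, and $(K+1)$ constants match the stated right-hand side.
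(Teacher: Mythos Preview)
Your proposal is correct and matches the paper's approach: the paper's own proof of this lemma is simply a pointer to Lemma~7 of \cite{reisizadeh2020fedpaq} with Assumption~\ref{assumption_dissimilarity} added for non-IID data, and the outline you give---one $L$-smoothness step for the decoding jump $\mathbf{w}_r\to\tilde{\mathbf{w}}_r=\bar{\mathbf{w}}_{r,0}$ (yielding the three $\bm{\delta}$ terms with the bias kept), then a telescoped per-step descent with polarization and a client-drift recursion to assemble the bracketed coefficient---is exactly that argument specialized to the present setting. The bookkeeping you flag as the main obstacle is indeed where the care lies, but the structure and the ingredients are right.
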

\begin{proof}
The proof can be obtained following similar lines for Lemma 7 in \cite{reisizadeh2020fedpaq} by further considering the non-IID data distribution among clients (i.e., Assumption 3), which is omitted for brevity.
\end{proof}
\noindent Lemma \ref{bound_loss_global_bit_error} provides an upper bound on the expected value of the global loss function at $\bar{\mathbf{w}}_{r,E}$ in terms of the expected model distortion caused by downlink bit errors, i.e., $\mathbb{E}\Vert \bm{\delta} (\mathbf{w}_{r}) \Vert^2$.

By applying Lemmas \ref{lemma_equiv_error} and \ref{bound_loss_global_bit_error} to Lemma \ref{bound_loss_global}, we obtain the following inequality:
\begin{equation}
\begin{split}
&\mathbb{E}f(\mathbf{w}_{r+1}) \leq \mathbb{E}f(\mathbf{w}_r)+L\mathbb{E}\Vert \bm{\delta} (\mathbf{w}_{r}) \Vert^2 \\
&+\mathbb{E} \left\langle \nabla f(\mathbf{w}_r)-\nabla f(\mathbf{\bar{w}}_{r,E}), \bm{\delta}(\mathbf{w}_{r})\right\rangle -\frac{\eta}{2}\sum_{e=0}^{E-1}\mathbb{E}\Vert \nabla f(\mathbf{\bar{w}}_{r,e})\Vert^2 \\
&-\frac{\eta}{2K} \left[1-L\eta-2L^2\eta^2E(E-1)\right] \sum_{k=1}^{K}\sum_{e=0}^{E-1}\mathbb{E}\Vert \nabla f(\mathbf{w}_{r,e}^k)\Vert^2 \\
&+ \frac{LE\eta^2(\sigma_L^2+\sigma_G^2)}{2K} +\frac{L^2\eta^3\sigma_L^2(K+1)E(E-1)}{2K} .
\end{split}
\label{two_round}
\end{equation}
By leveraging  $2\left\langle \mathbf{a}, \mathbf{b} \right\rangle=\Vert \mathbf{a} \Vert^2+\Vert \mathbf{b} \Vert^2-\Vert \mathbf{a}-\mathbf{b}\Vert^2\leq \Vert \mathbf{a} \Vert^2+\Vert \mathbf{b} \Vert^2$ for any $\mathbf{a}$, $\mathbf{b}$, an upper bound for the third term on the RHS of (\ref{two_round}) is derived as follows:
\begin{equation}
\begin{split}
&\mathbb{E}\left\langle \nabla f(\mathbf{w}_r)-\nabla f(\mathbf{\bar{w}}_{r,E}), \bm{\delta}(\mathbf{w}_{r})\right\rangle \\
& \leq \frac{\mathbb{E} \Vert \nabla f(\mathbf{w}_r)-\nabla f(\mathbf{\bar{w}}_{r,E}) \Vert^2}{2}+\frac{\mathbb{E} \Vert \bm{\delta}(\mathbf{w}_{r}) \Vert ^2}{2}\\
&\overset{(\text{a})}{\leq} \frac{L^2}{2}\mathbb{E}\Vert \mathbf{\bar{w}}_{r,E}-\mathbf{w}_r \Vert^2+ \frac{\mathbb{E} \Vert \bm{\delta}(\mathbf{w}_{r}) \Vert ^2}{2}\\
&=\frac{L^2}{2} \mathbb{E}\bigg \Vert \frac{1}{K}\sum_{k =1}^{K} (\mathbf{w}_{r,E}^k -\mathbf{w}_r) \bigg \Vert^2+\frac{\mathbb{E} \Vert \bm{\delta}(\mathbf{w}_{r}) \Vert ^2}{2} \\
&=\frac{L^2}{2} \mathbb{E}\bigg \Vert \frac{1}{K}\sum_{k =1}^{K} (\mathbf{w}_{r,E}^k -\tilde{\mathbf{w}}_r) + \tilde{\mathbf{w}}_r - \mathbf{w}_r \bigg \Vert^2+\frac{\mathbb{E} \Vert \bm{\delta}(\mathbf{w}_{r}) \Vert ^2}{2} \\
&\overset{(\text{b})}{=}\frac{L^2}{2} \mathbb{E}\bigg \Vert \frac{\eta}{K}\sum_{k =1}^{K} \sum_{e=0}^{E-1} \nabla f_k (\mathbf{w}_{r,e}^{k};\xi_{r,e}^{k}) + \tilde{\mathbf{w}}_r - \mathbf{w}_r \bigg \Vert^2\\
&\ \ \ +\frac{\mathbb{E} \Vert \bm{\delta}(\mathbf{w}_{r}) \Vert ^2}{2}		
\end{split}
\label{inner_a}
\end{equation}
where (a) is due to the $L$-smoothness property of $f(\cdot)$ and (b) applies the local updating rules in (\ref{receive}).
Based on the Assumptions \ref{assumption_unbiased} and \ref{assumption_dissimilarity}, we have
\begin{equation}
\begin{split}
\mathbb{E} \Vert \nabla & f_k(\mathbf{w}_{r,e}^k  ; \xi_{r,e}^{k})  \Vert^2 = \mathbb{E} \Vert\nabla f_k(\mathbf{w}_{r,e}^k ; \xi_{r,e}^{k}) \\
&-\nabla f_k(\mathbf{w})+\nabla f_k(\mathbf{w})-\nabla f(\mathbf{w})+\nabla f(\mathbf{w})\Vert^2\\
&\leq \sigma_L^2+\sigma_G^2+ \mathbb{E} \Vert \nabla f(\mathbf{w})\Vert^2.
\end{split}
\label{sigma_L_G}
\end{equation}
By combing (\ref{sigma_L_G}) and (\ref{inner_a}), the bound in (\ref{inner_a}) can be further updated as
\begin{equation}
\begin{split}
&\mathbb{E}\left\langle \nabla f(\mathbf{w}_r)-\nabla f(\mathbf{\bar{w}}_{r,E}), \bm{\delta}(\mathbf{w}_{r})\right\rangle \\
&\leq \frac{L^2\eta^{2}}{2K^{2}} \sum_{k =1}^{K} \sum_{e=0}^{E-1} \mathbb{E} \Vert \nabla f_k (\mathbf{w}_{r,e}^{k};\xi_{r,e}^{k}) \Vert ^2 + \frac{L^{2}}{2}\mathbb{E}  \Vert \tilde{\mathbf{w}}_r - \mathbf{w}_r  \Vert^2\\
&\ \ \ +\frac{\mathbb{E} \Vert \bm{\delta}(\mathbf{w}_{r}) \Vert ^2}{2} \\
&\leq \frac{L^2\eta^{2}}{2K^{2}} \sum_{k =1}^{K} \sum_{e=0}^{E-1} \mathbb{E} \Vert \nabla f (\mathbf{w}_{r,e}^{k}) \Vert ^2 + \frac{L^{2}\eta^{2}E(\sigma_{G}^{2}+\sigma_{L}^{2})}{2K}\\
&\ \ \ +\frac{(L^2+1)\mathbb{E} \Vert \bm{\delta}(\mathbf{w}_{r}) \Vert ^2}{2}.
\label{inner}
\end{split}
\end{equation}
 
\noindent Substituting (\ref{inner}) in (\ref{two_round}), we have
\begin{equation}
\begin{split}
&\mathbb{E}f(\mathbf{w}_{r+1}) \leq \mathbb{E}f(\mathbf{w}_r) + \frac{(L+1)^2\mathbb{E} \Vert \bm{\delta}(\mathbf{w}_{r}) \Vert ^2}{2} \\
&-\frac{\eta}{2K} \left[1-\frac{K+L}{K}L\eta-2L^2\eta^2E(E-1)\right] \\
&\cdot \sum_{k=1}^{K}\sum_{e=0}^{E-1}\mathbb{E}\Vert \nabla f(\mathbf{w}_{r,e}^k)\Vert^2 -\frac{\eta}{2}\sum_{e=0}^{E-1}\mathbb{E}\Vert \nabla f(\mathbf{\bar{w}}_{r,e})\Vert^2 \\
&+\frac{L(L+1)E\eta^2(\sigma_L^2+\sigma_G^2)}{2K} +\frac{L^2\eta^3\sigma_L^2(K+1)E(E-1)}{2K} .
\end{split}
\label{two_round_full}
\end{equation}
For any $\eta$ that satisfies (\ref{smalleta}), we have
\begin{equation}
\begin{split}
&\mathbb{E}f(\mathbf{w}_{r+1}) \leq \mathbb{E}f(\mathbf{w}_r)+ \frac{(L+1)^2\mathbb{E} \Vert \bm{\delta}(\mathbf{w}_{r}) \Vert ^2}{2}  \\
&-\frac{\eta}{2}\sum_{e=0}^{E-1}\mathbb{E}\Vert \nabla f(\mathbf{\bar{w}}_{r,e})\Vert^2 \\
&+\frac{L(L+1)E\eta^2(\sigma_L^2+\sigma_G^2)}{2K} +\frac{L^2\eta^3\sigma_L^2(K+1)E(E-1)}{2K} .
\end{split}
\label{tosum}
\end{equation}
Summing (\ref{tosum}) over $r=0,\cdots,R-1$, rearranging the terms, and multiplying $\frac{2}{\eta RE}$ on both sides, we have
\begin{equation}
\begin{split}
&\frac{1}{RE}\sum_{r=0}^{R-1}\sum_{e=0}^{E-1}\mathbb{E}\Vert \nabla f(\mathbf{\bar{w}}_{r,e})\Vert^2 \leq \frac{2(\mathbb{E} f(\mathbf{w}_0)-\mathbb{E} f(\mathbf{w}_{R}))}{\eta RE} \\
&+\frac{(L+1)^2}{\eta RE}\sum_{r=0}^{R-1}\mathbb{E} \Vert\bm{\delta}(\mathbf{w}_{r})\Vert^{2} \\
&+\frac{L(L+1)\eta(\sigma_L^2+\sigma_G^2)}{K}+\frac{L^2\eta^2\sigma_L^2(K+1)(E-1)}{K} .
\end{split}
\label{our_full}
\end{equation}
By substituting (\ref{model_bound}) into (\ref{our_full}), and denote $f^{*}$ as the minimum global training loss, we have
\begin{align}
&\frac{1}{RE}\sum_{r=0}^{R-1}\sum_{e=0}^{E-1}\mathbb{E}\Vert \nabla f(\mathbf{\bar{w}}_{r,e})\Vert^2 \leq \frac{2(\mathbb{E} f(\mathbf{w}_0)-f^*)}{\eta RE} \nonumber\\
&+\frac{(L+1)^2D(4^N-1)}{3(2^N-1)^2\eta RE}\sum_{r=0}^{R-1}b_r(1-b_r)^{N-1}M_r^2 \nonumber\\
&+\frac{L(L+1)\eta(\sigma_L^2+\sigma_G^2)}{K}+\frac{L^2\eta^2\sigma_L^2(K+1)(E-1)}{K}.
\label{convergence_ber}
\end{align}
Finally, based on Assumption~{\ref{range_bound}}, we are able to show (\ref{convergence_ber_bound}) in Lemma~\ref{thm_converge_bound}.

\bibliography{IEEEabrv,refs}

\begin{thebibliography}{10}
\providecommand{\url}[1]{#1}
\csname url@samestyle\endcsname
\providecommand{\newblock}{\relax}
\providecommand{\bibinfo}[2]{#2}
\providecommand{\BIBentrySTDinterwordspacing}{\spaceskip=0pt\relax}
\providecommand{\BIBentryALTinterwordstretchfactor}{4}
\providecommand{\BIBentryALTinterwordspacing}{\spaceskip=\fontdimen2\font plus
\BIBentryALTinterwordstretchfactor\fontdimen3\font minus \fontdimen4\font\relax}
\providecommand{\BIBforeignlanguage}[2]{{%
\expandafter\ifx\csname l@#1\endcsname\relax
\typeout{** WARNING: IEEEtran.bst: No hyphenation pattern has been}%
\typeout{** loaded for the language `#1'. Using the pattern for}%
\typeout{** the default language instead.}%
\else
\language=\csname l@#1\endcsname
\fi
#2}}
\providecommand{\BIBdecl}{\relax}
\BIBdecl

\bibitem{qu2024robust}
L.~Qu, S.~Song, C.-Y. Tsui, and Y.~Mao, ``How robust is federated learning to communication error? a comparison study between uplink and downlink channels,'' in \emph{Proc. IEEE Wireless Commun. Netw. Conf. (WCNC)}, Dubai, UAE, April. 2024.

\bibitem{lecun2015deep}
Y.~LeCun, Y.~Bengio, and G.~Hinton, ``Deep learning,'' \emph{Nat.}, vol. 521, no. 7553, pp. 436--444, May. 2015.

\bibitem{yang2019federated}
Q.~Yang, Y.~Liu, T.~Chen, and Y.~Tong, ``Federated machine learning: Concept and applications,'' \emph{ACM Trans. Intell. Syst. Technol.}, vol.~10, no.~2, pp. 1--19, Jan. 2019.

\bibitem{mcmahan2017communication}
B.~McMahan, E.~Moore, D.~Ramage, S.~Hampson, and B.~A. y~Arcas, ``Communication-efficient learning of deep networks from decentralized data,'' in \emph{Proc. Int. Conf. Artif. Intell. Stat. (AISTATS)}, Fort Lauderdale, FL, USA, Apr. 2017.

\bibitem{chai2019towards}
Z.~Chai, H.~Fayyaz, Z.~Fayyaz, A.~Anwar, Y.~Zhou, N.~Baracaldo, H.~Ludwig, and Y.~Cheng, ``Towards taming the resource and data heterogeneity in federated learning.'' in \emph{Proc. USENIX Conf. Oper. Mach. Learn. (OpML)}, Santa Clara, CA, USA, May. 2019.

\bibitem{xie2023fedkl}
Z.~Xie and S.~Song, ``{FedKL: Tackling data heterogeneity in federated reinforcement learning by penalizing KL divergence},'' \emph{IEEE J. Sel. Areas Commun.}, vol.~41, no.~4, pp. 1227--1242, Apr. 2023.

\bibitem{chen2021communication}
M.~Chen, N.~Shlezinger, H.~V. Poor, Y.~C. Eldar, and S.~Cui, ``Communication-efficient federated learning,'' \emph{Proc. Natl. Acad. Sci.}, vol. 118, no.~17, p. e2024789118, Mar. 2021.

\bibitem{mao2023green}
Y.~Mao, X.~Yu, K.~Huang, Y.-J.~A. Zhang, and J.~Zhang, ``{Green Edge AI: A Contemporary Survey},'' \emph{Proc. IEEE}, to appear.

\bibitem{kim2021autofl}
Y.~G. Kim and C.-J. Wu, ``{AutoFL: Enabling heterogeneity-aware energy efficient federated learning},'' in \emph{Proc. IEEE/ACM Int. Symp. Micro. (MICRO)}, New York, NY, USA, Oct. 2021.

\bibitem{tran2019federated}
N.~H. Tran, W.~Bao, A.~Zomaya, M.~N. Nguyen, and C.~S. Hong, ``Federated learning over wireless networks: Optimization model design and analysis,'' in \emph{Proc. IEEE Conf. Comput. Commun. (INFOCOM)}, Paris, France, Apr. 2019.

\bibitem{yang2020energy}
Z.~Yang, M.~Chen, W.~Saad, C.~S. Hong, and M.~Shikh-Bahaei, ``Energy efficient federated learning over wireless communication networks,'' \emph{IEEE Trans. Wireless Commun.}, vol.~20, no.~3, pp. 1935--1949, Mar. 2021.

\bibitem{chen2020joint}
M.~Chen, Z.~Yang, W.~Saad, C.~Yin, H.~V. Poor, and S.~Cui, ``A joint learning and communications framework for federated learning over wireless networks,'' \emph{IEEE Trans. Wireless Commun.}, vol.~20, no.~1, pp. 269--283, Jan. 2021.

\bibitem{han2020adaptive}
P.~Han, S.~Wang, and K.~K. Leung, ``Adaptive gradient sparsification for efficient federated learning: An online learning approach,'' in \emph{Proc. Int. Conf. Distrib. Comput. Syst. (ICDCS)}, Singapore, Nov. 2020.

\bibitem{zheng2020design}
S.~Zheng, C.~Shen, and X.~Chen, ``Design and analysis of uplink and downlink communications for federated learning,'' \emph{IEEE J. Sel. Areas Commun.}, vol.~39, no.~7, pp. 2150--2167, Jul. 2020.

\bibitem{qu2022feddq}
L.~Qu, S.~Song, and C.-Y. Tsui, ``{FedDQ: Communication-efficient federated learning with descending quantization},'' in \emph{Proc. IEEE Global Commun. Conf. (Globecom)}, Rio de Janeiro, Brazil, Dec. 2022.

\bibitem{reisizadeh2020fedpaq}
A.~Reisizadeh, A.~Mokhtari, H.~Hassani, A.~Jadbabaie, and R.~Pedarsani, ``{FedPAQ: A communication-efficient federated learning method with periodic averaging and quantization},'' in \emph{Proc. Int. Conf. Artif. Intell. Stat. (AISTATS)}, Palermo, Sicily, Italy, Jun. 2020.

\bibitem{qiu2022zerofl}
X.~Qiu, J.~Fernandez-Marques, P.~P. Gusmao, Y.~Gao, T.~Parcollet, and N.~D. Lane, ``Zerofl: Efficient on-device training for federated learning with local sparsity,'' \emph{arXiv:2208.02507}, 2022.

\bibitem{kim2023green}
M.~Kim, W.~Saad, M.~Mozaffari, and M.~Debbah, ``Green, quantized federated learning over wireless networks: An energy-efficient design,'' \emph{IEEE Trans. Wireless Commun.}, vol.~23, no.~2, pp. 1386--1402, Feb. 2024.

\bibitem{tong2022nine}
W.~Tong and G.~Y. Li, ``{Nine challenges in artificial intelligence and wireless communications for 6G},'' \emph{IEEE Wireless Commun.}, vol.~29, no.~4, pp. 140--145, Aug. 2022.

\bibitem{krizhevsky2012imagenet}
A.~Krizhevsky, I.~Sutskever, and G.~E. Hinton, ``Imagenet classification with deep convolutional neural networks,'' in \emph{Proc. Adv. Neural Inf. Process. Syst. (NeurIPS)}, Lake Tahoe, Nevada, USA, Dec. 2012.

\bibitem{aichip}
S.~Choi, J.~Sim, M.~Kang, Y.~Choi, H.~Kim, and L.-S. Kim, ``An energy-efficient deep convolutional neural network training accelerator for in situ personalization on smart devices,'' \emph{IEEE J. Solid-State Circuits}, vol.~55, no.~10, pp. 2691--2702, Oct. 2020.

\bibitem{trx}
X.~Huang, H.~Jia, S.~Dong, W.~Deng, Z.~Wang, and B.~Chi, ``{A 24-30GHz 4-element phased array transceiver with low insertion loss compact T/R switch and bidirectional phase shifter in 65 nm CMOS technology},'' in \emph{Proc. IEEE Asian Solid-State Circuits Conf. (ASSCC)}, Busan, Korea, Nov. 2021.

\bibitem{ldpc}
W.~Tang, C.-H. Chen, and Z.~Zhang, ``{A 2.4-mm$^2$ 130-mW MMSE-nonbinary LDPC iterative detector decoder for 4$\times$4 256-QAM MIMO in 65-nm CMOS},'' \emph{IEEE J. Solid-State Circuits}, vol.~54, no.~7, pp. 2070--2080, Jul. 2019.

\bibitem{koike2015iteration}
T.~Koike-Akino, D.~S. Millar, K.~Kojima, K.~Parsons, Y.~Miyata, K.~Sugihara, and W.~Matsumoto, ``{Iteration-aware LDPC code design for low-power optical communications},'' \emph{IEEE J. Light. Techn.}, vol.~34, no.~2, pp. 573--581, Jan. 2016.

\bibitem{nam2021early}
K.~Nam-Il and K.~Jin-Up, ``{Early termination scheme for 5G NR LDPC code},'' in \emph{Proc. Int. Conf. Inf. Commun. Techn. Converg. (ICTC)}, Jeju Island, Korea, Oct. 2021.

\bibitem{ang2020robust}
F.~Ang, L.~Chen, N.~Zhao, Y.~Chen, W.~Wang, and F.~R. Yu, ``Robust federated learning with noisy communication,'' \emph{IEEE Trans. Commun.}, vol.~68, no.~6, pp. 3452--3464, Jun. 2020.

\bibitem{amiri2021convergence}
M.~M. Amiri, D.~G{\"u}nd{\"u}z, S.~R. Kulkarni, and H.~V. Poor, ``Convergence of federated learning over a noisy downlink,'' \emph{IEEE Trans. Wireless Commun.}, vol.~21, no.~3, pp. 1422--1437, Mar. 2022.

\bibitem{amiri2020federated}
M.~M. Amiri and D.~G{\"u}nd{\"u}z, ``Federated learning over wireless fading channels,'' \emph{IEEE Trans. Wireless Commun.}, vol.~19, no.~5, pp. 3546--3557, May. 2020.

\bibitem{wei2022federated}
X.~Wei and C.~Shen, ``Federated learning over noisy channels: Convergence analysis and design examples,'' \emph{IEEE Trans. Cogn. Commun. Netw.}, vol.~8, no.~2, pp. 1253--1268, Jun. 2022.

\bibitem{sery2021over}
T.~Sery, N.~Shlezinger, K.~Cohen, and Y.~C. Eldar, ``Over-the-air federated learning from heterogeneous data,'' \emph{IEEE Trans. Signal Process.}, vol.~69, pp. 3796--3811, Jun. 2021.

\bibitem{sun2023channel}
Y.~Sun, Z.~Lin, Y.~Mao, S.~Jin, and J.~Zhang, ``Channel and gradient-importance aware device scheduling for over-the-air federated learning,'' \emph{IEEE Trans. Wireless Commun.}, to appear.

\bibitem{yang2021achieving}
H.~Yang, M.~Fang, and J.~Liu, ``Achieving linear speedup with partial worker participation in non-iid federated learning,'' in \emph{Proc. Int. Conf. Learn. Represent. (ICLR)}, Virtual Event, May. 2021.

\bibitem{morrow1989bit}
R.~K. Morrow and J.~S. Lehnert, ``Bit-to-bit error dependence in slotted ds/ssma packet systems with random signature sequences,'' \emph{IEEE Trans. Commun.}, vol.~37, no.~10, pp. 1052--1061, Oct. 1989.

\bibitem{you2023broadband}
L.~You, X.~Zhao, R.~Cao, Y.~Shao, and L.~Fu, ``Broadband digital over-the-air computation for wireless federated edge learning,'' \emph{IEEE Trans. Mobile Comput.}, pp. 1--16, 2023.

\bibitem{khandekar2001complexity}
A.~Khandekar and R.~McEliece, ``On the complexity of reliable communication on the erasure channel,'' in \emph{Proc. IEEE Int. Symp. Info. Theory (ISIT)}, Washington, D.C., USA, Jun. 2001.

\bibitem{mackay1999good}
D.~J. MacKay, ``Good error-correcting codes based on very sparse matrices,'' \emph{IEEE Trans. Inf. Theory}, vol.~45, no.~2, pp. 399--431, Mar. 1999.

\bibitem{lu2004performance}
B.~Lu, G.~Yue, and X.~Wang, ``{Performance analysis and design optimization of LDPC-coded MIMO OFDM systems},'' \emph{IEEE Trans. Signal Process.}, vol.~52, no.~2, pp. 348--361, Feb. 2004.

\bibitem{sha2008multi}
J.~Sha, Z.~Wang, M.~Gao, and L.~Li, ``{Multi-Gb/s LDPC code design and implementation},'' \emph{IEEE Trans. Very Large Scale Integr. (VLSI) Syst.}, vol.~17, no.~2, pp. 262--268, Feb. 2009.

\bibitem{lecun1998mnist}
Y.~LeCun, ``{The MNIST database of handwritten digits},'' \emph{[Online]. Available: http://yann. lecun. com/exdb/mnist/}, 1998.

\bibitem{xiao2017fashion}
H.~Xiao, K.~Rasul, and R.~Vollgraf, ``{Fashion-MNIST: A novel image dataset for benchmarking machine learning algorithms},'' \emph{arXiv:1708.07747}, 2017.

\bibitem{krizhevsky2010cifar}
A.~Krizhevsky, V.~Nair, and G.~Hinton, ``{Cifar-10 (Canadian institute for advanced research)},'' \emph{[Online]. Available: http://www. cs. toronto. edu/kriz/cifar. html}, vol.~5, no.~4, p.~1, 2010.

\bibitem{li2020federated}
T.~Li, A.~K. Sahu, M.~Zaheer, M.~Sanjabi, A.~Talwalkar, and V.~Smith, ``Federated optimization in heterogeneous networks,'' \emph{Proc. Mach. Learn. Res.}, vol.~2, pp. 429--450, 2020.

\bibitem{han2015deep}
S.~Han, H.~Mao, and W.~J. Dally, ``Deep compression: Compressing deep neural networks with pruning, trained quantization and huffman coding,'' in \emph{Proc. Int. Conf. Learn. Represent. (ICLR)}, San Juan, Puerto Rico, May. 2016.

\bibitem{graves2011practical}
A.~Graves, ``Practical variational inference for neural networks,'' in \emph{Proc. Adv. Neural Inf. Process. Syst. (NeurIPS)}, Granada, Spain, Dec. 2011.

\bibitem{hochreiter1994simplifying}
S.~Hochreiter and J.~Schmidhuber, ``Simplifying neural nets by discovering flat minima,'' in \emph{Proc. Adv. Neural Inf. Process. Syst. (NeurIPS)}, Denver, CO, USA, Nov. 1994.

\bibitem{neelakantan2015adding}
A.~Neelakantan, L.~Vilnis, Q.~V. Le, I.~Sutskever, L.~Kaiser, K.~Kurach, and J.~Martens, ``Adding gradient noise improves learning for very deep networks,'' \emph{arXiv:1511.06807}, 2015.

\end{thebibliography}
\bibliographystyle{IEEEtran}

\end{document}